\newcommand\COMP{\hbox{C\kern -.58em {\raise .54ex \hbox{$\scriptscriptstyle |$}}
\kern-.55em {\raise .53ex \hbox{$\scriptscriptstyle |$}} }}
\newcommand\NN{\hbox{I\kern-.2em\hbox{N}}}
\newcommand\RR{\hbox{I\kern-.2em\hbox{R}}}
\newcommand\sRR{{\it \hbox{I\kern-.2em\hbox{R}}}}
\newcommand\QQ{\hbox{I\kern-.53em\hbox{Q}}}
\newcommand\PP{\hbox{I\kern-.53em\hbox{P}}}
\newcommand\EE{\hbox{I\kern-.53em\hbox{E}}}
\newcommand\ZZ{{{\rm Z}\kern-.28em{\rm Z}}}
\newcommand\be{\begin{equation}}
\newcommand\ee{\end{equation}}
\newtheorem{theorem}{Theorem}[section]
\newtheorem{proposition}[theorem]{Proposition}
\newtheorem{remark}[theorem]{Remark}
\newtheorem{example}[theorem]{Example}
\newtheorem{lemma}[theorem]{Lemma}
\newtheorem{definition}[theorem]{Definitions}
\newtheorem{corollary}[theorem]{Corollary}
\newcommand\beq{\begin{eqnarray}}
\newcommand\eeq{\end{eqnarray}}
\newcommand\bq{\begin{eqnarray*}}
\newcommand\eq{\end{eqnarray*}}
\def \Lbrack {[\![}
\def \Rbrack {]\!]}
\def \wid{\widetilde}
\def \cf{\cal F}
\def\mb{\mathbb}
\def\mbf{\mathbb F}
\def\mbg{\mathbb G}
\def \cf{{\cal F}}
\def \cg{{\cal G}}
\begin{document}
\title{Non-arbitrage  for Informational Discrete Time Market Models}

\author{Tahir
Choulli  \footnote{corresponding to: tchoulli@ualberta.ca,  Mathematical and Statistical Sciences Depart.,
University of Alberta, Edmonton, Canada } \  and     Jun Deng
\footnote{School of Banking and Finance, University of International Business and Economics, Beijing, China}
}

\maketitle
\begin{abstract}
This paper focuses on the stability of the  non-arbitrage condition in discrete time market models when some unknown information $\tau$ is partially/fully incorporated into the market. Our main conclusions are twofold. On the one hand, for a fixed market $S$, we prove that the non-arbitrage condition is preserved under a mild condition. On the other hand, we give the necessary and sufficient equivalent conditions on the unknown information $\tau$ to ensure the validity of the non-arbitrage condition for any market. Two concrete examples are presented to illustrate the importance of these conditions, where we calculate explicitly the arbitrage opportunities when they exist.
\end{abstract}





\section{Introduction}
In this paper, we pertain our attention to discrete time market models, where we consider a real-valued stochastic process $S=(S_n)_{0\leq n \leq N}$ that is indexed by the finite discrete time $\{0,1,...,N\}$. The process $S$ usually represents the  risky assets. \\

First, let us specify the definitions and notations. We suppose given a stochastic basis $(\Omega, \mathbb{A}, \mathbb{F}:= ({\cal F}_n)_{0\leq n \leq N}, \mathbb{P})$ and the process  $S=(S_n)_{0\leq n \leq N}$ is adapted to the filtration $\mathbb{F}$.   We say a process $X$ satisfies the non-arbitrage condition under the filtration $\mathbb{H}:= ({\cal H}_n)_{0\leq n \leq N}$ (hereafter, NA$(\mathbb{H})$) if
\begin{eqnarray}
 &&\mbox{for any predictable process  }  H:=(H_n)_{0\leq n \leq N}, \  (i.e. \  H_n \in {\cal H}_{n-1})\  \mbox{such that } \nonumber\\
 &&\sum_{1\leq n \leq N} H_n \Delta X_n \geq 0, \ \mbox{we have }  \sum_{1\leq n \leq N} H_n \Delta X_n \equiv 0, \ \mathbb{P}-a.s.
\end{eqnarray}
The process $H$ can be interpreted as the trading strategies that one holds dynamically through time. Loosely speaking, the non-arbitrage condition means there is no possibility that  one can   make profit out of nothing.   The equivalence between the non-arbitrage condition and equivalent martingale measure is essentially due to the work of Dalang, Morton and Willinger \cite{dalangmorton}, see also different approaches Schachermayer \cite{schachermayer92} and Rogers \cite{rogers94}.
\begin{theorem}[Dalang-Morton-Willinger]
 The process $X$ satisfies the non-arbitrage condition if and only  there exists an equivalent martingale measure. In this case, the equivalent martingale measure $\mathbb{Q}$ can be chosen with uniformly bounded density $d\mathbb{Q}/ d\mathbb{P}$.
\end{theorem}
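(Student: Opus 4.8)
The plan is to reduce the statement, in both directions, to a one-period assertion. The key structural fact is that $\mathrm{NA}(\mathbb F)$ is equivalent to the conditional no-arbitrage property at each date $n$: for every $\mathcal F_{n-1}$-measurable $h$, $h\,\Delta X_n\ge 0$ $\mathbb P$-a.s. forces $h\,\Delta X_n=0$ $\mathbb P$-a.s. One direction is immediate (embed a one-period arbitrage by trading only at date $n$); for the other, given a predictable $H$ with $G:=\sum_{n=1}^N H_n\Delta X_n\ge 0$ and $\mathbb P(G>0)>0$, one selects the last date $k$ at which the running gains $G^H_{k-1}:=\sum_{j<k}H_j\Delta X_j$ are strictly negative with positive probability -- or, if $G^H_n\ge 0$ a.s. for every $n$, the first date $k$ at which $G^H_k$ is strictly positive with positive probability -- and checks that $h:=H_k\ind_{\{G^H_{k-1}<0\}}$ (resp.\ $h:=H_k$) is a one-period arbitrage at date $k$. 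The martingale measure is then built as $d\mathbb Q/d\mathbb P:=\prod_{n=1}^N Z_n$, where $Z_n$ is an $\mathcal F_n$-measurable density with $\mathbb E[Z_n\mid\mathcal F_{n-1}]=1$ and $\mathbb E[Z_n\Delta X_n\mid\mathcal F_{n-1}]=0$; if each $Z_n$ is bounded, then so is the product. For the implication ``$\mathbb Q$ exists $\Rightarrow \mathrm{NA}(\mathbb F)$'' only the trivial one-period direction is needed: given such a $\mathbb Q$, $X$ is a $\mathbb Q$-martingale, so for any $\mathcal F_{n-1}$-measurable $h$ with $h\Delta X_n\ge 0$, truncating $h$ at level $k$ and letting $k\to\infty$ by monotone convergence yields $\mathbb E_{\mathbb Q}[h\,\Delta X_n]=0$, hence $h\Delta X_n=0$ a.s.

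For the hard direction I would first pass to an equivalent probability with bounded density under which every $\Delta X_n$ is integrable -- the standard device of multiplying conditionally renormalized factors such as $e^{-|\Delta X_n|}\big/\mathbb E[e^{-|\Delta X_n|}\mid\mathcal F_{n-1}]$ -- so that the one-period problem can be posed in $L^1$. Fixing $n$ and writing $\mathcal G:=\mathcal F_{n-1}$, $R:=\Delta X_n$, I would consider the convex cone $C:=\big(\{hR:\ h\in L^0(\mathcal G)\}-L^0_+\big)\cap L^1$. The conditional no-arbitrage property gives $C\cap L^1_+=\{0\}$, and the crux of the proof is that $C$ is closed in $L^1$. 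Granting this, Hahn--Banach separates $C$ from any fixed element of $L^1_+\setminus\{0\}$ and produces a nonnegative $Z\in L^\infty$ with $\mathbb E[Zg]\le 0$ for every $g\in C$; taking $g=\pm hR$ with $h$ bounded and $\mathcal G$-measurable gives $\mathbb E[ZR\mid\mathcal G]=0$. An exhaustion argument gluing countably many such $Z$'s upgrades the density to a strictly positive one, a conditional renormalization then produces $Z_n$ with $\mathbb E[Z_n\mid\mathcal G]=1$, and since the separating functional lives in $(L^1)^\ast=L^\infty$ the density $Z_n$ is bounded -- which delivers the uniformly bounded $d\mathbb Q/d\mathbb P$ asserted in the statement.

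The step I expect to be the main obstacle is the closedness of $C$ in $L^1$ (equivalently of $\{hR:h\in L^0(\mathcal G)\}-L^0_+$ in $L^0$): this is the one place where no-arbitrage is used in an essential way. One takes $g_m=h_mR-u_m\to g$ in $L^1$, extracts almost surely convergent subsequences, and must tame the possibly unbounded multipliers $h_m$; the standard route is to restrict to the event where $R$ takes both signs with positive conditional probability -- where no-arbitrage holds in a quantitative form -- renormalize $h_m$ there, and use no-arbitrage to exclude the degenerate limiting directions. This is, in spirit, the closedness argument underlying Schachermayer \cite{schachermayer92} (and the elementary treatment of Kabanov and Stricker), whereas Dalang, Morton and Willinger \cite{dalangmorton} instead solve the one-period problem $\omega$ by $\omega$ via a measurable selection in the regular conditional distribution of $\Delta X_n$ given $\mathcal F_{n-1}$. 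The remaining technical care lies in the $\mathcal G$-measurable dependence on $\omega$ of all the objects constructed, handled in the separation approach by exhausting $\Omega$ with the supports of countably many separating densities.
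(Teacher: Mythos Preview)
The paper does not prove this theorem at all: it is stated in the Introduction as a known result and attributed to Dalang, Morton and Willinger \cite{dalangmorton}, with Schachermayer \cite{schachermayer92} and Rogers \cite{rogers94} cited as alternative approaches. There is therefore no ``paper's own proof'' to compare your proposal against.

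That said, your sketch is a faithful outline of the Schachermayer/Kabanov--Stricker route: reduction to one period, passage to an equivalent probability under which $\Delta X_n\in L^1$, the separation argument in $L^1$ with the key step being the $L^1$-closedness of the cone $C$, and exhaustion to obtain a strictly positive bounded density. You also correctly flag that the original Dalang--Morton--Willinger argument proceeds instead by measurable selection in the regular conditional distribution of $\Delta X_n$ given $\mathcal F_{n-1}$. One small point: your reduction of multi-period NA to one-period NA is slightly informal (the ``last date at which the running gains are strictly negative'' argument needs a bit more care to produce a genuine one-period arbitrage), but this is standard and not a real gap.
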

 It   was baptized as The Fundamental Theorem of Asset Pricing. In this paper, we consider two economic agents with different information levels, one with the public available information $\mathbb{F}$ and an insider with some extra information beside $\mathbb{F}$.  Our goal  is to  study whether the insider with the extra information (characterized as a random time  $\tau$  in what follows) could make arbitrages.  The extra information $\tau$ could be the occurrence time of a default event, the knowledge that only insiders could get, and the last passage time of a process, etc.  For  continuous time settings, we refer to the recent works of Aksamit et al. \cite{cdm2013}, Acciaio et al. \cite{afk2014},  Choulli et al. \cite{cadj20141}, Coculescu et al. \cite{jeanblanc2012},  Fontana et al. \cite{fjs} and Song \cite{song2014}.\\

We begin with two examples that illustrate how the interplay of the random time $\tau$ and the market $S$ could affect  the non-arbitrage condition.
\begin{example}\label{example:1}
On the stochastic basis    $\left( {\Omega}, \mathbb{A}, \mathbb{F}:=({\cal F}_n)_{0\leq n \leq 2}, \mathbb{P} \right)$, we consider a two period discrete  model $S:=(S_n)_{0\leq n\leq 2}$, where $\Omega =\{\omega_1, \omega_2, \omega_3, \omega_4\}$ represents the uncertainties and the natural filtration  $\mathbb{F}:=({\cal F}_n)_{0\leq n \leq 2}$ is given by
\begin{eqnarray*}
{\cal F}_0 &=& \{\emptyset, \Omega\}, \ {\cal F}_1 = \{ \emptyset, \Omega, \{\omega_1, \omega_2\}, \{\omega_3, \omega_4\}\}, \mbox{and } {\cal F}_2 = \sigma(\{\emptyset, \Omega, \{\omega_1 \}, \{\omega_2 \}, \{\omega_3 \}, \{\omega_4 \} \}).
\end{eqnarray*}
Let $u$ and $d$ be two   constants such that $u>1$ and $0<d<1$.  Assume that
\begin{eqnarray*}
 S_1(\{\omega_1, \omega_2\}) &=& u S_0, \ \ S_1(\{\omega_3, \omega_4\}) = d S_0, \\
 S_2(\{\omega_1 \}) &=& u^2 S_0,\  S_2(\{\omega_2 \}) = u d S_0, \ S_2(\{\omega_3 \}) = u d S_0, S_2(\{\omega_4 \}) = d^2 S_0.
\end{eqnarray*}
The probability that the stock price will increase (or decrease) is $p$ (or $q=1-p$). We assume that the risk-free interest rate is zero and  $p u + (1-p)d=1$, i.e. $S$ is an $\mathbb{F}$-martingale under the physical probability $$\mathbb{P} = (\mathbb{P}(\omega_1), \mathbb{P}(\omega_2), \mathbb{P}(\omega_3),\mathbb{P}(\omega_4)) = (p^2, pq, pq, q^2).$$
The evolution of the stock price $S$ through time is illustrated as
\begin{center}
\begin{tikzpicture}\label{stock1}
  node at (0,0) [below, black] {$S\rightarrow  S_0$};
   \draw [thin, gray, ->] (0,0.1) -- (1,1)      
        node [right, black] {$ u S_0 $};
         \draw [thin, gray, ->] (1.7,1.1) -- (2.5,1.8)      
        node [right, black] {$ u^2 S_0, \ \ \omega_1  $};
        \draw [thin, gray, ->] (1.7,1.1) -- (2.5,0.6)      
        node [right, black] {$ u d S_0, \ \ \omega_2  $};
        \draw [thin, gray, ->] (0,-0.1) -- (1,-1)      
        node [right, black] {$ d S_0 $};
        \draw [thin, gray, ->] (1.8,-1) -- (2.5,-0.3)      
        node [right, black] {$ u d S_0, \ \ \omega_3  $};
        \draw [thin, gray, ->] (1.8,-1) -- (2.5,-1.5)      
        node [right, black] {$ d^2 S_0, \ \ \omega_4  $};
\end{tikzpicture}
\end{center}
Consider the random time
\begin{equation}
 \tau = \left\{
 \begin{array}{cc}
  1, & \mbox{ on } \  \{\omega_3\}\\
  2, & \mbox{ otherwise}.
 \end{array}
 \right.
\end{equation}
Apparently, $\tau$ is not an $\mathbb{F}$-stopping time since $\{\tau =1\} \notin {\cal F}_1$. A straightforward calculation shows    the stopped market $S^\tau:=(S_{n\wedge \tau})_{0\leq n\leq 2}$ is given by
\begin{eqnarray*}
 S_0^\tau &=& S_0, \ \ S_1^\tau(\{\omega_1, \omega_2\}) = u S_0, \ \ S_1^\tau(\{\omega_3, \omega_4\}) = d S_0, \\
 S_2^\tau(\{\omega_1 \}) &=& u^2 S_0,\  S_2^\tau(\{\omega_2 \}) = u d S_0, \ S_2^\tau(\{\omega_3 \}) = d S_0, S_2^\tau(\{\omega_4 \}) = d^2 S_0.
\end{eqnarray*}
The evolution of the stock price $S^\tau$ through time is illustrated as
\begin{center}
\begin{tikzpicture}
  node at (0,0) [below, black] {$S^\tau \rightarrow S_0$};
   \draw [thin, gray, ->] (0,0.1) -- (1,1)      
        node [right, black] {$ u S_0 $};
         \draw [thin, gray, ->] (1.7,1.1) -- (2.5,1.8)      
        node [right, black] {$ u^2 S_0, \ \ \omega_1  $};
        \draw [thin, gray, ->] (1.7,1.1) -- (2.5,0.6)      
        node [right, black] {$ u d S_0, \ \ \omega_2  $};
        \draw [thin, gray, ->] (0,-0.1) -- (1,-1)      
        node [right, black] {$ d S_0 $};
        \draw [thin, gray, ->] (1.8,-1) -- (2.5,-0.3)      
        node [right, black] {$  d S_0, \ \ \  \ \omega_3 $};
        \draw [thin, gray, ->] (1.8,-1) -- (2.5,-1.5)      
        node [right, black] {$ d^2 S_0, \ \ \omega_4 $};
\end{tikzpicture}
\end{center}
Then, one could easily show that there exist  arbitrage opportunities in the market $S^\tau$. Indeed, a short selling on the scenarioes $\{\omega_3, \omega_4\}$ at time 1 would generate a sure profit.
\end{example}

\begin{example}\label{example:2}
We assume the same settings as Example \ref{example:1} and suppose  that
\begin{eqnarray*}
 S_1(\{\omega_1, \omega_2\}) &=& u S_0, \ \ S_1(\{\omega_3, \omega_4\}) = d S_0, \\
 S_2(\{\omega_1 \}) &=& u^2 S_0,\  S_2(\{\omega_2 \}) = u d S_0, \ S_2(\{\omega_3 \}) = u d S_0, S_2(\{\omega_4 \}) = d S_0.
\end{eqnarray*}
Set the physical probability $\mathbb{P}$ as
 $$\mathbb{P} = (\mathbb{P}(\omega_1), \mathbb{P}(\omega_2), \mathbb{P}(\omega_3),\mathbb{P}(\omega_4)) = \left(\frac{(1-d)^2}{(u-d)^2}, \frac{(u-1)(1-d)}{(u-d)^2}, \lambda \frac{u-1}{u-d}, (1-\lambda)\frac{u-1}{u-d}\right),$$
  where $0<\lambda<1$. Then, it is easy to see that  $S$ is an $\mathbb{F}$-martingale under $\mathbb{P}$ and is given by
\begin{center}
\begin{tikzpicture}
  node at (0,0) [below, black] {$S\rightarrow S_0$};
   \draw [thin, gray, ->] (0,0.1) -- (1,1)      
        node [right, black] {$ u S_0 $};
         \draw [thin, gray, ->] (1.7,1.1) -- (2.5,1.8)      
        node [right, black] {$ u^2 S_0, \ \ \omega_1 $};
        \draw [thin, gray, ->] (1.7,1.1) -- (2.5,0.6)      
        node [right, black] {$ u d S_0, \ \ \omega_2 $};
        \draw [thin, gray, ->] (0,-0.1) -- (1,-1)      
        node [right, black] {$ d S_0 $};
        \draw [thin, gray, ->] (1.8,-1) -- (2.5,-0.3)      
        node [right, black] {$ d S_0, \ \ \omega_3  $};
        \draw [thin, gray, ->] (1.8,-1) -- (2.5,-1.5)      
        node [right, black] {$ d S_0, \ \ \omega_4  $};
\end{tikzpicture}
\end{center}
 Consider the random time
 \begin{equation}
   \tau_1 = \left\{
 \begin{array}{cc}
  1, & \mbox{ on } \  \{\omega_3\}\\
  2, & \mbox{ otherwise}.
 \end{array}
 \right.
 \end{equation}
One can easily show that $S^{\tau_1} = S$ since $\tau_1$ has no impact on $S$ on the scenarios $\{\omega_3,  \omega_4\}$. Therefore,  there is no arbitrage  opportunity  in $S^{\tau_1}$.
\end{example}

Motivated by these two examples, we are intending to find the necessary and sufficient conditions on $\tau$ or/and $S$ such that the market $S^\tau$ or $S-S^\tau$ still satisfies the non-arbitrage condition. We will come back to these two examples in the last section to explore why the non-arbitrage condition fails in  Example \ref{example:1} and holds in  Example \ref{example:2}.\\

The paper is organized as follows. Section \ref{sec:prelim} recalls some notations and definitions related to random time and progressive enlargement of filtration. In Section \ref{sec:beforetau}, we prove that the non-arbitrage condition is preserved for a fixed $\mathbb{F}$-martingale $S$ under some mild equivalent conditions on the stochastic interval $\Lbrack 0,\tau \Rbrack$; while in Section \ref{sec:aftertau} we aim at the non-arbitrage condition on the stochastic interval $\Rbrack \tau,+\infty \Lbrack$.  In the last section, we present two  examples to illustrate the importance of the conditions in  Section \ref{sec:beforetau} and  \ref{sec:aftertau}. Furthermore,  we construct explicitly the arbitrage opportunities when they exist.

\section{Preliminary}\label{sec:prelim}
On  a stochastic basis $(\Omega, \mathbb{A}, \mathbb{F}:= ({\cal F}_n)_{0\leq n \leq N}, \mathbb{P})$, we assume given an $\mathbb{F}$-adapted process  $S=(S_n)_{0\leq n \leq N}$ that represents the risky asset price and one risky-free asset that is assumed to be constant 1. In the market, we consider two economic agents, one with the public information $\mathbb{F}$  and an insider with the extra information $\tau$ and $\mathbb{F}$. These constitute the public information market $(\mathbb{F}, S)$ and the insider information market $(\mathbb{F}, S, \tau)$.  \\

 \noindent We start by recalling some notations and definitions related to the random time $\tau: \Omega \rightarrow \mathbb{Z}^+$ that would be fixed throughout this paper.  For any random time $\tau$, we associate the following two Az\'{e}ma supermartingales
\beq\label{eq:crucialZandZtilde}
Z_n:=P[\tau >n | \cf_n] \mbox{ and } \widetilde{Z}_n:= P[\tau \geq n | \cf_n],
\eeq
and the $\mbf$-stopping times
\beq\label{eq:crucialstoppingtime}
R_1:=\inf\{n\geq 0:  Z_n = 0\}, \ \ R_2:=\inf\{n\geq 1:  Z_{n-1} = 0\} \ \mbox{ and } R_3:=\inf\{n\geq 0:  \widetilde{Z}_n = 0\}.
\eeq
To incorporate the information from the random time $\tau$, we enlarge the filtration $\mbf$ by $\mb G= ({\cal G}_n)_{0\leq n\leq N}$
\beq
{\cal G}_n := {\cal F}_n\vee \sigma(\tau \leq n).
\eeq
In the literature,  $\mathbb{G}$ is called the progressive enlargement filtration that  is the smallest one that contains $\mathbb F$ and makes $\tau$  a stopping time. The insider information market is precisely characterized by $(\mathbb{G},S,\tau)$.
\begin{lemma}
  For any random time $\tau$ and the   stopping times in (\ref{eq:crucialstoppingtime}), the following   hold.\\
  \noindent $\rm (a)$ For all $n$,   $\{Z_{n-1} = 0\} \subset \{\widetilde{Z}_n = 0\} \subset \{Z_n = 0\}$.\\
  \noindent $\rm (b)$   $   R_1 \leq R_3\leq R_2=R_1 +1.$\\
 \noindent $\rm (c)$ On $\{n\leq \tau\}$,  $Z_{n-1}$ and $\wid{Z}_n$ are both positive. Consequently, $\tau \leq R_1$.
\end{lemma}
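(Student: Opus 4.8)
The plan is to reduce all three parts to two elementary facts about $Z$ and $\widetilde Z$. The first is the pointwise inequality $0 \le Z_n \le \widetilde Z_n$, which holds because $\{\tau > n\} \subseteq \{\tau \ge n\}$. The second is the identity $Z_{n-1} = P[\tau > n-1 \mid \mathcal{F}_{n-1}] = P[\tau \ge n \mid \mathcal{F}_{n-1}]$, which together with the tower property and $\widetilde Z_n = P[\tau \ge n \mid \mathcal{F}_n]$ gives $E[\widetilde Z_n \mid \mathcal{F}_{n-1}] = Z_{n-1}$. Throughout, set inclusions are understood modulo $\mathbb{P}$-null sets, and I use that in our finite discrete time setting each infimum defining $R_1, R_2, R_3$ is attained on the event where it is finite, with the convention $\inf \emptyset = +\infty$.

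For part (a), the inclusion $\{\widetilde Z_n = 0\} \subseteq \{Z_n = 0\}$ is immediate from $0 \le Z_n \le \widetilde Z_n$. For $\{Z_{n-1} = 0\} \subseteq \{\widetilde Z_n = 0\}$, I would integrate over the $\mathcal{F}_{n-1}$-measurable set $\{Z_{n-1} = 0\}$: since $\widetilde Z_n \ge 0$ and $E[\widetilde Z_n \mathbf{1}_{\{Z_{n-1}=0\}}] = E[Z_{n-1}\mathbf{1}_{\{Z_{n-1}=0\}}] = 0$, necessarily $\widetilde Z_n = 0$ a.s. on $\{Z_{n-1}=0\}$. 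Composing the two inclusions incidentally recovers the absorption at $0$ of the supermartingale $Z$.

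For part (b), the equality $R_2 = R_1 + 1$ is pure bookkeeping: the change of index $m = n-1$ turns $\{n \ge 1 : Z_{n-1} = 0\}$ into $\{m + 1 : m \ge 0,\ Z_m = 0\}$, whose infimum is $1 + R_1$, the identity persisting (with the $\inf\emptyset$ convention) when $Z$ never vanishes. For $R_1 \le R_3$: on the event $\{R_3 = n\}$ with $n$ finite one has $\widetilde Z_n = 0$, hence $Z_n = 0$ by (a), hence $R_1 \le n = R_3$; on the complementary event $R_3$ is not finite and the inequality is trivial. Symmetrically, on $\{R_2 = n\}$ with $n$ finite one has $n \ge 1$ and $Z_{n-1} = 0$, hence $\widetilde Z_n = 0$ by (a), hence $R_3 \le n = R_2$.

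For part (c), fix $n$ and set $A := \{\widetilde Z_n = 0\} \in \mathcal{F}_n$; then $P(\{\tau \ge n\} \cap A) = E[\mathbf{1}_A\, P[\tau \ge n \mid \mathcal{F}_n]] = E[\widetilde Z_n \mathbf{1}_A] = 0$, so $\widetilde Z_n > 0$ a.s. on $\{n \le \tau\}$; running the same argument with $\mathcal{F}_{n-1}$ and $Z_{n-1} = P[\tau \ge n \mid \mathcal{F}_{n-1}]$ gives $Z_{n-1} > 0$ a.s. on $\{n \le \tau\}$. Applying the latter with $n$ replaced by $k+1$ yields $\{\tau > k\} \subseteq \{Z_k > 0\}$, i.e. $\{Z_k = 0\} \subseteq \{\tau \le k\}$, so that $\{R_1 \le n\} = \bigcup_{k \le n} \{Z_k = 0\} \subseteq \{\tau \le n\}$ for every $n$, which is exactly $\tau \le R_1$ a.s. I do not expect a genuine obstacle; the only point requiring a little care is keeping the chain in (b) consistent with the $\inf\emptyset$ convention and with the remark that each relevant infimum is attained whenever finite.
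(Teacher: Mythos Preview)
Your proposal is correct and follows essentially the same approach as the paper: both use the tower identity $E[\widetilde Z_n\mid\mathcal F_{n-1}]=Z_{n-1}$ and the pointwise inequality $Z_n\le\widetilde Z_n$ for (a), derive (b) from (a) together with an index shift, and obtain (c) by integrating the indicator of the zero set against the defining conditional probability. Your treatment is in fact a bit more careful than the paper's---you handle the $\inf\emptyset$ convention explicitly and spell out the chain $\{Z_k=0\}\subseteq\{\tau\le k\}\Rightarrow\tau\le R_1$, which the paper leaves implicit.
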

\begin{proof}
(a) Notice that
  \bq
  E\left[\widetilde{Z}_n \mathds{1}_{\{Z_{n-1} = 0\}}\right] = E\left[Z_{n-1} \mathds{1}_{\{Z_{n-1} = 0\}}\right] = 0.
  \eq
  Hence, $\{Z_{n-1}= 0 \} \subset \{\widetilde{Z}_n = 0\}$.  Due to $Z_n\leq \widetilde{Z}_n$, we have $\{\widetilde{Z}_n = 0\} \subset \{Z_n = 0\}$.\\

  (b) We observe that $\{R_2 = n\} = \{Z_{n-1} = 0\}\cap \left( \bigcup_{0\leq i\leq n-2} \{Z_i >0\} \right) = \{R_1 = n-1\}$. Therefore $R_2=R_1 +1$ and  is a predictable stopping time. The inequality $   R_1 \leq R_3 $ follows immediately from (a).\\

  (c) Notice that
  \begin{eqnarray*}
   E \left[\ \mathds{I}_{\{n\leq \tau\}} \mathds{I}_{\{Z_{n-1} = 0\}} \right] &=& E \left[ Z_{n-1} \mathds{I}_{\{Z_{n-1} = 0\}} \right]  = 0,  \ \mbox{and}\\
   E \left[\ \mathds{I}_{\{n\leq \tau\}} \mathds{I}_{\{\widetilde{Z}_{n} = 0\}} \right] &=& E \left[ \widetilde{Z}_{n}\mathds{I}_{\{\widetilde{Z}_{n} = 0\}} \right]  = 0.
  \end{eqnarray*}
  Therefore, $Z_{n-1}$ and $\wid{Z}_n$ are strictly positive on the set $\{n\leq \tau\}$ and $\tau\leq R_1$.
 \end{proof}

 \begin{remark}
  It was proved in Dellacherie and Meyer \cite{dm2}  these three sets $\{Z_{-} = 0\}$, $\{Z= 0\}$ and $\widetilde{Z} = 0$  have the same d\'ebut   in continuous time setting that discrete time does not share.
 \end{remark}

\begin{lemma}\label{lem:decompositionofZ}
  The Az\'{e}ma supermartingale $(Z_n)_{0\leq n\leq N}$ has the following decomposition.
  \beq
  Z_n = m_n - A_n,\ \  m_n := P[\tau >n | \cf_n] + \sum_{0\leq k\leq n}P[\tau =k | \cf_k], \ A_n:=\sum_{0\leq k\leq n}P[\tau =k | \cf_k],
  \eeq
  where $(m_n)_{0\leq n\leq N}$ is an $\mbf$-martingale and $(A_n)_{0\leq n\leq N}$ is an $\mbf$-adapted increasing process.
\end{lemma}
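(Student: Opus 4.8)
The identity $Z_n = m_n - A_n$ holds by definition, so the real content of the lemma is the two structural claims, and the plan is simply to verify them directly. There is no integrability issue: every term that appears is a conditional probability, hence bounded by $1$, and $E[A_N] = \sum_{0\le k\le N} P[\tau = k] \le 1$. For the process $A$, I would observe that each summand $P[\tau = k \,|\, \cf_k]$ is $\cf_k$-measurable and nonnegative (because $\tau$ is $\mathbb{Z}^+$-valued), so $A_n = \sum_{0\le k\le n} P[\tau = k \,|\, \cf_k]$ is $\cf_n$-measurable with $A_n - A_{n-1} = P[\tau = n\,|\,\cf_n] \ge 0$; this settles that $A$ is $\mbf$-adapted and increasing.

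For the martingale property of $m$, I would compute the one-step increment. For $n \ge 1$, since $m_n = Z_n + A_n$, one has $m_n - m_{n-1} = (Z_n - Z_{n-1}) + P[\tau = n \,|\, \cf_n]$. Taking $\cf_{n-1}$-conditional expectations and using the tower property, $E[Z_n\,|\,\cf_{n-1}] = P[\tau > n \,|\, \cf_{n-1}]$ and $E\bigl[P[\tau = n\,|\,\cf_n]\,\big|\,\cf_{n-1}\bigr] = P[\tau = n \,|\, \cf_{n-1}]$, whence
\[
E[m_n - m_{n-1}\,|\,\cf_{n-1}] = P[\tau > n \,|\, \cf_{n-1}] + P[\tau = n \,|\, \cf_{n-1}] - P[\tau > n-1 \,|\, \cf_{n-1}].
\]
The key observation — and the only place integer-valuedness of $\tau$ enters — is that $\{\tau > n\} \cup \{\tau = n\} = \{\tau \ge n\} = \{\tau > n-1\}$, so the right-hand side is $0$; hence $m$ is an $\mbf$-martingale. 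The case $n=0$ needs nothing beyond the defining identity, since $m_0 - A_0 = P[\tau>0\,|\,\cf_0] = Z_0$.

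I do not expect any genuine obstacle here: this is exactly the (discrete-time Doob) decomposition of the supermartingale $Z$ into its martingale part and its compensator, and the computation above is routine. The single point worth flagging is the cancellation in the displayed line, which relies on $\tau$ taking values in $\mathbb{Z}^+$ so that $\{\tau \ge n\}$ and $\{\tau > n-1\}$ coincide; this is also what makes the compensator $A$ so explicit in the discrete-time setting, in contrast to the continuous-time case alluded to in the preceding remark.
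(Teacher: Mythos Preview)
Your proof is correct and follows the same direct verification as the paper: both compute the one-step conditional expectation of $m$ and use the tower property together with $\{\tau > n\}\cup\{\tau = n\} = \{\tau > n-1\}$. One small correction to your closing commentary: this is \emph{not} the Doob decomposition, since $A_n$ is only $\cf_n$-adapted rather than $\cf_{n-1}$-measurable (predictable) --- the paper explicitly remarks on this immediately after the lemma.
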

\begin{proof}
  It is enough to prove $(m_n)_{0\leq n \leq N}$ is an $\mbf$-martingale. To this end, we calculate that
  \beq
  E\Big[m_{n+1} \Big| \cf_n \Big] &=& P\Big[\tau >{n+1} \Big| \cf_n\Big] + \sum_{0\leq k\leq n+1}E\Big[ P[\tau =k | \cf_k]\Big| \cf_n\Big] \nonumber \\
  &=& P\Big[\tau >{n+1} \Big| \cf_n\Big]+ \sum_{0\leq k\leq n}P\Big[\tau =k | \cf_k\Big] +P\Big[\tau = n+1 \Big| \cf_n\Big]  \nonumber \\
  &=& P[\tau >n | \cf_n] + \sum_{0\leq k\leq n}P[\tau =k | \cf_k] = m_n. \nonumber
  \eeq
  This ends the proof of the lemma.
 \end{proof}
 \begin{remark}
 In general, the decomposition $Z = m - A$ is not the Doob-Meyer decomposition.
 \end{remark}

The following lemma describes the connection between conditional expectations under $\mbf$ and $\mbg$. For its proof, we consult Jeulin \cite{Jeu}.
\begin{lemma}\label{lem:OPunderGF}
Let $Y$ be an integrable and $\mb A$-measurable  random variable. Then, the following   hold.\\
$\rm (a)$ On the set $\{n<\tau\}$, the conditional expectation of $Y$ under ${\cal G}_n$ is given by
\beq
    E\left[Y|{\cal G}_n\right]\mathds{1}_{\{\tau > n\}} = E\left[Y \mathds{1}_{\{\tau > n\}} |{\cal F}_n\right] \frac{1}{Z_n}\mathds{1}_{\{\tau > n\}}.
    \eeq
$\rm (b)$ On the set $\{n\leq \tau\}$, the conditional expectation of $Y$ under ${\cal G}_{n-1}$ is given by
\beq
    E\left[Y|{\cal G}_{n-1}\right]\mathds{1}_{\{\tau \geq n\}} = E\left[Y \mathds{1}_{\{\tau \geq n\}} |{\cal F}_{n-1}\right] \frac{1}{Z_{n-1}}\mathds{1}_{\{\tau \geq n\}}.
    \eeq
Moreover, if $Y$ is ${\cal F}_n$-measurable, we have
\beq\label{eq:OPunderGF}
E\left[Y|{\cal G}_{n-1}\right]\mathds{1}_{\{\tau \geq n\}} &=& E\left[Y \widetilde{Z}_n |{\cal F}_{n-1}\right] \frac{1}{Z_{n-1}}\mathds{1}_{\{\tau \geq n\}}.
\eeq
\end{lemma}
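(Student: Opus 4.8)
The plan is to prove (a) directly from the defining property of conditional expectation, then to read off the first identity in (b) as the instance of (a) with the index $n$ replaced by $n-1$, and finally to obtain the ``Moreover'' formula by one more conditioning step.

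For (a), I would introduce the candidate random variable $\xi := \mathds{1}_{\{\tau>n\}}\,\frac{1}{Z_n}\,E\big[Y\mathds{1}_{\{\tau>n\}}\,\big|\,\mathcal{F}_n\big]$, with the convention that $1/Z_n$ is used only on $\{Z_n>0\}$. Since $E[\mathds{1}_{\{\tau>n\}}|\mathcal{F}_n]=Z_n$, the same elementary computation used in the first lemma of Section~\ref{sec:prelim} gives $E\big[\mathds{1}_{\{\tau>n\}}\mathds{1}_{\{Z_n=0\}}\big]=E\big[Z_n\mathds{1}_{\{Z_n=0\}}\big]=0$, so up to a null set $\{\tau>n\}\subset\{Z_n>0\}$; hence $\xi$ is well defined, it is $\mathcal{G}_n$-measurable (an $\mathcal{F}_n$-measurable factor multiplied by $\mathds{1}_{\{\tau>n\}}$, and $\{\tau>n\}\in\mathcal{G}_n$), and the same computation applied to $|Y|$ shows $\xi$ is integrable. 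Since $\mathds{1}_{\{\tau>n\}}E[Y|\mathcal{G}_n]$ is also integrable and $\mathcal{G}_n$-measurable, it then suffices to check that $E[\xi\mathds{1}_G]=E[Y\mathds{1}_{\{\tau>n\}}\mathds{1}_G]$ for every $G\in\mathcal{G}_n$.

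The step I expect to carry the real content is the structural description of $\mathcal{G}_n$: for every $G\in\mathcal{G}_n$ there is an $F\in\mathcal{F}_n$ with $G\cap\{\tau>n\}=F\cap\{\tau>n\}$. I would prove this by noting that $\mathcal{G}_n$ is generated by the sets $F'\cap\{\tau=j\}$ (with $F'\in\mathcal{F}_n$, $j\le n$) together with the sets $F'\cap\{\tau>n\}$, and then checking that the collection of those $G$ admitting such a representation is a $\sigma$-algebra; as it contains every generator it must be all of $\mathcal{G}_n$. Granting this, both sides of the identity to be verified vanish on $\{\tau\le n\}$ (the left because of the $\mathds{1}_{\{\tau>n\}}$ built into $\xi$, the right obviously), so one may take $G=F\cap\{\tau>n\}$ with $F\in\mathcal{F}_n$, and then conditioning on $\mathcal{F}_n$ inside the expectation and using $E[\mathds{1}_{\{\tau>n\}}|\mathcal{F}_n]=Z_n$ gives
\[
E[\xi\mathds{1}_G]=E\Big[\mathds{1}_F\,\frac{1}{Z_n}\,Z_n\,E\big[Y\mathds{1}_{\{\tau>n\}}\big|\mathcal{F}_n\big]\Big]=E\big[\mathds{1}_F\,E[Y\mathds{1}_{\{\tau>n\}}|\mathcal{F}_n]\big]=E[Y\mathds{1}_{\{\tau>n\}}\mathds{1}_G],
\]
the cancellation $\frac{1}{Z_n}Z_n=1$ being harmless because $E[Y\mathds{1}_{\{\tau>n\}}|\mathcal{F}_n]$ vanishes a.s.\ on $\{Z_n=0\}$ (indeed $Y\mathds{1}_{\{\tau>n\}}\mathds{1}_{\{Z_n=0\}}=0$ a.s.).

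Finally, for (b) I would note that $\{\tau\ge n\}=\{\tau>n-1\}\in\mathcal{G}_{n-1}$ and $Z_{n-1}=P[\tau>n-1|\mathcal{F}_{n-1}]=P[\tau\ge n|\mathcal{F}_{n-1}]$, so the first displayed identity of (b) is exactly (a) at index $n-1$. For the ``Moreover'' part, when $Y$ is $\mathcal{F}_n$-measurable the tower property gives $E\big[Y\mathds{1}_{\{\tau\ge n\}}\big|\mathcal{F}_{n-1}\big]=E\big[Y\,E[\mathds{1}_{\{\tau\ge n\}}|\mathcal{F}_n]\,\big|\,\mathcal{F}_{n-1}\big]=E\big[Y\widetilde{Z}_n\big|\mathcal{F}_{n-1}\big]$ by the definition $\widetilde{Z}_n=P[\tau\ge n|\mathcal{F}_n]$; substituting this into the identity just established yields the stated formula. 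The only mildly technical issues along the way — the a.s.\ positivity of $Z_n$, resp.\ $Z_{n-1}$, on the conditioning set, and the integrability of $\xi$ — are both disposed of by the elementary estimates above together with the first lemma of Section~\ref{sec:prelim}.
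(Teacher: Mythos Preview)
Your argument is correct and self-contained. Note, however, that the paper does not give its own proof of this lemma at all: it simply refers the reader to Jeulin \cite{Jeu}. So there is no ``paper's proof'' to compare against beyond that citation. What you have written is precisely the standard verification one finds (in continuous time) in Jeulin's monograph, specialised to the discrete setting: identify the candidate $\xi$, use the trace property $G\cap\{\tau>n\}=F\cap\{\tau>n\}$ for some $F\in\mathcal{F}_n$ to reduce the test to $\mathcal{F}_n$-sets, and cancel $Z_n$ against $E[\mathds{1}_{\{\tau>n\}}\mid\mathcal{F}_n]$. The reduction of (b) to (a) via $\{\tau\ge n\}=\{\tau>n-1\}$ and the tower-property derivation of the ``Moreover'' clause are exactly right. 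In short, your proof supplies what the paper delegates to the reference.
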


\section{Non-arbitrage on $\Lbrack 0, \tau \Rbrack$}\label{sec:beforetau}
In this section, we will prove that the non-arbitrage condition is preserved under one mild condition when the market is stopped at  random horizon $\tau$. Furthermore, we gave the necessary and sufficient conditions (on $\tau$ or the stopping times in (\ref{eq:crucialstoppingtime})) to guarantee the stability of the non-arbitrage condition for  any  market $S^\tau$. \\

\noindent The following theorem characterizes the relationship between $\mathbb F$-martingales and $\mathbb G$-martingales. For the continuous time settings, we consult Jeulin \cite{Jeu}.
\begin{theorem}\label{theo:GFmartau}
  Let $M$ be an $\mbf$-martingale and $\tau$ be a random time. Then the following process
  \beq\label{eq:gmartingalebeforetau}
  \widehat{M}^{(b)}_n := M_{n\wedge \tau} - \sum_{1\leq k\leq n} \frac{1}{Z_{k-1}}\mathds{1}_{\{\tau \geq k\}} E\left[\Delta M_k  \widetilde{Z}_k | \cf_{k-1}\right],
  \eeq
  is a $\mbg$-martingale.
\end{theorem}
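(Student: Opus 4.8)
The plan is to verify directly that $\widehat{M}^{(b)}$ is $\mathbb{G}$-adapted and that $E[\Delta \widehat{M}^{(b)}_n \mid {\cal G}_{n-1}] = 0$ for every $n$. Adaptedness is immediate: $M_{n\wedge\tau}$ is ${\cal G}_n$-measurable since $\tau$ is a $\mathbb{G}$-stopping time and $M$ is $\mathbb{F}$-adapted, while the sum in \eqref{eq:gmartingalebeforetau} is ${\cal G}_{n-1}$-measurable because each summand involves $\mathds{1}_{\{\tau\geq k\}}$ (which is ${\cal G}_{k-1}$-measurable, as $\{\tau \geq k\} = \{\tau \leq k-1\}^c \in {\cal G}_{k-1}$) multiplied by ${\cal F}_{k-1}$-measurable factors. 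Integrability follows once we have the martingale identity, or can be checked directly using boundedness of $1/Z_{k-1}$ on $\{\tau\geq k\}$ in the relevant conditional sense. So the whole content is the centering of the increments.

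\medskip

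\noindent\textbf{Key computation.} Writing $\Delta(M_{n\wedge\tau})_n = \Delta M_n \,\mathds{1}_{\{\tau \geq n\}}$ (the stopped process only moves on $\{n \leq \tau\}$), the increment is
\[
\Delta \widehat{M}^{(b)}_n = \Delta M_n\,\mathds{1}_{\{\tau \geq n\}} - \frac{1}{Z_{n-1}}\mathds{1}_{\{\tau\geq n\}}\, E\!\left[\Delta M_n\,\widetilde{Z}_n \mid {\cal F}_{n-1}\right].
\]
I would condition on ${\cal G}_{n-1}$ and treat the two terms separately. For the second term, everything except nothing is already ${\cal G}_{n-1}$-measurable (it is a product of $\mathds{1}_{\{\tau\geq n\}}$, the ${\cal F}_{n-1}$-measurable $1/Z_{n-1}$, and an ${\cal F}_{n-1}$-conditional expectation), so it passes through $E[\cdot\mid{\cal G}_{n-1}]$ unchanged. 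For the first term I would apply Lemma \ref{lem:OPunderGF}(b), in its refined form \eqref{eq:OPunderGF}, to the ${\cal F}_n$-measurable random variable $Y = \Delta M_n$ — note $\Delta M_n = M_n - M_{n-1}$ is ${\cal F}_n$-measurable — which gives
\[
E\!\left[\Delta M_n\,\mathds{1}_{\{\tau\geq n\}} \mid {\cal G}_{n-1}\right] = \mathds{1}_{\{\tau\geq n\}}\, E\!\left[\Delta M_n \mid {\cal G}_{n-1}\right]\mathds{1}_{\{\tau\geq n\}} = \mathds{1}_{\{\tau\geq n\}}\,\frac{1}{Z_{n-1}}\,E\!\left[\Delta M_n\,\widetilde{Z}_n \mid {\cal F}_{n-1}\right].
\]
The two terms then cancel exactly, giving $E[\Delta \widehat{M}^{(b)}_n \mid {\cal G}_{n-1}] = 0$.

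\medskip

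\noindent\textbf{Main obstacle.} The only delicate points are bookkeeping ones: first, correctly identifying that the jump of the stopped process $M_{n\wedge\tau}$ at time $n$ is $\Delta M_n\,\mathds{1}_{\{\tau\geq n\}}$ rather than $\mathds{1}_{\{\tau > n-1\}}$ or $\mathds{1}_{\{\tau \geq n\}}$ up to a null-set subtlety — here Lemma 2.1(c) guarantees $Z_{n-1}>0$ on $\{\tau\geq n\}$, so the division by $Z_{n-1}$ is legitimate on exactly the set where the indicator is nonzero, and the expressions are well-defined (with the convention $0/0=0$ off that set). Second, one must make sure the measurability claim "$\mathds{1}_{\{\tau\geq n\}}$ is ${\cal G}_{n-1}$-measurable" is used only in contexts where it is valid; it is, since $\{\tau\geq n\}=\Omega\setminus\{\tau\leq n-1\}$. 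No part of this requires the Doob–Meyer refinement of the decomposition in Lemma \ref{lem:decompositionofZ}; the plain identity \eqref{eq:OPunderGF} and Lemma 2.1 suffice. Integrability of $\widehat{M}^{(b)}_n$ is inherited from integrability of $M_{n\wedge\tau}$ (bounded by $\max_{k\le n}|M_k|\in L^1$ on a finite horizon) together with the fact that the compensator sum equals, term by term, a conditional expectation of something integrable.
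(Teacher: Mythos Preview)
Your argument is correct and follows the same route as the paper: compute the increment $\Delta\widehat{M}^{(b)}_n$, apply Lemma~\ref{lem:OPunderGF}(b) in the form \eqref{eq:OPunderGF} to $Y=\Delta M_n$ to evaluate $E[\Delta M_n\,\mathds{1}_{\{\tau\ge n\}}\mid{\cal G}_{n-1}]$, and observe the exact cancellation with the compensator term. Your treatment is in fact more careful than the paper's, which omits the adaptedness, integrability, and $Z_{n-1}>0$ checks you spell out; the only cosmetic blemish is the redundant second factor $\mathds{1}_{\{\tau\ge n\}}$ in your displayed chain of equalities.
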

\begin{proof}
  Although it can be derived from Jeulin \cite{Jeu}, we opt to give a direct proof here.  To this end, we calculate
  \bq
  E\Big[\Delta \widehat{M}^{(b)}_{n} \Big| \cg_n\Big] &=& E\Big[ \Delta M_{n} I_{\{n\leq \tau\}} -  \frac{1}{Z_{n-1}}\mathds{1}_{\{n\leq \tau\}} E\left[\Delta M_n  \widetilde{Z}_n | \cf_{n-1}\right] \Big| \cg_{n-1}\Big]\nonumber \\
  &=&  E\left[\Delta M_n \widetilde{Z}_n |{\cal F}_{n-1}\right]  \frac{1}{Z_{n-1}}\mathds{1}_{\{n\leq \tau\}}-  \frac{1}{Z_{n-1}}\mathds{1}_{\{n\leq \tau\}} E\left[\Delta M_n  \widetilde{Z}_n | \cf_{n-1}\right] =0,
  \eq
  where in the above  second equality we use the fact (due to Lemma \ref{lem:OPunderGF}-(b)) that
  \beq
  E\Big[  \Delta M_n \mathds{1}_{\{\tau \geq n\}} \big\vert \cg_{n-1}\Big] = \frac{1}{Z_{n-1}}\mathds{1}_{\{\tau \geq n\}}E\left[\Delta M_n \widetilde{Z}_{n} \big\vert \cf_{n-1}\right]. \nonumber
  \eeq
  This ends the proof of theorem.
 \end{proof}
In the following proposition, we construct a $\mathbb G$-martingale that would    serve as the martingale density for a class of $\mathbb{G}$-semi-martingales.
\begin{proposition}\label{prop:density}
The following process
 \beq\label{eq:mardensity}
  \widehat{N}^{(b)}_n:= -\sum_{1\leq k\leq n} \mathds{1}_{\{\tau \geq k\}}E[\mathds{1}_{\{\widetilde{Z}_{k} >0\}} | {\cal F}_{k-1}] + \sum_{1\leq k\leq n} \frac{Z_{k-1}}{\widetilde{Z}_k}\mathds{1}_{\{\tau \geq k\}}
  \eeq
  is a $\mbg$-martingale such that $1 + \Delta \widehat{N}^{(b)}_n >0$ for all $n\geq 1$.
\end{proposition}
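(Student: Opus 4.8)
The plan is to verify the two assertions separately: first that $1+\Delta\widehat{N}^{(b)}_n>0$ identically, and then that $\widehat{N}^{(b)}$ is a $\mathbb{G}$-martingale. For the positivity, I compute
\[
\Delta\widehat{N}^{(b)}_n = \mathds{1}_{\{\tau\geq n\}}\left(\frac{Z_{n-1}}{\widetilde{Z}_n} - E[\mathds{1}_{\{\widetilde{Z}_n>0\}}\,|\,{\cal F}_{n-1}]\right).
\]
On $\{\tau\geq n\}$ we have $Z_{n-1}>0$ and $\widetilde{Z}_n>0$ by Lemma~1.3(c), so the ratio $Z_{n-1}/\widetilde{Z}_n$ is a well-defined strictly positive quantity there (one should interpret it as $0$ off $\{\widetilde{Z}_n>0\}$, which is harmless since it is multiplied by $\mathds{1}_{\{\tau\geq n\}}$). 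Hence on $\{\tau\geq n\}$,
\[
1+\Delta\widehat{N}^{(b)}_n = 1 - E[\mathds{1}_{\{\widetilde{Z}_n>0\}}\,|\,{\cal F}_{n-1}] + \frac{Z_{n-1}}{\widetilde{Z}_n} = E[\mathds{1}_{\{\widetilde{Z}_n=0\}}\,|\,{\cal F}_{n-1}] + \frac{Z_{n-1}}{\widetilde{Z}_n},
\]
a sum of a nonnegative term and a strictly positive term, hence strictly positive; off $\{\tau\geq n\}$ the increment vanishes so $1+\Delta\widehat{N}^{(b)}_n=1>0$. This settles the positivity.

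For the martingale property, it suffices to show $E[\Delta\widehat{N}^{(b)}_n\,|\,{\cal G}_{n-1}]=0$ for each $n\geq 1$. The two summands defining $\Delta\widehat{N}^{(b)}_n$ both carry the factor $\mathds{1}_{\{\tau\geq n\}}$, and each of $E[\mathds{1}_{\{\widetilde{Z}_n>0\}}\,|\,{\cal F}_{n-1}]$ and $Z_{n-1}$ is ${\cal F}_{n-1}$-measurable. I handle the terms using Lemma~\ref{lem:OPunderGF}(b). The first term is already ${\cal F}_{n-1}$-measurable times $\mathds{1}_{\{\tau\geq n\}}$, so
\[
E\Big[\mathds{1}_{\{\tau\geq n\}}E[\mathds{1}_{\{\widetilde{Z}_n>0\}}\,|\,{\cal F}_{n-1}]\,\Big|\,{\cal G}_{n-1}\Big] = E[\mathds{1}_{\{\widetilde{Z}_n>0\}}\,|\,{\cal F}_{n-1}]\cdot\frac{1}{Z_{n-1}}E[\mathds{1}_{\{\tau\geq n\}}\widetilde{Z}_n\cdot\! \text{(nothing)}]\ldots
\]
more precisely, by the ${\cal F}_n$-measurable case (\ref{eq:OPunderGF}) applied to $Y=\mathds{1}_{\{\widetilde{Z}_n>0\}}/\widetilde{Z}_n\cdot Z_{n-1}$ — wait, cleaner: apply (\ref{eq:OPunderGF}) directly. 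Write $\Delta\widehat{N}^{(b)}_n = \mathds{1}_{\{\tau\geq n\}}\,Y_n$ is \emph{not} of that form because the $E[\cdots|{\cal F}_{n-1}]$ piece is not ${\cal F}_n$-measurable-then-projected; instead I treat it as the $\mathds{1}_{\{\tau\geq n\}}\times({\cal F}_{n-1}\text{-measurable})$ case, for which Lemma~\ref{lem:OPunderGF}(b) with $Y$ equal to that ${\cal F}_{n-1}$-measurable variable gives $E[\mathds{1}_{\{\tau\geq n\}}Y\,|\,{\cal G}_{n-1}] = Y\cdot\frac{1}{Z_{n-1}}E[\mathds{1}_{\{\tau\geq n\}}\,|\,{\cal F}_{n-1}]\mathds{1}_{\{\tau\geq n\}} = Y\,\frac{\widetilde{Z}_n}{Z_{n-1}}\cdots$ — since $E[\mathds{1}_{\{\tau\geq n\}}|{\cal F}_{n-1}]$ is not $\widetilde{Z}_n$ but $E[\widetilde{Z}_n|{\cal F}_{n-1}]$; so I will instead use the general form of Lemma~\ref{lem:OPunderGF}(b) applied to $Y\mathds{1}_{\{\tau\geq n\}}$ and pull out $Y$. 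Combining both terms:
\[
E\big[\Delta\widehat{N}^{(b)}_n\,\big|\,{\cal G}_{n-1}\big]\mathds{1}_{\{\tau\geq n\}} = \frac{\mathds{1}_{\{\tau\geq n\}}}{Z_{n-1}}\,E\!\left[\widetilde{Z}_n\Big(\frac{Z_{n-1}}{\widetilde{Z}_n} - E[\mathds{1}_{\{\widetilde{Z}_n>0\}}\,|\,{\cal F}_{n-1}]\Big)\,\Big|\,{\cal F}_{n-1}\right].
\]
Now $\widetilde{Z}_n\cdot(Z_{n-1}/\widetilde{Z}_n) = Z_{n-1}\mathds{1}_{\{\widetilde{Z}_n>0\}}$, so the inner conditional expectation is $Z_{n-1}E[\mathds{1}_{\{\widetilde{Z}_n>0\}}\,|\,{\cal F}_{n-1}] - E[\widetilde{Z}_n\,|\,{\cal F}_{n-1}]\,E[\mathds{1}_{\{\widetilde{Z}_n>0\}}\,|\,{\cal F}_{n-1}]$, which equals $\big(Z_{n-1}-E[\widetilde{Z}_n\,|\,{\cal F}_{n-1}]\big)E[\mathds{1}_{\{\widetilde{Z}_n>0\}}\,|\,{\cal F}_{n-1}]$. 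Finally I invoke the martingale identity $E[\widetilde{Z}_n\,|\,{\cal F}_{n-1}] = E[P(\tau\geq n\,|\,{\cal F}_n)\,|\,{\cal F}_{n-1}] = P(\tau\geq n\,|\,{\cal F}_{n-1}) = P(\tau>n-1\,|\,{\cal F}_{n-1}) = Z_{n-1}$, so the bracket vanishes and $E[\Delta\widehat{N}^{(b)}_n\,|\,{\cal G}_{n-1}] = 0$ on $\{\tau\geq n\}$; off that set the increment is zero, so it holds everywhere.

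The main obstacle, and the step deserving the most care in the writeup, is the bookkeeping with the indicators $\mathds{1}_{\{\widetilde{Z}_n>0\}}$: the expression $Z_{n-1}/\widetilde{Z}_n$ must be read with the convention that it is multiplied (inside the sum) by $\mathds{1}_{\{\tau\geq k\}}$, which forces $\widetilde{Z}_k>0$, so the quotient is legitimate; and the cancellation at the end rests precisely on $\widetilde{Z}_n/\widetilde{Z}_n = \mathds{1}_{\{\widetilde{Z}_n>0\}}$ rather than $1$, which is what produces the extra factor $E[\mathds{1}_{\{\widetilde{Z}_n>0\}}\,|\,{\cal F}_{n-1}]$ that then gets annihilated by the martingale identity $E[\widetilde{Z}_n|{\cal F}_{n-1}]=Z_{n-1}$. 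Everything else is a routine application of Lemma~\ref{lem:OPunderGF}(b) and the tower property.
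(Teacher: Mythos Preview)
Your argument is correct in substance and uses the same tool as the paper (Lemma~\ref{lem:OPunderGF}(b)), but the paper's computation is noticeably cleaner and you omit one ingredient.

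\medskip

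\textbf{How the paper streamlines your martingale step.} You apply (\ref{eq:OPunderGF}) to the full ${\cal F}_n$-measurable variable $Y_n=\dfrac{Z_{n-1}}{\widetilde Z_n}-E[\mathds 1_{\{\widetilde Z_n>0\}}\mid{\cal F}_{n-1}]$ and then need the extra identity $E[\widetilde Z_n\mid{\cal F}_{n-1}]=Z_{n-1}$ to cancel. The paper instead observes that $\{\tau\ge n\}=\{\tau>n-1\}\in{\cal G}_{n-1}$, so the first summand $-\mathds 1_{\{\tau\ge n\}}E[\mathds 1_{\{\widetilde Z_n>0\}}\mid{\cal F}_{n-1}]$ is already ${\cal G}_{n-1}$-measurable and comes out of the conditional expectation intact. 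One then applies (\ref{eq:OPunderGF}) only to $\dfrac{Z_{n-1}}{\widetilde Z_n}\mathds 1_{\{\tau\ge n\}}$, giving directly $\mathds 1_{\{\tau\ge n\}}E[\mathds 1_{\{\widetilde Z_n>0\}}\mid{\cal F}_{n-1}]$, and the two terms cancel on sight. This avoids the detours (and the ``wait, cleaner\ldots'' passages) in your write-up; I would adopt the paper's split in your final version.

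\medskip

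\textbf{A genuine omission.} You do not verify integrability of $\widehat N^{(b)}_n$, which is part of the martingale property. The paper checks
\[
E\big[\,|\widehat N^{(b)}_n|\,\big]\ \le\ n+\sum_{1\le k\le n}E\!\left[\frac{Z_{k-1}}{\widetilde Z_k}\mathds 1_{\{\tau\ge k\}}\right]
= n+\sum_{1\le k\le n}E\!\left[Z_{k-1}\mathds 1_{\{\widetilde Z_k>0\}}\right]\ \le\ 2n,
\]
using exactly the same computation as in the martingale step (condition first on ${\cal F}_k$ to replace $\mathds 1_{\{\tau\ge k\}}$ by $\widetilde Z_k$). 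You should add this line; without it the conditional expectations you manipulate are not a priori well-defined.

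\medskip

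Your positivity argument is the same as the paper's.
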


\begin{proof}
  First, we prove that $\widehat{N}^{(b)}$ is a $\mb G$-martingale. To this end, by using Lemma \ref{lem:OPunderGF}-(b), we calculate
  \beq
  E\left[\Delta \widehat{N}^{(b)}_{n} \Big\vert {\cal G}_n\right] &=& E\left[-  \mathds{1}_{\{\tau \geq n\}}E\left[\mathds{1}_{\{\widetilde{Z}_{n} >0\}} \Big\vert {\cal F}_{n-1}\right] + \frac{Z_{n-1}}{\widetilde{Z}_n}\mathds{1}_{\{\tau \geq n\}} \Big| {\cal G}_n\right] \nonumber \\
  &=& -  \mathds{1}_{\{\tau \geq n\}}E\left[\mathds{1}_{\{\widetilde{Z}_{n} >0\}} \Big\vert {\cal F}_{n-1}\right] + E\left[\frac{Z_{n-1}}{\widetilde{Z}_{n}}\mathds{1}_{\{\tau \geq n\}} \Big| {\cal G}_n\right] \nonumber \\
   &=& -  \mathds{1}_{\{\tau \geq n\}}E\left[\mathds{1}_{\{\widetilde{Z}_{n} >0\}} \Big\vert {\cal F}_{n-1}\right] + \mathds{1}_{\{\tau \geq n\}}E\left[\mathds{1}_{\{\widetilde{Z}_{n} >0\}} \Big\vert {\cal F}_{n-1}\right] =0. \nonumber
  \eeq

  \noindent Secondly, we check the integrability of $\widehat{N}^{(b)}$. Indeed,
  \bq
  E[|\widehat{N}^{(b)}_n|] &\leq& n + \sum_{1\leq k\leq n} E\left[\frac{Z_{k-1}}{\widetilde{Z}_k} \mathds{1}_{\{\tau \geq k\}}\right] = n + \sum_{1\leq k\leq n} E\left[Z_{k-1} \mathds{1}_{\{\widetilde{Z}_k>0\}}\right] \leq 2n.
  \eq
  \noindent Finally, we show that $1 + \Delta \widehat{N}^{(b)}_n >0$. Indeed
  \bq
  1 + \Delta \widehat{N}^{(b)}_n = 1  -\mathds{1}_{\{\tau \geq n\}}E[\mathds{1}_{\{\widetilde{Z}_{n} >0\}} | {\cal F}_{n-1}] + \frac{Z_{n-1}}{\widetilde{Z}_n} \mathds{1}_{\{\tau \geq n\}} \geq  \mathds{1}_{\{\tau < n\}} + \frac{Z_{n-1}}{\widetilde{Z}_n} \mathds{1}_{\{\tau \geq n\}} >0.
  \eq
  This completes the proof of the proposition.
 \end{proof}
\begin{remark}
  In fact, $ \sum_{1\leq k\leq n} \mathds{1}_{\{\tau \geq k\}}E[\mathds{1}_{\{\widetilde{Z}_{k} >0\}} | {\cal F}_{k-1}] $ is the $\mbg$-compensator of the $\mbg$-adapted increasing process $ \sum_{1\leq k\leq n} {Z_{k-1}}/{\widetilde{Z}_k}\mathds{1}_{\{\tau \geq k\}}$.
\end{remark}

\begin{lemma}
  The stochastic exponential  ${\cal E}(N)$ of a local martingale $N$ is the  form of
  \beq
  {\cal E}(N)_n = \prod_{1\leq k\leq n} (1+\Delta N_k).
  \eeq
\end{lemma}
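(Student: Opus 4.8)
The plan is to verify directly that the candidate process $X_n := \prod_{1\leq k\leq n}(1+\Delta N_k)$ solves the difference equation that \emph{defines} the stochastic exponential, and then to invoke the (trivial) uniqueness of that recursion. Recall that, in our discrete-time finite-horizon setting, ${\cal E}(N)$ is by definition the unique process $Y=(Y_n)_{0\leq n\leq N}$ with $Y_0=1$ satisfying
\be
Y_n = 1 + \sum_{1\leq k\leq n} Y_{k-1}\Delta N_k, \qquad 0\leq n\leq N,
\ee
equivalently $\Delta Y_n = Y_{n-1}\Delta N_n$ for $n\geq 1$; since this relation expresses $Y_n$ explicitly in terms of $Y_{n-1}$, a solution exists and is unique by a one-step induction on $n$.

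First I would check the initial condition: the empty product gives $X_0 = 1$. Next, for $n\geq 1$, directly from the definition of the product, $X_n = X_{n-1}(1+\Delta N_n)$, hence $\Delta X_n = X_n - X_{n-1} = X_{n-1}\Delta N_n$. Summing these increments from $1$ to $n$ and using $X_0 = 1$ yields $X_n = 1 + \sum_{1\leq k\leq n} X_{k-1}\Delta N_k$, which is precisely the defining recursion. By uniqueness, $X = {\cal E}(N)$, giving the claimed formula.

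Finally, to justify the phrasing ``of a local martingale $N$'', I would observe that the same computation shows $\Delta {\cal E}(N)_n = {\cal E}(N)_{n-1}\Delta N_n$, where ${\cal E}(N)_{n-1}=\prod_{1\leq k\leq n-1}(1+\Delta N_k)$ is ${\cal F}_{n-1}$-measurable because $N$ is adapted; thus ${\cal E}(N)$ is the discrete stochastic integral of the predictable process ${\cal E}(N)_{-}$ against the (local) martingale $N$, and is therefore itself a (local) martingale. This is the only place the martingale hypothesis intervenes; the product identity itself is purely algebraic, so the ``main obstacle'' here is merely to be precise about the convention for the empty product and about what ``stochastic exponential'' means in the discrete-time setting — there is no genuine difficulty.
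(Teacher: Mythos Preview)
Your argument is correct and is exactly the ``straightforward calculation'' the paper alludes to: the paper's own proof is just the one-line remark that the formula follows directly from the definition of the stochastic exponential, which is precisely the recursion you wrote down and verified. There is nothing to add.
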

\begin{proof}
  It is straightforward from the calculation of the stochastic exponential.
 \end{proof}

Now, we are ready to state our first main theorem for this section.
\begin{theorem}\label{theo:maintheotau}
  Consider any random time $\tau$  and the $\mathbb F$-martingale $S$. Denote the probability measure $\mathbb{Q} \sim \mathbb{P}$ with density $D_n:={\cal E}(Y)_n$ where
  \beq\label{eq:crucialmeasureQbtau}
  \Delta Y_n := \widetilde{Z}_{n}\mathds{1}_{\{Z_{n-1}>0\}} E\left[\mathds{1}_{\{\widetilde{Z}_{n} =0\}} | {\cal F}_{n-1}\right] - Z_{n-1} \mathds{1}_{\{\widetilde{Z}_{n}=0 \}}, \ Y_0=0.
  \eeq
  Then the following are equivalent:\\
  $\rm (a)$ $S$ is an $(\mathbb F,\mathbb{Q})$-martingale;\\
  $\rm (b)$ $S$ is orthogonal to $D$ and $Y$;\\
  $\rm (c)$ ${\cal E}(\widehat{N}^{(b)})_n S_{n\wedge \tau}$ is a $\mathbb G$-martingale, where $\widehat{N}^{(b)}$ is given by (\ref{eq:mardensity}).\\
  As a consequence, all the above three equivalent conditions imply that:\\
  $\rm (d)$ $S^\tau$ satisfies NA$(\mathbb G, \mathbb{P})$ and NA$(\mathbb G, \mathbb{Q})$.
\end{theorem}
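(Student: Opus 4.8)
The plan is to establish the cyclic chain $\rm (a)\Leftrightarrow (b)\Leftrightarrow (c)$ and then derive $\rm (d)$ from $\rm (c)$. All objects live on a finite time axis, so martingale/integrability technicalities are mild and I will not dwell on them. First I would record two preliminary facts: (i) $Y$ is an $\mathbb F$-martingale (a direct computation, using $E[\widetilde Z_n\,|\,\mathcal F_{n-1}]=Z_{n-1}$ and the inclusion $\{Z_{n-1}=0\}\subset\{\widetilde Z_n=0\}$, shows $E[\Delta Y_n\,|\,\mathcal F_{n-1}]=0$), with $|\Delta Y_n|\le \widetilde Z_n+Z_{n-1}\le 2$ and $1+\Delta Y_n>0$ a.s., so $D=\mathcal E(Y)$ is a strictly positive $\mathbb F$-martingale with $D_0=1$; and (ii) $\Delta S^\tau_n=\mathds 1_{\{\tau\ge n\}}\Delta S_n$.

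For $\rm (a)\Leftrightarrow (b)$: by the Bayes rule, $S$ is an $(\mathbb F,\mathbb Q)$-martingale iff $SD$ is an $(\mathbb F,\mathbb P)$-martingale. Writing $\Delta(SD)_n=D_{n-1}\Delta S_n+S_{n-1}\Delta D_n+\Delta S_n\Delta D_n$ and using $\Delta D_n=D_{n-1}\Delta Y_n$ together with $E[\Delta S_n\,|\,\mathcal F_{n-1}]=E[\Delta Y_n\,|\,\mathcal F_{n-1}]=0$, conditioning on $\mathcal F_{n-1}$ leaves only $D_{n-1}E[\Delta S_n\Delta Y_n\,|\,\mathcal F_{n-1}]$. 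Since $D_{n-1}>0$, $SD$ is an $(\mathbb F,\mathbb P)$-martingale iff $E[\Delta S_n\Delta Y_n\,|\,\mathcal F_{n-1}]=0$ for all $n$. This same identity is exactly the statement that $S$ is orthogonal to $Y$, and, since $\Delta D_n=D_{n-1}\Delta Y_n$ with $D_{n-1}>0$, also that $S$ is orthogonal to $D$; hence $\rm (a)\Leftrightarrow (b)$.

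For $\rm (b)\Leftrightarrow (c)$: put $L:=\mathcal E(\widehat N^{(b)})$. By Theorem \ref{theo:GFmartau} applied to $M=S$, the process $S^\tau-\widehat A$ is a $\mathbb G$-martingale, where $\Delta\widehat A_n=\frac1{Z_{n-1}}\mathds 1_{\{\tau\ge n\}}E[\Delta S_n\widetilde Z_n\,|\,\mathcal F_{n-1}]$. Expanding $\Delta(LS^\tau)_n=L_{n-1}\big(\Delta S^\tau_n+S^\tau_{n-1}\Delta\widehat N^{(b)}_n+\Delta\widehat N^{(b)}_n\Delta S^\tau_n\big)$ via $\Delta L_n=L_{n-1}\Delta\widehat N^{(b)}_n$, and using that $L$ and $\widehat N^{(b)}$ are $\mathbb G$-martingales (Proposition \ref{prop:density}) while $L_{n-1}>0$, conditioning on $\mathcal G_{n-1}$ shows that $\rm (c)$ holds iff $\Delta\widehat A_n+E[\Delta\widehat N^{(b)}_n\Delta S^\tau_n\,|\,\mathcal G_{n-1}]=0$ for every $n$. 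Since $\Delta\widehat N^{(b)}_n\Delta S^\tau_n$ is supported on $\{\tau\ge n\}$, its $\mathcal G_{n-1}$-conditional expectation is computed by Lemma \ref{lem:OPunderGF}(b): one replaces $E[\,\cdot\,\mathds 1_{\{\tau\ge n\}}\,|\,\mathcal G_{n-1}]$ by $\frac1{Z_{n-1}}\mathds 1_{\{\tau\ge n\}}E[\,\cdot\,\widetilde Z_n\,|\,\mathcal F_{n-1}]$, then uses $\Delta S^\tau_n=\mathds 1_{\{\tau\ge n\}}\Delta S_n$, the explicit form \eqref{eq:mardensity} of $\Delta\widehat N^{(b)}_n$, that $\widetilde Z_n>0$ on $\{\tau\ge n\}$, and the martingale identity $E[\Delta S_n\mathds 1_{\{\widetilde Z_n>0\}}\,|\,\mathcal F_{n-1}]=-E[\Delta S_n\mathds 1_{\{\widetilde Z_n=0\}}\,|\,\mathcal F_{n-1}]$. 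After cancellation, the displayed condition collapses to: for all $n$, on $\{Z_{n-1}>0\}$,
\[
E[\Delta S_n\widetilde Z_n\,|\,\mathcal F_{n-1}]\,E[\mathds 1_{\{\widetilde Z_n=0\}}\,|\,\mathcal F_{n-1}]=Z_{n-1}\,E[\Delta S_n\mathds 1_{\{\widetilde Z_n=0\}}\,|\,\mathcal F_{n-1}].
\]
On $\{Z_{n-1}=0\}$ both sides vanish (since $\{Z_{n-1}=0\}\subset\{\widetilde Z_n=0\}$ and $S$ is an $\mathbb F$-martingale), and, in view of the form \eqref{eq:crucialmeasureQbtau} of $\Delta Y_n$ (which carries the factors $\mathds 1_{\{Z_{n-1}>0\}}$ and $Z_{n-1}$), this identity is precisely $E[\Delta S_n\Delta Y_n\,|\,\mathcal F_{n-1}]=0$, i.e. $\rm (b)$. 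This bookkeeping — keeping straight the sets $\{\widetilde Z_n=0\}$, $\{Z_{n-1}=0\}$, $\{\tau\ge n\}$ and the divisions by $Z_{n-1}$ and $\widetilde Z_n$, each justified by the first lemma of Section \ref{sec:prelim} and by Lemma \ref{lem:OPunderGF} — is the only genuinely delicate part of the argument.

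Finally, $\rm (c)\Rightarrow (d)$: by Proposition \ref{prop:density}, $\widehat N^{(b)}$ is a $\mathbb G$-martingale with $1+\Delta\widehat N^{(b)}_n>0$, so $L=\mathcal E(\widehat N^{(b)})$ is a strictly positive $\mathbb G$-(local) martingale with $L_0=1$; on the finite horizon it is a genuine $\mathbb G$-martingale, whence $\mathbb Q^{\mathbb G}:=L_N\cdot\mathbb P$ is a probability measure equivalent to $\mathbb P$ on $\mathcal G_N$. Under $\rm (c)$, $L S^\tau$ is a $(\mathbb G,\mathbb P)$-martingale, so by Bayes $S^\tau$ is a $(\mathbb G,\mathbb Q^{\mathbb G})$-martingale; that is, $\mathbb Q^{\mathbb G}$ is an equivalent martingale measure for $S^\tau$ in the filtration $\mathbb G$. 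By the Dalang--Morton--Willinger theorem, $S^\tau$ satisfies NA$(\mathbb G)$. Since the NA condition depends only on the null sets of the underlying measure and $\mathbb P\sim\mathbb Q^{\mathbb G}$, while $\mathbb Q\sim\mathbb P$ (its density $D_N$ may be regarded as a $\mathcal G_N$-density) also holds, we conclude NA$(\mathbb G,\mathbb P)$ and NA$(\mathbb G,\mathbb Q)$.
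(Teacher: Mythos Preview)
Your proof is correct and follows essentially the same route as the paper's: both arguments reduce the $\mathbb G$-martingale property of $\mathcal E(\widehat N^{(b)})S^\tau$ to the vanishing of the $\mathcal F_{n-1}$-measurable quantity $E[\Delta S_n\widetilde Z_n\,|\,\mathcal F_{n-1}]\,E[\mathds 1_{\{\widetilde Z_n=0\}}\,|\,\mathcal F_{n-1}]-Z_{n-1}E[\Delta S_n\mathds 1_{\{\widetilde Z_n=0\}}\,|\,\mathcal F_{n-1}]$ via Lemma~\ref{lem:OPunderGF}(b), and then identify this with the orthogonality $E[\Delta S_n\Delta Y_n\,|\,\mathcal F_{n-1}]=0$. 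The only organisational differences are that the paper pairs (a)$\Leftrightarrow$(c) directly (computing $E^{\mathbb Q}[\Delta S_{n+1}\,|\,\mathcal F_n]$ in one shot) whereas you go through (b)$\Leftrightarrow$(c) and invoke Theorem~\ref{theo:GFmartau} to isolate the $\mathbb G$-compensator of $S^\tau$; and that you spell out (a)$\Leftrightarrow$(b) and (c)$\Rightarrow$(d) in more detail than the paper, which treats them as immediate.
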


\begin{proof}
First, we remark that the probability measure $\mathbb{Q}$ is well defined and  equivalent to $\mathbb{P}$. Indeed, it is easy to check that $(Y_n)$ is an $\mathbb F$-martingale and $$
1 + \Delta Y_n = \widetilde{Z}_n\mathds{1}_{\{Z_{n-1}>0\}} E\left[\mathds{1}_{\{\widetilde{Z}_n=0\}} | {\cal F}_{n-1}\right] + \mathds{1}_{\{\widetilde{Z}_n>0\}} + (1-Z_{n-1})\mathds{1}_{\{\widetilde{Z}_n=0\}} > 0,
$$
where we used the fact that on the set $\{\widetilde{Z}_n>0\}$, $1 + \Delta Y_n \geq 1$ and the inclusion  $\{\widetilde{Z}_n=0\} \subset \{Z_{n-1} <1\}$, since $\{Z_{n-1} =1\}\subset \{\widetilde{Z}_n=1\} $.  Therefore, $D$ is a strictly positive martingale.\\

The equivalence between (a) and (b) is obvious. In the following, we are focusing on the proof of the equivalence between (a) and (c).  Recall that
  \beq
  \widehat{N}^{(b)}_n = -\sum_{1\leq k\leq n} \mathds{1}_{\{\tau \geq k\}}E[\mathds{1}_{\{\widetilde{Z}_{k} >0\}} | {\cal F}_{k-1}] + \sum_{1\leq k\leq n} \frac{Z_{k-1}}{\widetilde{Z}_k}\mathds{1}_{\{\tau \geq k\}}.
  \eeq
     Due to Lemma \ref{lem:OPunderGF}, we deduce that
  \beq
  E\Big[\frac{\Delta S_k }{\widetilde{Z}_k}  \mathds{1}_{\{\tau \geq k\}} | {\cal G}_{k-1}\Big] &=&  \frac{\mathds{1}_{\{\tau \geq k\}}}{Z_{k-1}}E\Big[\Delta S_k \mathds{1}_{\{\widetilde{Z}_k >0\}}  | {\cal F}_{k-1}\Big],\nonumber \\
  E\Big[ \Delta S_k  \mathds{1}_{\{\tau \geq k\}} | {\cal G}_{k-1}\Big] &=& \frac{\mathds{1}_{\{\tau \geq k\}}}{Z_{k-1}}E\Big[ \Delta S_k \widetilde{Z}_k | {\cal F}_{k-1}\Big].
  \eeq
  To this end, we calculate that
  \bq
  &&E\left[{\cal E}(\widehat{N}^{(b)})_{n+1} S_{(n+1)\wedge \tau} \Big| {\cal G}_{n+1} \right] \\
  && ={\cal E}(\widehat{N}^{(b)})_n E\Big[ (1+\Delta \widehat{N}^{(b)}_{n+1}) S_{(n+1)\wedge \tau} \Big| {\cal G}_{n} \Big] \\
  && = {\cal E}(\widehat{N}^{(b)})_n E\Big[ S_{n\wedge \tau} + \Delta S_{n+1}\mathds{1}_{\{n+1\leq \tau\}} + \Delta\widehat{N}^{(b)}_{n+1} S_{n\wedge \tau} + \Delta S_{n+1} \Delta \widehat{N}^{(b)}_{n+1} \mathds{1}_{\{n+1 \leq \tau\}}\Big| {\cal G}_{n} \Big] \\
  &&={\cal E}(\widehat{N}^{(b)})_n \left\{ S_{n\wedge \tau} + E\Big[ \Delta S_{n+1}\widetilde{Z}_{n+1} E\left[\mathds{1}_{\{\widetilde{Z}_{n+1}=0 \}}| {\cal F}_{n}\right]\Big| {\cal F}_n \Big]\frac{\mathds{1}_{\{n+1\leq \tau\}}}{Z_n}  \right\}\\
  &&\ \  - \ {\cal E}(\widehat{N}^{(b)})_n \left\{ \mathds{1}_{\{n+1\leq \tau\}} E\Big[ \Delta S_{n+1} \mathds{1}_{\{\widetilde{Z}_{n+1}=0\}} \Big| {\cal F}_n \Big]\right\}\\
  &&= {\cal E}(\widehat{N}^{(b)})_n  S_{n\wedge \tau} + {\cal E}(\widehat{N}^{(b)})_n \left\{E\left[\Delta S_{n+1}\left\{\widetilde{Z}_{n+1} E\left[\mathds{1}_{\{\widetilde{Z}_{n+1} =0\}} | {\cal F}_n\right] - Z_n \mathds{1}_{\{\widetilde{Z}_{n+1} =0\}} \right\} \Big| {\cal F}_n\right]\right\}\frac{\mathds{1}_{\{n+1\leq \tau\}}}{Z_n}\\
&& = {\cal E}(\widehat{N}^{(b)})_n  S_{n\wedge \tau} + {\cal E}(\widehat{N}^{(b)})_n E^\mathbb{Q}\left[\Delta S_{n+1} \Big| {\cal F}_n\right]\frac{\mathds{1}_{\{n+1\leq \tau\}}}{Z_n} .
\eq
Therefore, (a) implies (c). Conversely, if (c) holds, we have $$E^\mathbb{Q}\left[\Delta S_{n+1} | {\cal F}_n\right]\frac{\mathds{1}_{\{n+1\leq \tau\}}}{Z_n} = 0, \ \mbox{and }  \ E^\mathbb{Q}\left[\Delta S_{n+1} | {\cal F}_n\right]\mathds{1}_{\{Z_n>0\}} = 0.$$ Notice that $$E^\mathbb{Q}\left[\Delta S_{n+1} | {\cal F}_n\right]\mathds{1}_{\{Z_n=0\}} = 0, \mbox{ for all $n$ }.$$ Thus, we conclude that $E^\mathbb{Q}\left[\Delta S_{n+1} | {\cal F}_n\right] = 0$, for all $n$. This completes the proof of the theorem.
 \end{proof}

\begin{remark}
  We observe from Theorem  \ref{theo:maintheotau} that even though $Y$ is an $\mathbb F$-martingale, the stopped process $Y_{n\wedge \tau} =  \sum_{k\leq n}\widetilde{Z}_{k}E\left[\mathds{1}_{\{\widetilde{Z}_{k} =0\}} | {\cal F}_{k-1}\right]\mathds{1}_{\{k\leq \tau\}}$ does not satisfy NA$(\mathbb G)$ since it is a $\mathbb G$-increasing process. This also sheds some light on the importance of the conditions in   Theorem  \ref{theo:maintheotau}.
\end{remark}

\begin{remark}
  It is worthy to notice that, in general, for an $\mathbb F$-martingale $M$, if $M^\tau$ satisfies NA$(\mathbb G)$, we can not conclude $M$ is orthogonal to $Y$. To wit, let the projection of $Y$ to $m$ as
  \begin{eqnarray*}
  \Delta Y_n = H_n \Delta m_n + \Delta \overline{m}_n,
   \end{eqnarray*}
  where $H_n\in {\cal F}_{n-1}$  and $\overline{m}$ is an $\mathbb F$-martingale, orthogonal to $m$. If $Y$ is not null, $\overline{m}$ is not identical zero. By Theorem \ref{theo:GFmartau}, it is easy to see that $\overline{m}^\tau$ is a $\mathbb G$-martingale. However, $\overline{m}$ can not be orthogonal to $Y$ unless $Y$ is null.  
\end{remark}

\begin{corollary}\label{theo:cruialthereomforMuptotau}
  Let $M$ be an $\mbf$-martingale. If for all $n$,
  \beq\label{eq:crucialassumptionbeforetau}
  \{\widetilde{Z}_n =0\} = \{Z_{n-1} =0\}.
  \eeq
  Then the following properties hold:\\
  $\rm (a)$ $(M_{n\wedge \tau})_{n\geq 1}$ satisfies NA($\mbg$);\\
  $\rm (b) $ $\left({\cal E}(\widehat{N}^{(b)})_n M_{n\wedge \tau}\right)_{n\geq 1}$ is a $\mbg$-martingale, where $\widehat{N}^{(b)}$ is given by (\ref{eq:mardensity}) in Proposition \ref{prop:density};\\
  $\rm (c)$ The probability measure $\mathbb{Q}$ given in (\ref{eq:crucialmeasureQbtau}) coincides with $\mathbb{P}$, .\\
  Particularly, the above three properties hold when $Z_n>0$ for all $n\geq 0$.
\end{corollary}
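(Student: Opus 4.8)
The heart of the matter is the observation that hypothesis (\ref{eq:crucialassumptionbeforetau}) forces the $\mathbb F$-martingale $Y$ of Theorem \ref{theo:maintheotau} to vanish identically; once this is in hand, (a), (b) and (c) all fall out of Theorem \ref{theo:maintheotau}. First I would note that, since $\{\widetilde Z_n = 0\} = \{Z_{n-1}=0\}$ and the latter set is ${\cal F}_{n-1}$-measurable, we have $E[\mathds{1}_{\{\widetilde Z_n=0\}}\,|\,{\cal F}_{n-1}] = \mathds{1}_{\{Z_{n-1}=0\}}$. Substituting this into (\ref{eq:crucialmeasureQbtau}) yields
\[
\Delta Y_n = \widetilde Z_n\,\mathds{1}_{\{Z_{n-1}>0\}}\mathds{1}_{\{Z_{n-1}=0\}} - Z_{n-1}\,\mathds{1}_{\{Z_{n-1}=0\}} = 0
\]
for every $n\geq 1$, because $\mathds{1}_{\{Z_{n-1}>0\}}\mathds{1}_{\{Z_{n-1}=0\}}=0$ and $Z_{n-1}\mathds{1}_{\{Z_{n-1}=0\}}=0$. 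Hence $Y\equiv 0$, so $D_n={\cal E}(Y)_n\equiv 1$ and $\mathbb Q=\mathbb P$, which is precisely assertion (c).

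For (a) and (b), I would apply Theorem \ref{theo:maintheotau} with $M$ in the role of $S$; this is legitimate since that theorem is valid for an arbitrary $\mathbb F$-martingale and since $Y$, hence $\mathbb Q$, is insensitive to the particular martingale one starts from (only $\tau$ enters its construction). By (c), $\mathbb Q=\mathbb P$, so the $\mathbb F$-martingale $M$ is trivially an $(\mathbb F,\mathbb Q)$-martingale, i.e. condition (a) of Theorem \ref{theo:maintheotau} holds for $M$. Therefore conditions (c) and (d) of that theorem hold for $M$: the process $\left({\cal E}(\widehat N^{(b)})_n M_{n\wedge\tau}\right)_{n\geq 1}$, with $\widehat N^{(b)}$ as in (\ref{eq:mardensity}) of Proposition \ref{prop:density}, is a $\mathbb G$-martingale --- this is assertion (b) --- and $M^\tau$ satisfies NA($\mathbb G$) ( $=$ NA($\mathbb G,\mathbb P$), and also NA($\mathbb G,\mathbb Q$)) --- this is assertion (a).

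Finally, for the ``particularly'' clause, suppose $Z_n>0$ for all $n\geq 0$. Then $\{Z_{n-1}=0\}=\emptyset$ for every $n\geq 1$, and the inclusion $\{\widetilde Z_n=0\}\subset\{Z_n=0\}$ (the first lemma of Section \ref{sec:prelim}, part (a)) gives $\{\widetilde Z_n=0\}=\emptyset=\{Z_{n-1}=0\}$, so (\ref{eq:crucialassumptionbeforetau}) holds and the three properties follow from what precedes. As for difficulty, there is essentially no hard step: the only points needing a little care are the measurability argument that collapses $E[\mathds{1}_{\{\widetilde Z_n=0\}}\,|\,{\cal F}_{n-1}]$ to $\mathds{1}_{\{Z_{n-1}=0\}}$, and the bookkeeping observation that makes this corollary a genuine specialization of Theorem \ref{theo:maintheotau} rather than a statement requiring a fresh proof.
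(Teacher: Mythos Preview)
Your proposal is correct and matches the paper's intended approach: the paper states this result as a corollary of Theorem~\ref{theo:maintheotau} without an explicit proof, and your argument is precisely the natural verification that hypothesis~(\ref{eq:crucialassumptionbeforetau}) kills $Y$ (hence $\mathbb Q=\mathbb P$) so that conditions (a)--(d) of Theorem~\ref{theo:maintheotau} apply directly to $M$. The same computation appears verbatim in Step~3 of the proof of Theorem~\ref{theo:main2before}, confirming that this is exactly what the authors have in mind.
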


\noindent Below, we state our second main theorem of this section, where we give the necessary and sufficient conditions that imposed on the random time $\tau$ (or the stopping times in (\ref{eq:crucialstoppingtime})) to guarantee the stopped process $M^\tau$ satisfies NA($\mbg$) for any $\mbf$-martingale $M$.
\begin{theorem}\label{theo:main2before}
Consider a random time $\tau$ and the associated stopping times defined in (\ref{eq:crucialstoppingtime}). Then the following are equivalent:\\
$\rm (a)$ For any $\mbf$-martingale $M$, the stopped process $M^\tau$ satisfies NA$(\mbg)$.\\
$\rm (b)$   $\{\widetilde{Z}_n = 0\} =\{Z_{n-1} = 0\}$ for all $n$. \\
$\rm (c)$ $  R_1 + 1 = R_2 =R_3$.\\
$\rm (d)$ $R_3$ is an $\mathbb{F}$-predictable stopping time.\\
$\rm (e)$ The probability $\mathbb{Q}$, defined via (\ref{eq:crucialmeasureQbtau}), coincides with $\mathbb{P}$.
\end{theorem}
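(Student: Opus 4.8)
The plan is to prove Theorem \ref{theo:main2before} by establishing the cycle of implications $(b)\Rightarrow(a)\Rightarrow(\text{negation of }(b)\text{ fails})$ together with the purely measure-theoretic equivalences $(b)\Leftrightarrow(c)\Leftrightarrow(d)\Leftrightarrow(e)$, most of which are already essentially contained in the preceding results. Concretely, I would organize it as follows.

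\textbf{The easy equivalences among (b), (c), (d), (e).} The equivalence $(b)\Leftrightarrow(c)$ is immediate from Lemma~1: we always have $\{Z_{n-1}=0\}\subset\{\widetilde Z_n=0\}\subset\{Z_n=0\}$, and $R_2=R_1+1$, $R_1\le R_3\le R_2$; condition (b) says the first inclusion is an equality for every $n$, which is exactly $R_3=R_2=R_1+1$. For $(c)\Rightarrow(d)$: by part (b) of Lemma~1, $R_2=R_1+1$ is $\mathbb F$-predictable, and (c) identifies $R_3$ with $R_2$. For $(d)\Rightarrow(c)$: if $R_3$ is predictable then $R_3$ is announced, and since $R_1\le R_3\le R_1+1$ with $R_1$ a stopping time, predictability of $R_3$ forces $R_3=R_1+1=R_2$ on $\{R_1<\infty\}$ (on $\{R_1=\infty\}$ all three are $+\infty$). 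The equivalence $(b)\Leftrightarrow(e)$ is read off directly from the definition (\ref{eq:crucialmeasureQbtau}): $\Delta Y_n=\widetilde Z_n\mathds 1_{\{Z_{n-1}>0\}}E[\mathds 1_{\{\widetilde Z_n=0\}}|{\cal F}_{n-1}]-Z_{n-1}\mathds 1_{\{\widetilde Z_n=0\}}$ vanishes identically if and only if both terms vanish, i.e. $\{\widetilde Z_n=0\}\subset\{Z_{n-1}=0\}$ for all $n$ (the reverse inclusion being automatic); and $\mathbb Q=\mathbb P$ iff $D\equiv 1$ iff $Y\equiv 0$.

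\textbf{The implication (b)$\Rightarrow$(a).} This is precisely Corollary~\ref{theo:cruialthereomforMuptotau}(a): assumption (\ref{eq:crucialassumptionbeforetau}) is literally (b), and the corollary concludes $M^\tau$ satisfies NA$(\mathbb G)$ for every $\mathbb F$-martingale $M$. (Alternatively one invokes Theorem~\ref{theo:maintheotau}: under (b) one has $\mathbb Q=\mathbb P$, hence $S$ is trivially an $(\mathbb F,\mathbb Q)$-martingale, so (d) of that theorem gives NA$(\mathbb G,\mathbb P)$ for $S^\tau$.)

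\textbf{The implication (a)$\Rightarrow$(b), the real content.} I would argue by contraposition: suppose (b) fails, so there is $n_0$ with $\PP(\{\widetilde Z_{n_0}=0\}\setminus\{Z_{n_0-1}=0\})>0$, equivalently $\PP(\{\widetilde Z_{n_0}=0\}\cap\{Z_{n_0-1}>0\})>0$. The idea is to exhibit a specific $\mathbb F$-martingale $M$ whose stopped process $M^\tau$ admits a $\mathbb G$-arbitrage. The natural candidate is the martingale $Y$ from (\ref{eq:crucialmeasureQbtau}) itself — or rather its orthogonal component. By the Remark following Theorem~\ref{theo:maintheotau}, the stopped process $Y_{n\wedge\tau}=\sum_{k\le n}\widetilde Z_k E[\mathds 1_{\{\widetilde Z_k=0\}}|{\cal F}_{k-1}]\mathds 1_{\{k\le\tau\}}$ is a $\mathbb G$-adapted \emph{increasing} process, and under the negation of (b) it is not identically zero: the term at $k=n_0$ is $\widetilde Z_{n_0}E[\mathds 1_{\{\widetilde Z_{n_0}=0\}}|{\cal F}_{n_0-1}]\mathds 1_{\{n_0\le\tau\}}$, which is strictly positive on a set of positive probability (on $\{\widetilde Z_{n_0}=0\}$ it is $0$, but on $\{\widetilde Z_{n_0}>0\}\cap\{n_0\le\tau\}$ and where $E[\mathds 1_{\{\widetilde Z_{n_0}=0\}}|{\cal F}_{n_0-1}]>0$ — which happens precisely when $\PP(\{\widetilde Z_{n_0}=0\}\cap\{Z_{n_0-1}>0\})>0$ since $\{n_0\le\tau\}\supset\{Z_{n_0-1}>0\}$ up to null sets via Lemma~1(c) — it is positive). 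A non-null $\mathbb G$-adapted increasing process with $Y_0^\tau=0$ immediately violates NA$(\mathbb G)$ with trading strategy $H\equiv 1$: $\sum H_k\Delta Y^\tau_k=Y^\tau_N\ge 0$ but $\not\equiv 0$. One must check $Y^\tau$ is $\mathbb G$-predictable-integrand–admissible, i.e. that $\Delta Y^\tau_k$ is ${\cal G}_k$-measurable and the sum is well defined, which is routine since $\widetilde Z_k E[\mathds 1_{\{\widetilde Z_k=0\}}|{\cal F}_{k-1}]$ is ${\cal F}_k$-measurable and bounded and $\{k\le\tau\}\in{\cal G}_k$. This closes the contrapositive.

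\textbf{Main obstacle.} The only delicate point is the (a)$\Rightarrow$(b) direction: one must be careful that the arbitrage we build is genuine, i.e. that $Y^\tau$ is really non-null under the mere negation of (b), and that the relevant set has positive probability after intersecting with $\{n_0\le\tau\}$. The key book-keeping is the identity $\{\widetilde Z_{n_0}>0\}\cap\{n_0\le\tau\}$ has the same $\PP$-measure as $\{n_0\le\tau\}$ up to the $\PP$-null set where $\tau\ge n_0$ but $\widetilde Z_{n_0}=0$ (Lemma~1(c)), combined with $\PP(\{Z_{n_0-1}>0\})\le\PP(\{n_0\le\tau\})$-type inclusions; once this is pinned down, the positivity of the $k=n_0$ term of $Y^\tau$ on a positive-probability set is forced exactly by $\PP(\{\widetilde Z_{n_0}=0\}\cap\{Z_{n_0-1}>0\})>0$, and everything else is a direct appeal to Corollary~\ref{theo:cruialthereomforMuptotau} and Lemma~1.
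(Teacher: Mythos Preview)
Your overall architecture is sound and close to the paper's: the easy equivalences $(b)\Leftrightarrow(c)\Leftrightarrow(d)\Leftrightarrow(e)$ plus $(b)\Rightarrow(a)$ via Corollary~\ref{theo:cruialthereomforMuptotau} are handled the same way. For $(a)\Rightarrow(b)$ you take a genuinely different, and arguably more natural, route: instead of the paper's martingale $M_n=\mathds 1_{\{R_3>n\}}-\widetilde V_n$ (whose stopped version $M^\tau=1-\widetilde V^\tau$ is $\mathbb G$-predictable decreasing), you use the martingale $Y$ of (\ref{eq:crucialmeasureQbtau}) itself, relying on the Remark after Theorem~\ref{theo:maintheotau} that $Y^\tau$ is $\mathbb G$-increasing. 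Both approaches produce a monotone non-null $\mathbb G$-process under the negation of (b), hence an arbitrage; your choice has the virtue of recycling the very object that defines $\mathbb Q$.

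There is, however, a real gap in the positivity step. You write ``$\{n_0\le\tau\}\supset\{Z_{n_0-1}>0\}$ up to null sets via Lemma~1(c)'', but Lemma~1(c) gives the \emph{opposite} inclusion $\{n_0\le\tau\}\subset\{Z_{n_0-1}>0\}$, and the one you state is false in general. This matters: you need $\{\widetilde Z_{n_0}>0\}\cap\{n_0\le\tau\}\cap\{E[\mathds 1_{\{\widetilde Z_{n_0}=0\}}|{\cal F}_{n_0-1}]>0\}$ to have positive probability, and the wrong inclusion does not deliver it. The fix is to compute directly: set $B:=\{E[\mathds 1_{\{\widetilde Z_{n_0}=0\}}|{\cal F}_{n_0-1}]>0\}\cap\{Z_{n_0-1}>0\}\in{\cal F}_{n_0-1}$, which contains $\{\widetilde Z_{n_0}=0\}\cap\{Z_{n_0-1}>0\}$ and hence has positive probability; then
\[
E\Bigl[\widetilde Z_{n_0}\,E[\mathds 1_{\{\widetilde Z_{n_0}=0\}}|{\cal F}_{n_0-1}]\,\mathds 1_{\{n_0\le\tau\}}\Bigr]
=E\Bigl[\widetilde Z_{n_0}^{\,2}\,E[\mathds 1_{\{\widetilde Z_{n_0}=0\}}|{\cal F}_{n_0-1}]\Bigr]
\ge E\Bigl[\widetilde Z_{n_0}^{\,2}\,E[\mathds 1_{\{\widetilde Z_{n_0}=0\}}|{\cal F}_{n_0-1}]\,\mathds 1_B\Bigr],
\]
and on $B$ one has $E[\widetilde Z_{n_0}\,|\,{\cal F}_{n_0-1}]=Z_{n_0-1}>0$, so $\widetilde Z_{n_0}>0$ with positive conditional probability, making the right-hand side strictly positive. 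This gives $\PP(Y^\tau_N>0)>0$ as desired. Your $(d)\Rightarrow(c)$ is also a bit loose (``predictability forces $R_3=R_1+1$'' is not automatic from $R_1\le R_3\le R_1+1$); the paper's one-line argument $0=E[\widetilde Z_{R_3}]=E[Z_{R_3-1}]$ is cleaner and worth adopting.
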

\begin{proof}
The proof of the theorem would be achieved after four steps. In the first step, we prove (b)$\Leftrightarrow$(c). The second step  focuses on (c)$\Leftrightarrow$(d). The third step deals with (b)$\Leftrightarrow$(e). In the last step, we prove (a) $\Leftrightarrow$ (b).  \\

\noindent  Step 1: The equivalence between (b) and (c) is obvious. Indeed, if (b) holds, it is trivial that $R_2 = R_3$. Conversely, if (c) holds, we derive that $$E\left( Z_{n-1} I_{\{\widetilde{Z}_n = 0\}} \right) = E\left( Z_{n-1} I_{\{\widetilde{Z}_n = 0\}} I_{\{n \geq R_3\}} \right)=E\left( Z_{n-1} I_{\{\widetilde{Z}_n = 0\}} I_{\{n \geq R_2\}} \right) = 0.$$
  Hence, we conclude that $\{\widetilde{Z}_n = 0\} \subset \{Z_{n-1} = 0\}$ for all $n$. \\

  \noindent Step 2: We prove (c)$\Leftrightarrow$(d). If (c) holds, it is easy to see that  $R_3$ is an $\mathbb{F}$-predictable stopping time due to $\{R_3 = n\} = \{R_1 = n-1\} \in {\cal F}_{n-1}$. Conversely, by the predictability of $R_3$, we have  $0=E[\widetilde{Z}_{R_3}] = E[Z_{R_3 - 1}]$; hence  $Z_{R_3 - 1} = 0$ and $R_3 = R_2$.\\

   \noindent Step 3:  We prove (b)$\Leftrightarrow$(e). If (b) holds, apparently, $Y =0$ and $\mathbb{Q} =\mathbb{P}$. Conversely, if (e) holds, $\Delta Y_n  = 0$ for all $n$. Hence, $\widetilde{Z}_{n}\mathds{1}_{\{Z_{n-1}>0\}} E\left[\mathds{1}_{\{\widetilde{Z}_{n} =0\}} | {\cal F}_{n-1}\right] = Z_{n-1} \mathds{1}_{\{\widetilde{Z}_{n}=0 \}}=0$ and $\{\widetilde{Z}_{n}=0 \} = \{{Z}_{n-1}=0 \}$ for all $n$. \\

 \noindent  Step 4:  In this step, we focus on the proof of the equivalence between (a) and (b).\\
  (a)$\Rightarrow$(b). Suppose for any $\mbf$-martingale $M$, the stopped process $M^\tau$ satisfies NA($\mbg$). Consider
  \beq
  V_n := \mathds{1}_{\{R_3>n\}} \ \  \mbox{ and } \ \ \widetilde{V}_n :=\sum_{1\leq k\leq n} \left\{E[V_k | {\cal F}_{k-1}] - V_{k-1}\right\}.
  \eeq
  It is easy to see that $M_n := V_n - \widetilde{V}_n$ is an $\mbf$-martingale. Therefore $M_{n\wedge \tau} = 1 - \widetilde{V}_{n\wedge \tau}$ satisfies NA($\mbg$). Then there exists an equivalent probability $\mathbb{Q}_1\sim \mathbb{P}$ such that $\widetilde{V}_{n\wedge \tau}$ is a $(\mbg,\mathbb{Q}_1)$-martingale. Therefore $\widetilde{V}_{n\wedge \tau} \equiv 0.$ Hence, we have
  \bq
  0&=& E\left[\widetilde{V}_{n\wedge \tau}\right] = E\left[\sum_{1\leq k\leq n} Z_{k-1}\Delta \widetilde{V}_k\right] \\
  &=&\sum_{1\leq k\leq n} E\left[ Z_{k-1}\left(E\left[\mathds{1}_{\{R_3>k\}} | {\cal F}_{k-1}\right] - \mathds{1}_{\{R_3 >k-1\}}\right)\right]\\
  &=&\sum_{1\leq k\leq n} E\left[ Z_{k-1} \mathds{1}_{\{R_3>k\}} \right] - E\left[ Z_{k-1} \mathds{1}_{\{R_3 >k-1\}}\right]\\
  &=& - \sum_{1\leq k\leq n} E\left[ Z_{k-1} \mathds{1}_{\{R_3=k\}} \right] = -\sum_{1\leq k\leq n} E\left[ Z_{k-1} \mathds{1}_{\{\widetilde{Z}_k = 0\}}\prod_{1\leq i\leq k}\mathds{1}_{\{\widetilde{Z}_{i-1} > 0\}} \right]\\
  &=& -\sum_{1\leq k\leq n} E\left[ Z_{k-1} \mathds{1}_{\{Z_{k-1} >0\}}\mathds{1}_{\{\widetilde{Z}_k = 0\}}\prod_{1\leq i\leq k}\mathds{1}_{\{\widetilde{Z}_{i-1} > 0\}} \right]\\
  &=& -\sum_{1\leq k\leq n} E\left[ Z_{k-1} \mathds{1}_{\{Z_{k-1} >0\}}\mathds{1}_{\{\widetilde{Z}_k = 0\}} \right]=-\sum_{1\leq k\leq n} E\left[ Z_{k-1}\mathds{1}_{\{\widetilde{Z}_k = 0\}} \right],
  \eq
  where we used the fact that $\{Z_{k} >0\} \subset \{\widetilde{Z}_k>0\} \subset \{\widetilde{Z}_{k-1}>0\} $.   Therefore, for all $n$, $\{\widetilde{Z}_n = 0\} \subset \{Z_{n-1} = 0\}$ and $R_3 \geq R_2$.  \\

\noindent  (b)$\Rightarrow$(a) It follows immediately from Theorem \ref{theo:maintheotau} or Corollary \ref{theo:cruialthereomforMuptotau}. This ends  the proof of the theorem.
 \end{proof}
 An interesting  corollary for two period model  is
 \begin{corollary}
  Consider a two period model $(\Omega, {\cal A}={\cal F}_2, \mathbb{F}:=({\cal F}_n)_{n=0,1,2}, \mathbb{P})$ with an $\cal A$-measurable positive random time $\tau$. For any $\mbf$-martingale $M$, the   process $M^\tau$ satisfies NA$(\mbg)$ if and only if $\tau$ is an $\mathbb{F}$-stopping time.
 \end{corollary}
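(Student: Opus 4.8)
The plan is to derive this corollary from Theorem \ref{theo:main2before} by showing that, in a two period model, condition (d) there --- namely that $R_3$ is an $\mathbb{F}$-predictable stopping time --- is equivalent to $\tau$ being an $\mathbb{F}$-stopping time. The ``if'' direction is immediate: if $\tau$ is an $\mathbb{F}$-stopping time then $Z_n = \mathds{1}_{\{\tau > n\}}$ and $\widetilde{Z}_n = \mathds{1}_{\{\tau \geq n\}}$ are $\mathbb{F}$-adapted, so $\{\widetilde{Z}_n = 0\} = \{\tau \leq n-1\} = \{\tau < n\} = \{Z_{n-1} = 0\}$ for all $n$, which is exactly condition (b) of Theorem \ref{theo:main2before}; hence $M^\tau$ satisfies NA($\mathbb{G}$) for every $\mathbb{F}$-martingale $M$. (One may note this direction needs no two-period restriction.)

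For the ``only if'' direction I would argue as follows. Assume $M^\tau$ satisfies NA($\mathbb{G}$) for every $\mathbb{F}$-martingale $M$; by Theorem \ref{theo:main2before} we get $\{\widetilde{Z}_n = 0\} = \{Z_{n-1} = 0\}$ for all $n$, equivalently $R_1 + 1 = R_2 = R_3$, and $R_3$ is $\mathbb{F}$-predictable. Since $\tau$ is $\mathcal{A} = \mathcal{F}_2$-measurable and takes values in $\{1, 2\}$ (positive, and the horizon is $N = 2$, so in fact $\tau \leq 2$; here I would record that $\tau \leq R_1$ by Lemma 2.1(c), and $R_1 \leq 2$ so there is nothing to check beyond time $2$), it suffices to show $\{\tau = 1\} \in \mathcal{F}_1$, for then $\{\tau = 2\} = \{\tau = 1\}^c \in \mathcal{F}_1 \subset \mathcal{F}_2$ and $\tau$ is an $\mathbb{F}$-stopping time. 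Now $\{\tau \leq 1\} = \{\tau = 1\}$, and the failure of $\{\tau = 1\} \in \mathcal{F}_1$ is precisely the statement $Z_1 = P[\tau > 1 \mid \mathcal{F}_1] = \mathds{1}_{\{\tau > 1\}}$ fails, i.e. $Z_1$ is not $\{0,1\}$-valued, equivalently $\{\widetilde{Z}_2 = 0\} = \{\tau \leq 1\}$ (note $\widetilde{Z}_2 = P[\tau \geq 2 \mid \mathcal{F}_2] = \mathds{1}_{\{\tau = 2\}}$ since $\tau$ is $\mathcal{F}_2$-measurable and $\tau \leq 2$, so $\{\widetilde{Z}_2 = 0\} = \{\tau = 1\}$ exactly) is strictly smaller, as an event, than $\{Z_1 = 0\}$ whenever $Z_1$ takes a value in $(0,1)$. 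But condition (b) forces $\{\widetilde{Z}_2 = 0\} = \{Z_1 = 0\}$, i.e. $\{\tau = 1\} = \{Z_1 = 0\} \in \mathcal{F}_1$. Thus $\tau$ is an $\mathbb{F}$-stopping time.

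I would expect the main obstacle to be purely bookkeeping rather than anything deep: one must be careful about the value $\tau$ may take (using $\tau \geq 1$ and $\tau \leq R_1 \leq N = 2$ to reduce to controlling the single event $\{\tau = 1\}$), and one must identify $\widetilde{Z}_2$ and $Z_1$ explicitly using $\mathcal{F}_2$-measurability of $\tau$ to see that condition (b) at $n = 2$ collapses to $\{\tau = 1\} \in \mathcal{F}_1$. A cleaner phrasing, which I would probably adopt in the write-up, is: condition (d) says $R_3$ is $\mathbb{F}$-predictable; in a two-period model the only nontrivial possibility is $R_3 \in \{1, 2, \infty\}$ with $\{R_3 = 2\} = \{Z_1 = 0\} \in \mathcal{F}_1$ by predictability, and combined with $\{\widetilde{Z}_2 = 0\} = \{R_3 = 2\}$ (from $R_3 = R_2 = R_1 + 1$) and $\{\widetilde{Z}_2 = 0\} = \{\tau = 1\}$ this yields $\{\tau = 1\} \in \mathcal{F}_1$. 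Either route is short; the care needed is just in making the identification of the Az\'ema supermartingale values at the terminal time unambiguous.
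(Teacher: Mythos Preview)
Your proposal is correct and follows essentially the same route as the paper: both directions invoke Theorem \ref{theo:main2before}, and for the ``only if'' part you compute $\widetilde Z_2=\mathds 1_{\{\tau=2\}}$ from the $\mathcal F_2$-measurability of $\tau$ and then read off $\{\tau=1\}=\{\widetilde Z_2=0\}=\{Z_1=0\}\in\mathcal F_1$ from condition (b), exactly as the paper does. One small slip: in your middle paragraph you write that $\{\widetilde Z_2=0\}$ would be \emph{strictly smaller} than $\{Z_1=0\}$ when $Z_1$ takes values in $(0,1)$ --- it is the other way around (we always have $\{Z_1=0\}\subset\{\widetilde Z_2=0\}$) --- but this does not affect your argument, since you correctly conclude from condition (b) that the two sets coincide.
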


 \begin{proof}
 If $\tau$ is an $\mathbb{F}$-stopping time, it is trivial that  $M^\tau$ satisfies NA$(\mbg)$ for any $\mbf$-martingale $M$. Conversely, for the random time $\tau$, denote $\Omega_2:=\{\tau = 2\}$, $\Omega_1:=\{\tau = 1\}$ and $\Omega_1 \cup \Omega_2 = \Omega$. By the definitions of $Z$ and $\widetilde{Z}$, we derive that
  \begin{eqnarray*}
   \widetilde{Z}_0= 1, \widetilde{Z}_1 = 1, \widetilde{Z}_2 = I_{\Omega_2}, \ \mbox{and} \ Z_0 = 1, Z_1 = \mathbb{P}(\Omega_2 | {\cal F}_1), Z_2 = 0.
  \end{eqnarray*}
If for  any $\mbf$-martingale $M$, the stopped process $M^\tau$ satisfies NA$(\mbg)$, by Theorem \ref{theo:main2before}, we know that $\{\widetilde{Z}_2 = 0\} = \Omega_1 = \{Z_{1} = 0\}\in {\cal F}_1$ and $\tau$ is an $\mathbb{F}$-stopping time.
 \end{proof}

 \subsection{Reverse Problem: before $\tau$} 
 The previous section studied  what we can conclude for arbitrage opportunities  from the standpoint view of the insider. In this section, we will investigate the equivalence or consequence on the market $S$ if we know that the insider can not make arbitrage opportunities in the market $(\mathbb{G},S^\tau)$.\\
  
We start with two simple lemmas and one proposition before proving Theorem \ref{theo:reversemainbefore} below.

 \begin{lemma}\label{DIStheincluseion111}
  The following hold.
  \begin{eqnarray}
   \{n\leq \tau\} \  \subset \ \{\widetilde{Z}_n>0\} \  \subset \   \{Z_{n-1} >0\}  = \    \Gamma(n):=\left\{P\left(\widetilde Z_n>0\Big\vert{\cal F}_{n-1}\right)>0\right\}.
  \end{eqnarray}
 \end{lemma}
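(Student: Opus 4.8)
The plan is to prove the chain of inclusions/equalities in Lemma~\ref{DIStheincluseion111} term by term, relying only on the elementary facts about $Z$, $\widetilde Z$ already established in the earlier lemmas.

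\textbf{Step 1: $\{n\leq\tau\}\subset\{\widetilde Z_n>0\}$.} This is essentially part (c) of the first lemma, which asserts that $\widetilde Z_n>0$ on $\{n\leq\tau\}$; the proof there computes $E[\ind_{\{n\leq\tau\}}\ind_{\{\widetilde Z_n=0\}}]=E[\widetilde Z_n\ind_{\{\widetilde Z_n=0\}}]=0$, using $\widetilde Z_n=P[\tau\geq n\mid\cf_n]$. So I would simply invoke that. \textbf{Step 2: $\{\widetilde Z_n>0\}\subset\{Z_{n-1}>0\}$.} This is the contrapositive of part (a) of the first lemma, $\{Z_{n-1}=0\}\subset\{\widetilde Z_n=0\}$, which follows from $E[\widetilde Z_n\ind_{\{Z_{n-1}=0\}}]=E[Z_{n-1}\ind_{\{Z_{n-1}=0\}}]=0$ since $E[\widetilde Z_n\mid\cf_{n-1}]=E[P[\tau\geq n\mid\cf_n]\mid\cf_{n-1}]=P[\tau\geq n\mid\cf_{n-1}]\geq P[\tau> n\mid\cf_{n-1}]=E[Z_n\mid\cf_{n-1}]$... more directly, $\widetilde Z_n\geq Z_n$ pointwise and $E[\widetilde Z_n\mid\cf_{n-1}]$ relates to $Z_{n-1}$; the cleanest route is the tower-property identity $E[\widetilde Z_n\mid\cf_{n-1}]=P[\tau\geq n\mid\cf_{n-1}]$ and $Z_{n-1}=P[\tau>n-1\mid\cf_{n-1}]=P[\tau\geq n\mid\cf_{n-1}]$, so in fact $E[\widetilde Z_n\mid\cf_{n-1}]=Z_{n-1}$, which immediately gives $\{Z_{n-1}=0\}\subset\{\widetilde Z_n=0\}$.

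\textbf{Step 3: $\{Z_{n-1}>0\}=\Gamma(n)$.} Here I would use precisely the identity from Step~2, namely $Z_{n-1}=E[\widetilde Z_n\mid\cf_{n-1}]=P(\widetilde Z_n>0\mid\cf_{n-1})\cdot(\text{something})$... more carefully: since $\widetilde Z_n\geq 0$ and $\widetilde Z_n>0$ exactly on $\{\widetilde Z_n>0\}$, we have $\{Z_{n-1}>0\}=\{E[\widetilde Z_n\mid\cf_{n-1}]>0\}$. Now $E[\widetilde Z_n\mid\cf_{n-1}]>0$ if and only if $P(\widetilde Z_n>0\mid\cf_{n-1})>0$, because for a nonnegative random variable its conditional expectation is strictly positive on a set $A\in\cf_{n-1}$ iff the conditional probability of its positivity is strictly positive there (one direction: if $P(\widetilde Z_n>0\mid\cf_{n-1})=0$ on $A$ then $\widetilde Z_n=0$ a.s.\ on $A$ so the conditional expectation vanishes; conversely if $E[\widetilde Z_n\mid\cf_{n-1}]=0$ on $A$ then $\widetilde Z_n=0$ a.s.\ on $A$). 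This gives $\{Z_{n-1}>0\}=\{P(\widetilde Z_n>0\mid\cf_{n-1})>0\}=\Gamma(n)$.

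I do not expect any real obstacle here; the only mild subtlety is the ``conditional-expectation positive iff conditional-probability-of-positivity positive'' fact in Step~3, which is a standard consequence of the definition of conditional expectation applied to a nonnegative integrand over $\cf_{n-1}$-measurable sets, and which I would state as a one-line observation rather than belabor. Everything else is a direct citation of the earlier lemma together with the pointwise inequality $Z_n\leq\widetilde Z_n$ and the tower property identity $Z_{n-1}=E[\widetilde Z_n\mid\cf_{n-1}]$.
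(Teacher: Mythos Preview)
Your proof is correct and follows essentially the same route as the paper: the first two inclusions are delegated to the earlier lemma, and the equality $\{Z_{n-1}>0\}=\Gamma(n)$ is obtained from the tower identity $Z_{n-1}=E[\widetilde Z_n\mid\cf_{n-1}]$. The only cosmetic difference is that you invoke the general principle ``$E[X\mid\cf_{n-1}]>0$ iff $P(X>0\mid\cf_{n-1})>0$ for nonnegative $X$'' in one stroke, whereas the paper verifies the two inclusions separately via $E[P(\widetilde Z_n>0\mid\cf_{n-1})I_{\{Z_{n-1}=0\}}]=P(\widetilde Z_n>0=Z_{n-1})=0$ and $E[Z_{n-1}I_{\Gamma(n)^c}]=E[\widetilde Z_n I_{\Gamma(n)^c}]\leq E[I_{\{\widetilde Z_n>0\}}I_{\Gamma(n)^c}]=0$.
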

 \begin{proof}
 It is enough to  prove the non-trivial equality $\{Z_{n-1} >0\}  =  \Gamma(n)$.  Indeed, due to\\ $E\left( P(\widetilde Z_n>0|{\cal F}_{n-1}) I_{\{Z_{n-1} = 0 \}} \right) = P(\widetilde{Z}_n>0=Z_{n-1}) = 0$, we get   $\Gamma(n)    \subset    \{Z_{n-1} >0\}.$ On the other hand, due to  $E\left( Z_{n-1} I_{\Gamma(n)^c} \right) = E\left( \widetilde{Z}_n\  I_{\Gamma(n)^c}\right) \leq  E\left( I_{\{\widetilde{Z}_n>0\}} \  I_{\Gamma(n)^c}\right)= 0$, we obtain  $  \{Z_{n-1} >0\} \subset  \Gamma(n)$. This ends the proof of the lemma.
 \end{proof}
\begin{lemma}\label{DISZZRhavesomezero}
 Let $R$ be an equivalent probability to $P$. Then the following hold for all $n$.
 \begin{eqnarray*}
  \{\widetilde{Z}_n = 0 \} = \{\widetilde{Z}_{n}^R= 0 \}, \ \ \ \mbox{and} \ \ \ \{Z_{n-1} = 0\}=\{Z^R_{n-1} = 0\},
 \end{eqnarray*}
 where $\widetilde{Z}^R_n := R(\tau \geq n| {\cal F}_n)$ and $ {Z}^R_{n-1}:=R(\tau \geq n| {\cal F}_{n-1})$.
 \end{lemma}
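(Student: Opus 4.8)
The plan is to use the elementary observation that, for a sub-$\sigma$-field $\mathcal{G}$ and a $\mathcal{G}$-measurable event $B$, the conditional expectation $E[\mathds 1_A\mid\mathcal{G}]$ vanishes on $B$ if and only if $P(A\cap B)=0$, combined with the fact that $R\sim P$, so that ``$P$-null'' and ``$R$-null'' mean exactly the same thing. Both displayed equalities then follow from one and the same symmetric argument, applied once with $\mathcal{G}=\mathcal{F}_n$ and the event $\{\tau\ge n\}$, and once with $\mathcal{G}=\mathcal{F}_{n-1}$ and the same event (recall $Z_{n-1}=P[\tau\ge n\mid\mathcal{F}_{n-1}]$, which is why the second statement is phrased with $\widetilde Z^R_{n}$ replaced by $Z^R_{n-1}:=R(\tau\ge n\mid\mathcal F_{n-1})$).

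First I would prove $\{\widetilde Z_n=0\}\subset\{\widetilde Z_n^R=0\}$ (up to a null set). Since $\{\widetilde Z_n=0\}\in\mathcal F_n$ and $\widetilde Z_n=P[\tau\ge n\mid\mathcal F_n]$, we have $E\big[\mathds 1_{\{\tau\ge n\}}\mathds 1_{\{\widetilde Z_n=0\}}\big]=E\big[\widetilde Z_n\mathds 1_{\{\widetilde Z_n=0\}}\big]=0$, hence $P(\{\tau\ge n\}\cap\{\widetilde Z_n=0\})=0$. Because $R\sim P$, this set is also $R$-null, so, using once more that $\{\widetilde Z_n=0\}\in\mathcal F_n$, $E^R\big[\widetilde Z_n^R\,\mathds 1_{\{\widetilde Z_n=0\}}\big]=E^R\big[\mathds 1_{\{\tau\ge n\}}\mathds 1_{\{\widetilde Z_n=0\}}\big]=0$; since $\widetilde Z_n^R\ge 0$, this forces $\widetilde Z_n^R=0$ a.s. on $\{\widetilde Z_n=0\}$. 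Exchanging the roles of $P$ and $R$ (legitimate precisely because $P\sim R$ and $\widetilde Z_n=\widetilde Z_n^{P}$) yields the reverse inclusion, giving $\{\widetilde Z_n=0\}=\{\widetilde Z_n^R=0\}$.

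The identity $\{Z_{n-1}=0\}=\{Z_{n-1}^R=0\}$ is then obtained verbatim with $\mathcal F_{n-1}$ in place of $\mathcal F_n$: $\{Z_{n-1}=0\}\in\mathcal F_{n-1}$, $E\big[\mathds 1_{\{\tau\ge n\}}\mathds 1_{\{Z_{n-1}=0\}}\big]=E\big[Z_{n-1}\mathds 1_{\{Z_{n-1}=0\}}\big]=0$, so $\{\tau\ge n\}\cap\{Z_{n-1}=0\}$ is $P$-null hence $R$-null, whence $E^R\big[Z_{n-1}^R\,\mathds 1_{\{Z_{n-1}=0\}}\big]=0$, and the symmetric argument closes the loop. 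I do not expect a genuine obstacle here; the only points deserving care are that all these sets are defined only modulo $P$-null ($=R$-null) sets, so every inclusion is understood in the a.s. sense, and the symmetry step (swapping $P$ and $R$) must be written out explicitly rather than taken for granted.
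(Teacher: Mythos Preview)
Your proposal is correct and follows essentially the same approach as the paper: both use the tower identity $E[\mathds 1_{\{\tau\ge n\}}\mathds 1_B]=E[\widetilde Z_n\mathds 1_B]$ (respectively with $Z_{n-1}$), transfer nullity via $P\sim R$, and then invoke the symmetric roles of $P$ and $R$ for the reverse inclusion. The only cosmetic difference is that the paper first shows $\{\widetilde Z_n^R=0\}\subset\{\widetilde Z_n=0\}$ while you begin with the opposite inclusion, but this is immaterial once symmetry is invoked.
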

\begin{proof}
 Since
 \begin{eqnarray*}
    E\left[\widetilde{Z}_n I_{\{\widetilde{Z}^R_n=0\}}\right] = E\left[ I_{\{\tau \geq n\}} I_{\{\widetilde{Z}^R_n=0\}}\right] = 0, \ \mbox{and} \  E\left[ {Z}_{n-1} I_{\{ {Z}^R_{n-1}=0\}}\right] = E\left[ I_{\{\tau \geq n\}} I_{\{\widetilde{Z}^R_{n-1}=0\}}\right] = 0,
  \end{eqnarray*}
we obtain $\{\widetilde{Z}_n^R=0\} \subset \{\widetilde{Z}_n = 0\}$ and $\{Z^R_{n-1} = 0\}\subset \{Z_{n-1} = 0\}$.  The symmetric roles of $R$ and $P$ complete the proof of the lemma.
\end{proof}
\begin{proposition}\label{INDISimportprop1}
Let $X$ be an $\mathbb{F}$-martingale. Then the following are equivalent.\\
$\rm (a)$ For all $n$, we have
\begin{eqnarray}\label{INDISXZzero}
 E\left( \Delta X_n I_{\{\widetilde{Z}_n = 0\}}  \Big \vert {\cal F}_{n-1}\right) = 0.
\end{eqnarray}
$\rm (b)$  $X^\tau$ is a $\mathbb{G}$-martingale  under the probability $\mathbb{Q}:= \prod_{n=1}^N q_n$, where
$$
q_n:= \left(\frac{Z_{n-1}}{\widetilde{Z}_n}I_{\{n\leq \tau\}} + I_{\{n> \tau\}} \right)\left(  P\left( \widetilde{Z}_n >0 | {\cal F}_{n-1}\right)I_{\{n\leq \tau\}}  +  I_{\{n> \tau\}}\right)^{-1}.
$$
\end{proposition}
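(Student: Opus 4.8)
The plan is to verify directly, via the Jeulin-type formulas of Lemma~\ref{lem:OPunderGF}, that under (a) the process $X^\tau$ is a $(\mathbb{G},\mathbb{Q})$-martingale, and conversely to extract (a) from that martingale property by localizing on the set $\{n\le\tau\}$ and using the positivity of $Z_{n-1}$ there. First I would record the elementary facts about $q_n$: each $q_n$ is strictly positive (the factor $P(\widetilde Z_n>0|{\cal F}_{n-1})$ equals $\{Z_{n-1}>0\}$ by Lemma~\ref{DIStheincluseion111}, hence is nonzero exactly where the indicator $I_{\{n\le\tau\}}$ can be nonzero, and the $I_{\{n>\tau\}}$ pieces contribute $1$), and that on $\{n\le\tau\}$ one has $q_n=Z_{n-1}/(\widetilde Z_n\,P(\widetilde Z_n>0|{\cal F}_{n-1}))$ while on $\{n>\tau\}$ one has $q_n=1$. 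I would also note $q_n$ is ${\cal G}_{n-1}$-measurable up to the ${\cal F}_n$-measurable factor $\widetilde Z_n$ — more precisely $q_n$ is built from ${\cal F}_{n-1}$-measurable quantities, $\widetilde Z_n$, and $\{n\le\tau\}\in{\cal G}_{n-1}$ — so that $\mathbb{Q}$ is a genuine change of measure and the one-step density at time $n$ is $q_n$.

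The core computation is to show $E^{\mathbb{Q}}[\Delta X^\tau_n\mid{\cal G}_{n-1}]=0$, equivalently $E[q_n\,\Delta X_n I_{\{n\le\tau\}}\mid{\cal G}_{n-1}]=0$ (the $\{n>\tau\}$ part of $\Delta X^\tau_n$ vanishes identically). On $\{n\le\tau\}$ substitute the formula for $q_n$; the ${\cal F}_{n-1}$-measurable factors $Z_{n-1}$ and $P(\widetilde Z_n>0|{\cal F}_{n-1})^{-1}$ pull out, and Lemma~\ref{lem:OPunderGF}(b) applied to the ${\cal F}_n$-measurable random variable $\Delta X_n\,\widetilde Z_n^{-1}\widetilde Z_n=\Delta X_n$ — or rather the identity (\ref{eq:OPunderGF}) used on $\Delta X_n/\widetilde Z_n$ — converts the ${\cal G}_{n-1}$-conditional expectation into $\frac{1}{Z_{n-1}}E[\Delta X_n I_{\{\widetilde Z_n>0\}}\mid{\cal F}_{n-1}]$ on $\{n\le\tau\}$. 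Collecting terms, the whole expression reduces on $\{n\le\tau\}$ to a constant multiple of $E[\Delta X_n I_{\{\widetilde Z_n>0\}}\mid{\cal F}_{n-1}]$; since $X$ is an $\mathbb{F}$-martingale, $E[\Delta X_n\mid{\cal F}_{n-1}]=0$, so this equals $-E[\Delta X_n I_{\{\widetilde Z_n=0\}}\mid{\cal F}_{n-1}]$, which is $0$ precisely by hypothesis (a). For the converse, assume $X^\tau$ is a $(\mathbb{G},\mathbb{Q})$-martingale; running the same identity backwards gives $E[\Delta X_n I_{\{\widetilde Z_n>0\}}\mid{\cal F}_{n-1}]\,I_{\{Z_{n-1}>0\}}=0$, and since $\{Z_{n-1}=0\}\subset\{\widetilde Z_n=0\}$ forces $E[\Delta X_n I_{\{\widetilde Z_n>0\}}\mid{\cal F}_{n-1}]=0$ everywhere (on $\{Z_{n-1}=0\}$ the indicator $I_{\{\widetilde Z_n>0\}}$ is $0$ a.s. since $E[\widetilde Z_n I_{\{Z_{n-1}=0\}}]=E[Z_{n-1}I_{\{Z_{n-1}=0\}}]=0$), the martingale property of $X$ then yields (\ref{INDISXZzero}).

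Before all this I would need integrability of $X^\tau$ under $\mathbb{Q}$, which follows from boundedness of the density built from the $q_n$ once one checks $q_n$ is bounded — here one must be slightly careful, since $Z_{n-1}/\widetilde Z_n$ is bounded by $1$ on $\{\widetilde Z_n>0\}$ but $P(\widetilde Z_n>0|{\cal F}_{n-1})^{-1}$ need not be bounded; the cleanest route is to observe that $\mathbb{Q}$ is defined one step at a time and each conditional density $q_n$ has ${\cal F}_{n-1}$-conditional expectation $1$ (using $E[\widetilde Z_n^{-1}Z_{n-1}I_{\{n\le\tau\}}\mid{\cal G}_{n-1}]=E[I_{\{\widetilde Z_n>0\}}\mid{\cal F}_{n-1}]/P(\widetilde Z_n>0|{\cal F}_{n-1})=1$ on $\{n\le\tau\}$ by Lemma~\ref{lem:OPunderGF}(b)), so $\mathbb{Q}$ is a bona fide probability and the martingale identities are interpreted step-by-step, avoiding global integrability issues — this is the same device used in Proposition~\ref{prop:density} and Theorem~\ref{theo:maintheotau}. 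The main obstacle I anticipate is bookkeeping: correctly tracking which indicators are ${\cal G}_{n-1}$- versus ${\cal F}_{n}$-measurable when applying Lemma~\ref{lem:OPunderGF}(b), and making sure the factor $P(\widetilde Z_n>0|{\cal F}_{n-1})=I_{\{Z_{n-1}>0\}}$ identification from Lemma~\ref{DIStheincluseion111} is used consistently so that the $q_n$ genuinely normalize to a probability. Once the one-step identity $E^{\mathbb Q}[\Delta X_n\mid{\cal G}_{n-1}]I_{\{n\le\tau\}} = c_n\,E[\Delta X_n I_{\{\widetilde Z_n=0\}}\mid{\cal F}_{n-1}]$ with $c_n=-Z_{n-1}^{-1}$ on $\{n\le\tau\}$ is in hand, both directions are immediate.
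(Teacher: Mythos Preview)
Your proposal is correct and follows essentially the same route as the paper: multiply through by the normalizing factor, apply Lemma~\ref{lem:OPunderGF}(b) to convert the $\mathcal G_{n-1}$-conditional expectation of $q_n\Delta X_n I_{\{n\le\tau\}}$ into $E[\Delta X_n I_{\{\widetilde Z_n>0\}}\mid\mathcal F_{n-1}]$ on $\{n\le\tau\}$, then use the $\mathbb F$-martingale property of $X$ to rewrite this as $-E[\Delta X_n I_{\{\widetilde Z_n=0\}}\mid\mathcal F_{n-1}]$. The paper's proof is just this single computation, stated tersely; you spell out more of the surrounding bookkeeping (well-definedness of $\mathbb Q$, the converse direction, the $\{Z_{n-1}=0\}$ case).

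One slip to fix: you write ``the factor $P(\widetilde Z_n>0\mid\mathcal F_{n-1})=I_{\{Z_{n-1}>0\}}$ identification from Lemma~\ref{DIStheincluseion111}''. That lemma only says $\{P(\widetilde Z_n>0\mid\mathcal F_{n-1})>0\}=\{Z_{n-1}>0\}$, i.e.\ the supports coincide, not that the conditional probability equals the indicator. Fortunately you never actually use the stronger (false) claim --- your normalization check $E[q_n\mid\mathcal G_{n-1}]=1$ is computed correctly via Lemma~\ref{lem:OPunderGF}(b) and only needs the support statement to ensure the denominator is positive where needed.
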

\begin{proof}
First, we remark that the probability $\mathbb{Q}$ is well defined since $ P\left( \widetilde{Z}_n >0 | {\cal F}_{n-1}\right)I_{\{n\leq \tau\}} +  I_{\{n> \tau\}}  >0$ due to Lemma \ref{DIStheincluseion111}. To complete the proof, we calculate that
\begin{eqnarray*}
 &&\left(  P\left( \widetilde{Z}_n >0 | {\cal F}_{n-1}\right)I_{\{n\leq \tau\}} +  I_{\{n> \tau\}}\right)E^{\mathbb{Q}}\left(  \Delta X_n I_{\{n\leq \tau\}} \Big\vert {\cal G}_{n-1}\right) \\
 && = E \left(  \Delta X_n I_{\{\widetilde{Z}_n >0\}} \Big\vert {\cal F}_{n-1}\right) I_{\{n\leq \tau\}}= - E \left(  \Delta X_n I_{\{\widetilde{Z}_n =0\}} \Big\vert {\cal F}_{n-1}\right) I_{\{n\leq \tau\}}.
\end{eqnarray*}
This ends the proof of the proposition.
\end{proof}
Throughout the rest,  we consider the following notation.
\begin{eqnarray}
\mathbb{Q}^{(e)}&:=& \prod_{n=1}^N \left(  \frac{\widetilde{Z}_n}{Z_{n-1}} I_{\{Z_{n-1} >0\}}  + I_{\{\widetilde{Z}_n=0<Z_{n-1}\}} +I_{\{Z_{n-1} = 0 \}}\right)\left( 1 + E\left(  I_{\{\widetilde{Z}_n=0<Z_{n-1}\}} \Big\vert {\cal F}_{n-1}\right) \right)^{-1}\centerdot \mathbb{P} \sim \mathbb{P}. \nonumber
\end{eqnarray}
Below, we state the main theorem in this subsection which shows what we can conclude if the market $(\mathbb{G},X^\tau)$ excludes arbitrage opportunities for any $\mathbb{F}$-adapted  integrable process $X$.
\begin{theorem}\label{theo:reversemainbefore}
 Let $\tau$ be a random time and $X$ be an $\mathbb{F}$-adapted  integrable process.  Then the following are equivalent. \\
 $\rm (a)$  $X^\tau$ satisfies NA$(\mathbb{G},\mathbb{P})$.\\
 $\rm (b)$   $X^{(e)}$ satisfies NA$(\mathbb{F},\mathbb{P})$, where  $\Delta X^{(e)}_n := \Delta X_n I_{\{\widetilde{Z}_{n}>0\}}$.
\end{theorem}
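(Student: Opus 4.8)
The plan is to reduce both non-arbitrage conditions to their one-period versions and then to compare, step by step, the one-period condition for $(X^\tau,\mathbb G)$ with the one-period condition for $(X^{(e)},\mathbb F)$. Recall the standard fact (implicit in the inductive proof of the Dalang--Morton--Willinger theorem) that an adapted integrable process $V$ satisfies NA$(\mathbb H)$, for a filtration $\mathbb H=(\mathcal H_n)$, if and only if for every $n$ there is no $\mathcal H_{n-1}$-measurable, and without loss of generality bounded, random variable $h$ with $h\,\Delta V_n\geq 0$ a.s. and $\mathbb P(h\,\Delta V_n>0)>0$; I call the latter property ``NA$(\mathbb H)$ at step $n$''. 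It then suffices to prove, for each fixed $n$, that $X^\tau$ satisfies NA$(\mathbb G,\mathbb P)$ at step $n$ if and only if $X^{(e)}$ satisfies NA$(\mathbb F,\mathbb P)$ at step $n$; the theorem follows by intersecting over $n$.

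Two elementary ingredients drive the step-$n$ comparison. First, a structural fact about progressive enlargement (see Jeulin \cite{Jeu}): since $\{n\leq\tau\}=\{\tau\geq n\}=\{\tau\leq n-1\}^{c}\in\mathcal G_{n-1}$, we have $\Delta X^{\tau}_{n}=\Delta X_{n}\,\ind_{\{n\leq\tau\}}$, and every $\mathcal G_{n-1}$-measurable $G$ satisfies $G\,\ind_{\{n\leq\tau\}}=g\,\ind_{\{n\leq\tau\}}$ for some $\mathcal F_{n-1}$-measurable $g$; consequently the one-period gains available to the insider, $\{G\,\Delta X^{\tau}_{n}:G\in\mathcal G_{n-1}\}$, coincide with $\mathcal C^{\tau}_{n}:=\{g\,\ind_{\{n\leq\tau\}}\Delta X_{n}:g\in\mathcal F_{n-1}\}$, while those for $X^{(e)}$ in $\mathbb F$ form $\mathcal C^{(e)}_{n}:=\{g\,\ind_{\{\widetilde Z_{n}>0\}}\Delta X_{n}:g\in\mathcal F_{n-1}\}$. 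Second is the sign-transfer principle: for any $\mathcal F_{n}$-measurable $U\geq 0$ one has $E[U\,\ind_{\{n\leq\tau\}}\,|\,\mathcal F_{n}]=U\,\widetilde Z_{n}$, hence, invoking $\{n\leq\tau\}\subset\{\widetilde Z_{n}>0\}$ from Lemma \ref{DIStheincluseion111},
\[
U\,\ind_{\{n\leq\tau\}}=0\ \text{a.s.}\ \Longleftrightarrow\ U\,\widetilde Z_{n}=0\ \text{a.s.}\ \Longleftrightarrow\ U\,\ind_{\{\widetilde Z_{n}>0\}}=0\ \text{a.s.}
\]
Applying this to $U=(g\,\ind_{\{\widetilde Z_{n}>0\}}\Delta X_{n})^{-}$ and to $U=(g\,\ind_{\{\widetilde Z_{n}>0\}}\Delta X_{n})^{+}$ (both vanish off $\{\widetilde Z_{n}>0\}$), together with the fact that $\ind_{\{\widetilde Z_{n}>0\}}=1$ on $\{n\leq\tau\}$, yields for every $g\in\mathcal F_{n-1}$ both
\[
g\,\ind_{\{n\leq\tau\}}\Delta X_{n}\geq 0\ \text{a.s.}\ \Longleftrightarrow\ g\,\ind_{\{\widetilde Z_{n}>0\}}\Delta X_{n}\geq 0\ \text{a.s.,}
\]
and the equivalence $\mathbb P(g\,\ind_{\{n\leq\tau\}}\Delta X_{n}>0)>0\Leftrightarrow\mathbb P(g\,\ind_{\{\widetilde Z_{n}>0\}}\Delta X_{n}>0)>0$.

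Putting the two ingredients together, $\mathcal C^{\tau}_{n}$ contains a one-period arbitrage if and only if $\mathcal C^{(e)}_{n}$ does: the identity map on $\mathcal F_{n-1}$-strategies realizes the correspondence, and the sign-transfer principle shows it preserves both the a.s.-nonnegativity and the positivity-on-a-positive-set requirements. Equivalently, $X^{\tau}$ satisfies NA$(\mathbb G,\mathbb P)$ at step $n$ iff $X^{(e)}$ satisfies NA$(\mathbb F,\mathbb P)$ at step $n$, and intersecting over $n$ proves the theorem. The crux — and the step I expect to need the most care — is precisely the gap between the trading interval $\{n\leq\tau\}$ and its $\mathcal F_{n}$-essential support $\{\widetilde Z_{n}>0\}$: a gain configuration profitable on $\{\widetilde Z_{n}>0\}$ but ``outside'' $\{n\leq\tau\}$ is not directly exploitable by the insider, yet the identity $E[\,\cdot\,\ind_{\{n\leq\tau\}}\,|\,\mathcal F_{n}]=(\cdot)\,\widetilde Z_{n}$ forces it to leak into $\{n\leq\tau\}$ on a positive-probability set, and conversely a sign constraint imposed only on $\{n\leq\tau\}$ propagates to all of $\{\widetilde Z_{n}>0\}$; this is what makes both implications go through. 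An alternative, measure-theoretic route would combine Dalang--Morton--Willinger in $\mathbb G$ and in $\mathbb F$ with Proposition \ref{INDISimportprop1} and Lemma \ref{DISZZRhavesomezero}, building an $\mathbb F$-martingale measure for $X^{(e)}$ from a $\mathbb G$-martingale measure for $X^{\tau}$; but making that density genuinely equivalent to $\mathbb P$ is fiddlier, so I would favor the one-period comparison above.
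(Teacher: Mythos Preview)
Your proof is correct and takes a genuinely different route from the paper. The paper argues via equivalent martingale measures: for (a)$\Rightarrow$(b) it starts from a $\mathbb G$-density $(1+\Delta K^{\mathbb G}_n)$ for $X^\tau$, decomposes it on $\{n\leq\tau\}$ in the Jeulin form $Y^{\mathbb F}_n\ind_{\{n\leq\tau\}}+\phi_n\ind_{\{n=\tau\}}$, and then manufactures an $\mathbb F$-density $L$ for $X^{(e)}$ under an auxiliary equivalent measure $\mathbb Q^{(e)}$ (the ``fiddly'' step you anticipated, handled by padding with indicators on $\{\widetilde Z_n=0<Z_{n-1}\}$ and $\{Z_{n-1}=0\}$); for (b)$\Rightarrow$(a) it invokes Proposition \ref{INDISimportprop1} and Lemma \ref{DISZZRhavesomezero}. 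Your argument bypasses all measure changes: the one-period reduction plus the identity $E[U\ind_{\{n\leq\tau\}}\mid\mathcal F_n]=U\widetilde Z_n$ gives a direct bijection between one-step arbitrages in $\mathcal C^\tau_n$ and in $\mathcal C^{(e)}_n$, proving both implications at once. Your approach is shorter, more elementary, and makes transparent that the equivalence is really a pointwise fact about the pair $\{n\leq\tau\}\subset\{\widetilde Z_n>0\}$; the paper's approach, on the other hand, yields explicit deflators and EMMs (the $L$ and $\mathbb Q^{(e)}$), which plug into the deflator machinery used elsewhere in Sections \ref{sec:beforetau}--\ref{sec:aftertau} and are of independent interest for pricing.
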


\begin{proof}
(a)$\Longrightarrow$(b).  If  $X^\tau$ satisfies NA$(\mathbb{G})$, there exists a probability $\mathbb{Q}^\mathbb{G}:=\prod_{n=1}^N (1+\Delta K^\mathbb{G}_n)\centerdot\mathbb{P} \sim \mathbb{P}$ such that $X^\tau$ is a $(\mathbb{G},\mathbb{Q}^\mathbb{G})$-martingale, where $1+\Delta K^\mathbb{G}_n >0$ and $E\left( 1+\Delta K^\mathbb{G}_n | {\cal G}_{n-1} \right) = 1$,  for all $n$.
 By Jeulin  \cite{Jeu}, there exists two  ${\cal{F}}_n$-measurable random variables $Y^\mathbb{F}_n$ and  $\phi_n$ such that
 \begin{eqnarray}\label{INDISKG}
  \left(1+\Delta K^\mathbb{G}_n\right)I_{\{n\leq \tau\}} = Y^\mathbb{F}_n I_{\{n \leq \tau\}}   + \phi_n  I_{\{n =  \tau\}}.
 \end{eqnarray}
 Therefore,
  \begin{eqnarray*}
  I_{\{n\leq \tau\}}  &=& E\left(\left(1+\Delta K^\mathbb{G}_n\right)I_{\{n\leq \tau\}} \Big\vert {\cal G}_{n-1}\right) = E\left( Y^\mathbb{F}_n I_{\{n\leq \tau\}}   + \phi_n  I_{\{n =  \tau\}} \Big\vert {\cal G}_{n-1}\right) \\
  &=& \frac{I_{\{n\leq \tau\}}}{Z_{n-1}}E\left( Y_n^\mathbb{F}\widetilde{Z}_n  + \phi_n \Delta  D^{o,\mathbb{F}}_n \ \Big\vert {\cal F}_{n-1}  \right) \\
  &=&  \frac{I_{\{n\leq \tau\}}}{Z_{n-1}}E\left( \widetilde{Z}_n \left( Y_n^\mathbb{F} + \frac{  \phi_n \Delta  D^{o,\mathbb{F}}_n}{\widetilde{Z}_n}  I_{\{\widetilde{Z}_n >0\}}\right) \ \Big\vert {\cal F}_{n-1}  \right),
 \end{eqnarray*}
 where  $\Delta  D^{o,\mathbb{F}}_n  = \widetilde{Z}_n - Z_n$ and $\Delta  D^{o,\mathbb{F}}_n I_{\{\widetilde{Z}_n = 0\}} = (\widetilde{Z}_n - Z_n)I_{\{\widetilde{Z}_n = 0\}} = 0$.  Hence, we get
  \begin{eqnarray*}
   &&E\left( \frac{\widetilde{Z}_n I_{\{Z_{n-1} >0\}}}{Z_{n-1}} \  \widetilde{Y}_n I_{\{\widetilde{Z}_n > 0\}} \ \Big\vert {\cal F}_{n-1}  \right) = I_{\{Z_{n-1} >0\}}, \ \mbox{and}\\
   &&\widetilde{Y}_n:= Y_n^\mathbb{F}+ \frac{  \phi_n \Delta  D^{o,\mathbb{F}}_n}{\widetilde{Z}_n} >0, \ \ \mbox{ on the set} \ \ \{\widetilde{Z}_n >0\}.
  \end{eqnarray*}
 Define $L$  by
\begin{eqnarray*}
 L_k&:=& \prod_{n=1}^k \left(    \widetilde{Y}_n I_{\{\widetilde{Z}_n >0\}} + I_{\{\widetilde{Z}_n = 0<Z_{n-1}\}}    + I_{\{Z_{n-1} = 0\}}\right) >0.
\end{eqnarray*}
It is easy to check that $L$ is an   $(\mathbb{F},\mathbb{Q}^{(e)})$-martingale, i.e. $E^{Q^{(e)}}\left( \frac{L_n}{L_{n-1}} \Big\vert {\cal F}_{n-1} \right) = 1$ for all $1\leq n\leq N$.   Since $X^\tau$ is a  $(\mathbb{G},\mathbb{Q}^\mathbb{G})$-martingale, due to (\ref{INDISKG}), we deduce that
\begin{eqnarray}
 0&=& E\left( \Delta X_n I_{\{n\leq \tau\}} \left( Y^\mathbb{F}_n I_{\{n\leq \tau\}}  + \phi_n  I_{\{n =  \tau\}} \right) \Big\vert {\cal G}_{n-1}\right) \nonumber \\
 &=&  E\left( \Delta X_n \left(\widetilde{Z}_n Y^\mathbb{F}_n  +  \phi_n  \Delta  D^{o,\mathbb{F}}_n\right) \  \Big\vert {\cal F}_{n-1}\right) \frac{I_{\{n\leq \tau\}}}{Z_{n-1}}. \nonumber
\end{eqnarray}
Hence, by taking conditional expectation under ${\cal F}_{n-1}$ in the above equality and using the fact $\{\widetilde{Z}_n >0\} \subset \{Z_{n-1} >0\}$, we get
\begin{eqnarray}\label{NADISGtoFzero}
 E\left( \Delta X_n  \frac{\widetilde{Z}_n I_{\{\widetilde{Z}_{n} > 0\}}}{Z_{n-1}} \widetilde{Y}_n  \  \Big\vert {\cal F}_{n-1}\right)  = 0.
\end{eqnarray}
Then, we deduce
\begin{eqnarray*}
\left( 1 + E\left(  I_{\{\widetilde{Z}_n=0<Z_{n-1}\}} \Big\vert {\cal F}_{n-1}\right) \right) E^{Q^{(e)}}\left( \Delta X^{(e)} \frac{L_n}{L_{n-1}} \Big\vert {\cal F}_{n-1}\right) = E\left( \Delta X_n  \frac{\widetilde{Z}_n I_{\{\widetilde{Z}_{n} > 0\}}}{Z_{n-1}} \widetilde{Y}_n  \  \Big\vert {\cal F}_{n-1}\right)=0.
\end{eqnarray*}
Therefore,   $LX^{(e)}$ is an $\mathbb{F}$-martingale under   $\mathbb{Q}^{(e)}$ and $X^{(e)}$ satisfies NA$(\mathbb{F},Q^{(e)})$ and NA$(\mathbb{F},\mathbb{P})$.\\

\noindent (b)$\Longrightarrow$(a).  Since $X^{(e)}$ satisfies NA$(\mathbb{F},\mathbb{P})$, there exists a probability $\mathbb{R}$ equivalent to $\mathbb{P}$ such that $X^{(e)}$ is an $(\mathbb{F},\mathbb{R})$-martingale. By Lemma \ref{DISZZRhavesomezero}, the condition (\ref{INDISXZzero}) in Proposition \ref{INDISimportprop1} is trivial satisfied by $X^{(e)}$ under the probability $\mathbb{R}$, i.e. $E^\mathbb{R}\left( \Delta X^{(e)}_n I_{\{\widetilde{Z}^\mathbb{R}_n = 0\}}  \Big \vert {\cal F}_{n-1}\right) = E^\mathbb{R}\left( \Delta X^{(e)}_n I_{\{\widetilde{Z}_n = 0\}}  \Big \vert {\cal F}_{n-1}\right) = 0.$ Therefore, by Proposition \ref{INDISimportprop1}, we conclude that $\left(X^{(e)}\right)^\tau = X^\tau$ satisfies NA$(\mathbb{G},\mathbb{P})$.\\
This ends the proof of the theorem.
\end{proof}

\begin{remark}
  The related work  in Aksamit et al. \cite{cdm2013} and Choulli et al. \cite{cadj20141}  proved similar   and  other much deeper results in continuous time semi-martingale settings by using optional stochastic integral and predictable characteristics. Let us point out that the $\mbg$-martingale $N^\mbg$ in Proposition \ref{prop:density} could be also written as a discrete time version of optional stochastic integral.
\end{remark}

\section{Non-arbitrage on $\Rbrack \tau, +\infty\Lbrack$}\label{sec:aftertau}
In this section, we shall move to the stability of non-arbitrage after an honest time $\tau$. We recall its definition below.
\begin{definition}
  A random time $\tau$ is honest, if for any $n$, there exists an ${\cal {F}}_n$-measurable r.v. $\tau _n$ such
that $\tau \mathds{1}_{\{\tau<n\}}=\tau_n \mathds{1}_{\{\tau<n\}}$.
\end{definition}
For more details on honest times, we consult Jeulin \cite{Jeu} and Barlow \cite{Barlow}.
For an honest time $\tau$, we associate the following stopping times
\beq\label{eq:crucialstoppingtimehonest}
\sigma_1:=\inf\{n\geq 1:  Z_n < 1\}, \ \ \sigma_2:=\inf\{n\geq 1:  Z_{n-1} < 1\} \ \mbox{ and } \sigma_3:=\inf\{n\geq 1:  \widetilde{Z}_n < 1\}. \ \   \ \
\eeq

\begin{lemma}
  For an honest time $\tau$ and  stopping times in (\ref{eq:crucialstoppingtimehonest}), the following hold for all  $1\leq n\leq N$.\\
  \noindent $\rm (a)$  $\{\widetilde{Z}_n <1\}\subset \{Z_{n-1} <1\}$ and $\{\widetilde{Z}_n <1\}\subset \{Z_n <1\}$.\\
  \noindent $\rm (b)$ $\sigma_2$ is an $\mbf$-predictable stopping time and $    \sigma_2 \leq \sigma_3$ and $ \sigma_1 \leq \sigma_3.$\\
 \noindent $\rm (c)$ $\tau \geq  \sigma_1$ and  $Z_{n-1}, \widetilde{Z}_n <1$ on $\{\tau < n\}$.
\end{lemma}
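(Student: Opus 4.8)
The plan is to prove the three statements in parallel with their counterparts for the $Z,\widetilde Z$ along $\llbracket 0,\tau\rrbracket$, exploiting the fact that for an honest time one has $1-Z_n=P[\tau\le n\mid\cf_n]$ and $1-\widetilde Z_n=P[\tau< n\mid\cf_n]=P[\tau\le n-1\mid\cf_n]$, so the whole argument becomes the same as the one behind the first lemma on $Z,\widetilde Z$ applied to the ``complementary'' supermartingale $1-Z$.

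\textbf{Part (a).} I would compute two conditional-expectation identities. For the first inclusion, observe
\[
E\bigl[(1-\widetilde Z_n)\,\ind_{\{Z_{n-1}=1\}}\bigr]=E\bigl[P[\tau\le n-1\mid\cf_n]\,\ind_{\{Z_{n-1}=1\}}\bigr]=E\bigl[(1-Z_{n-1})\,\ind_{\{Z_{n-1}=1\}}\bigr]=0,
\]
using that $\{Z_{n-1}=1\}\in\cf_{n-1}$ and that $1-Z_{n-1}=P[\tau\le n-1\mid\cf_{n-1}]$ is the $\cf_{n-1}$-projection of $1-\widetilde Z_n$. Since $1-\widetilde Z_n\ge0$, this forces $\{Z_{n-1}=1\}\subset\{\widetilde Z_n=1\}$, i.e. $\{\widetilde Z_n<1\}\subset\{Z_{n-1}<1\}$. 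For the second inclusion I would simply use $Z_n\le\widetilde Z_n$, hence $\{\widetilde Z_n<1\}$ need not be compared with $\{Z_n<1\}$ directly that way; instead note $1-Z_n\ge 1-\widetilde Z_n$, so $\widetilde Z_n<1\Rightarrow Z_n<1$. That gives (a).

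\textbf{Part (b).} Predictability of $\sigma_2$: write $\{\sigma_2=n\}=\{Z_{n-1}<1\}\cap\bigcap_{1\le i\le n-1}\{Z_{i-1}=1\}=\{Z_{n-1}<1\}\cap\bigcap_{0\le i\le n-2}\{Z_i=1\}$, which lies in $\cf_{n-1}$; hence $\sigma_2$ is $\mbf$-predictable (one should also note $\sigma_2=\sigma_1+1$ by the same bookkeeping as in the earlier lemma, though only predictability is asserted). The inequalities $\sigma_2\le\sigma_3$ and $\sigma_1\le\sigma_3$ then follow at once from (a): if $\widetilde Z_n<1$ then both $Z_{n-1}<1$ and $Z_n<1$, so the first time $\widetilde Z$ drops below $1$ is no earlier than the first time $Z_{\cdot-1}$ does, and likewise for $Z$.

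\textbf{Part (c).} Here I would use honesty in the form $1-\widetilde Z_n=P[\tau<n\mid\cf_n]$ and the tower property:
\[
E\bigl[\ind_{\{\tau<n\}}\ind_{\{Z_{n-1}=1\}}\bigr]=E\bigl[(1-Z_{n-1})\ind_{\{Z_{n-1}=1\}}\bigr]=0,\qquad
E\bigl[\ind_{\{\tau<n\}}\ind_{\{\widetilde Z_n=1\}}\bigr]=E\bigl[(1-\widetilde Z_n)\ind_{\{\widetilde Z_n=1\}}\bigr]=0,
\]
so $Z_{n-1}<1$ and $\widetilde Z_n<1$ a.s.\ on $\{\tau<n\}$, i.e.\ on $\{\tau<n\}$ one cannot have $Z_{n-1}=1$; equivalently $\{Z_{n-1}=1\}\subset\{\tau\ge n\}$. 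Summing this over $n$ and using the definition of $\sigma_1$ yields $\sigma_1\le\tau$: on $\{\sigma_1>n\}=\bigcap_{1\le k\le n}\{Z_k=1\}$ one has in particular $Z_n=1$, forcing $\tau\ge n+1$, hence $\{\sigma_1>n\}\subset\{\tau>n\}$ for every $n$, which is exactly $\tau\ge\sigma_1$.

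I do not expect a genuine obstacle here; the only point requiring a little care is making sure the honest-time identity $1-\widetilde Z_n=P[\tau<n\mid\cf_n]$ (equivalently $1-\widetilde Z_n=P[\tau\le n-1\mid\cf_n]$, the dual of $\widetilde Z_n=P[\tau\ge n\mid\cf_n]$) is used correctly, since it is this identity — not honesty per se — that drives the $\cf_{n-1}$-projection step in (a) and (c); honesty itself is not even needed for these three inclusions, which hold for any random time, but I would keep the statement as phrased and simply cite the definitions of $Z$ and $\widetilde Z$.
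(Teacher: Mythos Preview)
Your argument is correct and follows essentially the same route as the paper: the key identities $E[(1-\widetilde Z_n)\ind_{\{Z_{n-1}=1\}}]=E[(1-Z_{n-1})\ind_{\{Z_{n-1}=1\}}]=0$ in (a) and the two expectations $E[\ind_{\{n>\tau\}}\ind_{\{Z_{n-1}=1\}}]=0$, $E[\ind_{\{n>\tau\}}\ind_{\{\widetilde Z_n=1\}}]=0$ in (c) are exactly what the paper uses, and (b) is deduced from (a) in the same way (the paper checks $\{\sigma_2\le n\}\in\cf_{n-1}$ rather than $\{\sigma_2=n\}\in\cf_{n-1}$, but this is immaterial). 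Your explicit derivation of $\tau\ge\sigma_1$ from $\{Z_n=1\}\subset\{\tau>n\}$ is a bit more detailed than the paper, which leaves that step implicit, and your closing remark that honesty plays no role in these particular inclusions is correct.
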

\begin{proof}
(a) Notice that
 \beq
  E\left[\left(1 - \widetilde{Z}_n\right) \mathds{1}_{\{Z_{n-1} = 1\}}\right] = E\left[\left(1 - Z_{n-1} \right) \mathds{1}_{\{Z_{n-1} = 1\}}\right] = 0.
  \eeq
  Hence, $\{Z_{n-1} = 1\} \subset \{\widetilde{Z}_n = 1\}$.  Due to $Z_n\leq \widetilde{Z}_n$, we have  $\{\widetilde{Z}_n <1\}\subset \{Z_n <1\}$. \\

  (b) Since $\{\sigma_2\leq n\} = \{Z_{n-1} < 1\}\in {\cal F}_{n-1}$, we conclude that  $\sigma_2$ is predictable.  The inequalities $    \sigma_2 \leq \sigma_3$ and $ \sigma_1 \leq \sigma_3$ follow immediately from (a). \\

  (c) Notice that
  \begin{eqnarray*}
   E \left[\ \mathds{I}_{\{n > \tau\}} \mathds{I}_{\{Z_{n-1} = 1\}} \right] &=& E \left[ (1-Z_{n-1}) \mathds{I}_{\{Z_{n-1} = 1\}} \right]  = 0,  \ \mbox{and}\\
   E \left[\ \mathds{I}_{\{n > \tau\}} \mathds{I}_{\{\widetilde{Z}_{n} = 1\}} \right] &=& E \left[ (1-\widetilde{Z}_{n})\mathds{I}_{\{\widetilde{Z}_{n} = 1\}} \right]  = 0.
  \end{eqnarray*}
  Therefore, $Z_{n-1}<1$ and $\wid{Z}_n<1$   on the set $\{n > \tau\}$. This ends the proof of the lemma.
 \end{proof}
The following lemma describes the connection between conditional expectations under $\mbf$ and $\mbg$. For its proof, we consult Jeulin \cite{Jeu}.
\begin{lemma}\label{lem:OPunderGFafter}
Let $Y$ be an integrable $\mb A$-measurable random variable. Then, the following  hold.\\
$\rm (a)$ On the set $\{n>\tau\}$, the conditional expectation under ${\cal G}_n$ is given by
\beq
    E\left[Y|{\cal G}_n\right]\mathds{1}_{\{\tau <n\}} = E\left[Y \mathds{1}_{\{\tau<n\}} |{\cal F}_n\right] \frac{1}{1 - \widetilde{Z}_n}\mathds{1}_{\{\tau < n\}}.
    \eeq
$ \rm (b)$ On the set $\{n>\tau\}$, the conditional expectation under ${\cal G}_{n-1}$ is given by
\beq
    E\left[Y|{\cal G}_{n-1}\right]\mathds{1}_{\{\tau <n\}} = E\left[Y \mathds{1}_{\{\tau < n\}} |{\cal F}_{n-1}\right] \frac{1}{1 - Z_{n-1}}\mathds{1}_{\{\tau < n\}}.
    \eeq
Moreover, if $Y$ is ${\cal F}_n$-measurable, we have
\beq\label{eq:OPunderGFafter}
E\left[Y|{\cal G}_{n-1}\right]\mathds{1}_{\{\tau < n\}} &=& E\left[Y (1 - \widetilde{Z}_n) |{\cal F}_{n-1}\right] \frac{1}{ 1 - Z_{n-1}}\mathds{1}_{\{\tau < n\}}.
\eeq
\end{lemma}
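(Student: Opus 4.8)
The plan is to establish the three identities by the same device used in the proof of Lemma~\ref{lem:OPunderGF}, namely by checking that both sides have the same integral against every test set in the relevant $\sigma$-algebra, after localizing to the event $\{\tau<n\}$ (resp.\ $\{\tau\le n-1\}$). The key structural fact is that a generic atom of $\mathcal G_n$ intersected with $\{\tau<n\}$ is of the form $A\cap\{\tau\le j\}$ for some $A\in\mathcal F_n$ and some $j\le n-1$, and that $\{\tau\le j\}\cap\{\tau<n\}=\{\tau\le j\}$ for such $j$; hence it suffices to test against sets of the form $A\cap\{\tau<n\}$ with $A\in\mathcal F_n$. For part~(a), I would verify
\[
E\left[Y\,\mathds 1_{A}\,\mathds 1_{\{\tau<n\}}\right]
= E\left[E\left[Y\,\mathds 1_{\{\tau<n\}}\mid\mathcal F_n\right]\,\mathds 1_A\right]
= E\left[\frac{E\left[Y\,\mathds 1_{\{\tau<n\}}\mid\mathcal F_n\right]}{1-\widetilde Z_n}\,\mathds 1_{\{\tau<n\}}\,\mathds 1_A\right],
\]
where the last step uses $E\left[\mathds 1_{\{\tau<n\}}\mid\mathcal F_n\right]=1-\widetilde Z_n$ (by definition $\widetilde Z_n=P[\tau\ge n\mid\mathcal F_n]$) together with the fact, from Lemma~\ref{DIStheincluseion111}-type reasoning, that $\{\tau<n\}\subset\{\widetilde Z_n<1\}$ so the quotient is well defined on the relevant set and the extra factor $\mathds 1_{\{1-\widetilde Z_n>0\}}$ may be inserted freely.

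For part~(b), the argument is identical with $\mathcal F_n$ replaced by $\mathcal F_{n-1}$ and with the observation that $\{\tau<n\}=\{\tau\le n-1\}$, so $E\left[\mathds 1_{\{\tau<n\}}\mid\mathcal F_{n-1}\right]=P[\tau\le n-1\mid\mathcal F_{n-1}]=1-Z_{n-1}$; again $\{\tau<n\}\subset\{Z_{n-1}<1\}$ guarantees the denominator is positive where needed. The last displayed formula, for $Y$ that is $\mathcal F_n$-measurable, then follows by applying (b) and then conditioning the numerator down one step: $E\left[Y\,\mathds 1_{\{\tau<n\}}\mid\mathcal F_{n-1}\right]=E\left[Y\,E[\mathds 1_{\{\tau<n\}}\mid\mathcal F_n]\mid\mathcal F_{n-1}\right]=E\left[Y(1-\widetilde Z_n)\mid\mathcal F_{n-1}\right]$, using the tower property and $\mathcal F_n$-measurability of $Y$.

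The only genuinely delicate point is the bookkeeping of null sets: one must be sure that multiplying and dividing by $1-\widetilde Z_n$ (resp.\ $1-Z_{n-1}$) on the event $\{\tau<n\}$ introduces no error, i.e.\ that these quantities are strictly positive $P$-a.s.\ on $\{\tau<n\}$, which is exactly part~(c) of the preceding lemma. Everything else is the standard "same conditional expectation" verification and carries over verbatim from Jeulin~\cite{Jeu}; accordingly I would simply cite \cite{Jeu} for the general enlargement formulas and record the one-line specializations above, as the paper already does for Lemma~\ref{lem:OPunderGF}.
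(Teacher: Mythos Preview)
Your proposal is essentially in line with the paper, which gives no proof at all and simply refers to Jeulin~\cite{Jeu}; your final sentence (``cite \cite{Jeu} and record the one-line specializations'') is exactly what the paper does.

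That said, your detailed sketch contains one genuine gap worth flagging. The reduction ``it suffices to test against sets of the form $A\cap\{\tau<n\}$ with $A\in\mathcal F_n$'' is \emph{not} automatic: from $\{\tau\le j\}\cap\{\tau<n\}=\{\tau\le j\}$ you only get that the test sets are $A\cap\{\tau\le j\}$ with $A\in\mathcal F_n$ and $j\le n-1$, and in general such a set is \emph{not} of the form $A'\cap\{\tau<n\}$ with $A'\in\mathcal F_n$. What makes the reduction valid here is precisely the standing assumption of Section~\ref{sec:aftertau} that $\tau$ is honest: on $\{\tau<n\}$ one has $\tau=\tau_n$ with $\tau_n$ $\mathcal F_n$-measurable, so $\{\tau\le j\}\cap\{\tau<n\}=\{\tau_n\le j\}\cap\{\tau<n\}$ and $\{\tau_n\le j\}\in\mathcal F_n$. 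In other words, honesty is exactly the statement that the trace of $\mathcal G_n$ on $\{\tau<n\}$ equals the trace of $\mathcal F_n$ on $\{\tau<n\}$, and this is the key structural input for part~(a) (and analogously for part~(b) with $n$ replaced by $n-1$). Your write-up never invokes honesty, so as stated the argument would appear to prove the formula for an arbitrary random time, which is false. Once you insert the honesty step, the rest of your verification (including the tower-property derivation of \eqref{eq:OPunderGFafter}) is correct.
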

The following theorem characterizes the relationship between $\mathbb F$-martingales and $\mathbb G$-martingales on the stochastic interval $\Rbrack \tau, +\infty \Lbrack$. For the continuous time settings, we consult Jeulin \cite{Jeu}.
\begin{theorem}\label{theo:GFmaraftertau}
  Let $M$ be an $\mbf$-martingale and $\tau$ be an honest  time. Then the following process
  \beq
  \widehat{M}^{(a)}_n := M_{n\vee \tau} - M_{\tau} - \sum_{1\leq k\leq n} \frac{1}{1 - Z_{k-1}}\mathds{1}_{\{\tau < k\}} E\left[\Delta M_k (1 - \widetilde{Z}_k) | \cf_{k-1}\right],\nonumber
  \eeq
  is a $\mbg$-martingale.
\end{theorem}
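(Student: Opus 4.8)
The plan is to verify directly that $\widehat{M}^{(a)}$ is a $\mathbb{G}$-martingale by computing $E[\Delta \widehat{M}^{(a)}_n \mid {\cal G}_{n-1}]$ and showing it vanishes, mirroring the proof of Theorem \ref{theo:GFmartau} but now working on the stochastic interval $\Rbrack \tau, +\infty\Lbrack$ and using the honest time structure. First I would write $\Delta \widehat{M}^{(a)}_n = \Delta M_n \mathds{1}_{\{\tau < n\}} - \frac{1}{1-Z_{n-1}}\mathds{1}_{\{\tau<n\}} E[\Delta M_n (1-\widetilde{Z}_n) \mid {\cal F}_{n-1}]$. The crucial point is that $M_{n\vee\tau} - M_\tau = \sum_{1\leq k\leq n} \Delta M_k \mathds{1}_{\{\tau \leq k-1\}} = \sum_{1\leq k\leq n}\Delta M_k\mathds{1}_{\{\tau < k\}}$, so the increment of the first term is exactly $\Delta M_n \mathds{1}_{\{\tau<n\}}$ (here I should double-check the indexing convention $\{\tau \leq k-1\} = \{\tau < k\}$, which holds since $\tau$ takes integer values). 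The honesty of $\tau$ enters because it guarantees that on $\{\tau < n\}$ the set is ${\cal F}_{n-1}$-measurable up to the relevant conditioning, which is what makes Lemma \ref{lem:OPunderGFafter} applicable; strictly speaking honesty is what legitimizes the decomposition formulas in that lemma.

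Next I would apply Lemma \ref{lem:OPunderGFafter}-(b): since $\Delta M_n$ is ${\cal F}_n$-measurable and $\{\tau<n\} = \{\tau \leq n-1\} \in {\cal G}_{n-1}$, the displayed identity \eqref{eq:OPunderGFafter} gives
\beq
E\big[\Delta M_n \mathds{1}_{\{\tau<n\}} \big\vert {\cal G}_{n-1}\big] = \frac{\mathds{1}_{\{\tau<n\}}}{1-Z_{n-1}} E\big[\Delta M_n (1-\widetilde{Z}_n) \big\vert {\cal F}_{n-1}\big]. \nonumber
\eeq
On the other hand, the second term of $\Delta \widehat{M}^{(a)}_n$ is the product of $\mathds{1}_{\{\tau<n\}}$ (which is ${\cal G}_{n-1}$-measurable) and the ${\cal F}_{n-1}$-measurable quantity $\frac{1}{1-Z_{n-1}} E[\Delta M_n(1-\widetilde{Z}_n)\mid {\cal F}_{n-1}]$, so it equals its own ${\cal G}_{n-1}$-conditional expectation. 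Subtracting, $E[\Delta \widehat{M}^{(a)}_n \mid {\cal G}_{n-1}] = 0$. A subtlety to address is that $1-Z_{n-1}$ may vanish; but on $\{\tau<n\}$ Lemma 4.1(c) gives $Z_{n-1} < 1$, so the factor $\mathds{1}_{\{\tau<n\}}/(1-Z_{n-1})$ is well-defined (one interprets $0/0 := 0$ off that set), and similarly $\Delta D^{o,\mathbb{F}}_n\mathds{1}_{\{\widetilde Z_n=1\}} = 0$-type identities handle the boundary.

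Finally I would check integrability: each $\Delta M_k$ is integrable, $|1-\widetilde{Z}_k|\leq 1$, and on $\{\tau<k\}$ one has $1/(1-Z_{k-1})$ appearing only against the conditional expectation — more carefully, $\frac{\mathds{1}_{\{\tau<k\}}}{1-Z_{k-1}}E[|\Delta M_k|(1-\widetilde{Z}_k)\mid{\cal F}_{k-1}]$ has finite expectation because, taking ${\cal F}_{k-1}$-conditional expectation, $E[\mathds{1}_{\{\tau<k\}}\mid{\cal F}_{k-1}] = 1-Z_{k-1}$ cancels the denominator, leaving $E[|\Delta M_k|(1-\widetilde Z_k)]<\infty$; summing over the finitely many $k\leq n\leq N$ keeps everything integrable. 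I expect the main obstacle to be purely bookkeeping: getting the index shifts in $M_{n\vee\tau}-M_\tau$ exactly right so that its increment is genuinely $\Delta M_n\mathds{1}_{\{\tau<n\}}$ rather than $\Delta M_n\mathds{1}_{\{\tau\leq n\}}$, and carefully justifying that on the complement $\{\tau\geq n\}$ both terms of $\Delta\widehat M^{(a)}_n$ are identically zero so there is nothing to prove there. Once those measurability and indexing details are pinned down, the computation is a direct application of Lemma \ref{lem:OPunderGFafter}, exactly parallel to Theorem \ref{theo:GFmartau}.
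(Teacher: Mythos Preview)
Your proposal is correct and follows essentially the same route as the paper: both compute $E[\Delta \widehat{M}^{(a)}_n\mid{\cal G}_{n-1}]$ directly by identifying the increment as $\Delta M_n\mathds{1}_{\{\tau<n\}}$ minus the ${\cal G}_{n-1}$-measurable compensator and then apply Lemma~\ref{lem:OPunderGFafter}-(b) to see the two terms cancel. Your additional care about the index shift in $M_{n\vee\tau}-M_\tau$, the well-definedness of $1/(1-Z_{n-1})$ on $\{\tau<n\}$, and integrability is more explicit than the paper's terse version, but the argument is the same.
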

\begin{proof}
  Although it can be derived from Jeulin \cite{Jeu}, we opt to give a direct proof here. To this end, by using Lemma \ref{lem:OPunderGFafter}-(b),  we calculate
  \beq
  E\Big[\Delta \widehat{M}^{(a)}_{n} \Big| \cg_{n-1}\Big] &=& E\Big[\Delta M_{n}\mathds{1}_{\{\tau <n\}} -   \frac{1}{1 - Z_{n-1}}\mathds{1}_{\{\tau < n\}} E\left[\Delta M_n  (1 - \widetilde{Z}_n) | \cf_{n-1}\right] \Big| \cg_{n-1}\Big]\nonumber \\
  &=&  E\Big[\Delta M_{n}\mathds{1}_{\{\tau <n\}}  \Big| \cg_{n-1}\Big] -   \frac{1}{1 - Z_{n-1}}\mathds{1}_{\{\tau < n\}}E\left[   \Delta M_n  (1 - \widetilde{Z}_n) | \cf_{n-1} \right] \nonumber \\
  &=& 0. \nonumber
  \eeq
  This ends the proof of theorem.
 \end{proof}

The following proposition is constructing a $\mathbb G$-martingale density for  a class of $\mathbb{G}$-semi-martingales.
\begin{proposition}\label{prop:densityafter}
The following process
 \beq\label{eq:mardensityafter}
  \widehat{N}^{(a)}_n:= -\sum_{1\leq k\leq n} \mathds{1}_{\{\tau <k\}}E[\mathds{1}_{\{\widetilde{Z}_{k} <1\}} | {\cal F}_{k-1}] + \sum_{1\leq k\leq n} \frac{1 - Z_{k-1}}{1 - \widetilde{Z}_k}\mathds{1}_{\{\tau < k\}}
  \eeq
  is a $\mbg$-martingale such that $1 + \Delta \widehat{N}^{(a)}_n >0$.
\end{proposition}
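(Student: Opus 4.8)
The plan is to follow the exact template already used for the ``before $\tau$'' counterpart in Proposition~\ref{prop:density}, transposed to the interval $\Rbrack\tau,+\infty\Lbrack$ via the honest-time version of Jeulin's decomposition formula (Lemma~\ref{lem:OPunderGFafter}). There are three things to verify: that $\widehat{N}^{(a)}$ is a $\mbg$-martingale, that it is integrable, and that $1+\Delta\widehat{N}^{(a)}_n>0$.

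First I would establish the martingale property. Write $\widehat{N}^{(a)}_n = -B_n + C_n$ with $\Delta B_n = \mathds{1}_{\{\tau<n\}}E[\mathds{1}_{\{\widetilde Z_n<1\}}\mid{\cal F}_{n-1}]$ and $\Delta C_n = \frac{1-Z_{n-1}}{1-\widetilde Z_n}\mathds{1}_{\{\tau<n\}}$. Both increments are supported on $\{\tau<n\}=\{\tau\le n-1\}$, hence $\cg_{n-1}$-measurable indicators times ${\cal F}_{n-1}$- resp.\ ${\cal F}_n$-measurable quantities; note the ratio $\frac{1-Z_{n-1}}{1-\widetilde Z_n}$ is well defined there because by Lemma~\ref{DIStheincluseion111}-type reasoning (applied to $1-Z$, $1-\widetilde Z$) $\{\tau<n\}\subset\{\widetilde Z_n<1\}\subset\{Z_{n-1}<1\}$, so $1-\widetilde Z_n>0$ on the relevant set. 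Then apply Lemma~\ref{lem:OPunderGFafter}-(b) with $Y = \frac{1}{1-\widetilde Z_n}$ (which is ${\cal F}_n$-measurable on $\{\widetilde Z_n<1\}$; extend by $0$ off that set, which is harmless since the indicator $\mathds{1}_{\{\tau<n\}}$ already restricts to $\{\widetilde Z_n<1\}$) to get
\[
E\Big[\tfrac{1-Z_{n-1}}{1-\widetilde Z_n}\mathds{1}_{\{\tau<n\}}\,\Big|\,\cg_{n-1}\Big]
= \mathds{1}_{\{\tau<n\}}\,\tfrac{1}{1-Z_{n-1}}\,E\big[(1-\widetilde Z_n)\tfrac{1}{1-\widetilde Z_n}\mathds{1}_{\{\widetilde Z_n<1\}}\,\big|\,{\cal F}_{n-1}\big]
= \mathds{1}_{\{\tau<n\}}E[\mathds{1}_{\{\widetilde Z_n<1\}}\mid{\cal F}_{n-1}],
\]
which exactly cancels $E[\Delta B_n\mid\cg_{n-1}] = \mathds{1}_{\{\tau<n\}}E[\mathds{1}_{\{\widetilde Z_n<1\}}\mid{\cal F}_{n-1}]$ (the indicator $\mathds{1}_{\{\tau<n\}}$ being $\cg_{n-1}$-measurable). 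Hence $E[\Delta\widehat{N}^{(a)}_n\mid\cg_{n-1}]=0$.

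For integrability, bound $E[|\widehat N^{(a)}_n|]\le n + \sum_{k\le n}E\big[\frac{1-Z_{k-1}}{1-\widetilde Z_k}\mathds{1}_{\{\tau<k\}}\big]$ and use the tower property plus the computation just done (or directly $E[\frac{1-Z_{k-1}}{1-\widetilde Z_k}\mathds{1}_{\{\tau<k\}}] = E[(1-Z_{k-1})\mathds{1}_{\{\widetilde Z_k<1\}}]\le 1$), giving $E[|\widehat N^{(a)}_n|]\le 2n<\infty$. Finally, for positivity of $1+\Delta\widehat N^{(a)}_n$: on $\{\tau\ge n\}$ both increments vanish and the quantity equals $1$; on $\{\tau<n\}$ it equals $1 - E[\mathds{1}_{\{\widetilde Z_n<1\}}\mid{\cal F}_{n-1}] + \frac{1-Z_{n-1}}{1-\widetilde Z_n} \ge \frac{1-Z_{n-1}}{1-\widetilde Z_n}>0$ since $E[\mathds{1}_{\{\widetilde Z_n<1\}}\mid{\cal F}_{n-1}]\le 1$ and $1-\widetilde Z_n>0$, $1-Z_{n-1}>0$ there. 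I do not anticipate a genuine obstacle — this is the mirror image of Proposition~\ref{prop:density} — the only point requiring care is justifying that the ratio $\frac{1-Z_{k-1}}{1-\widetilde Z_k}$ is finite precisely on the support $\{\tau<k\}$, which follows from the inclusion $\{\tau<n\}\subset\{\widetilde Z_n<1\}\subset\{Z_{n-1}<1\}$ proved in the preceding lemma for honest times.
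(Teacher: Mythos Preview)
Your proposal is correct and follows essentially the same route as the paper's own proof: both verify the $\mbg$-martingale property by applying Lemma~\ref{lem:OPunderGFafter}-(b) to compute $E[\Delta\widehat N^{(a)}_n\mid\cg_{n-1}]$ and cancel the two pieces, bound $E|\widehat N^{(a)}_n|\le 2n$ for integrability, and obtain $1+\Delta\widehat N^{(a)}_n\ge \mathds{1}_{\{\tau\ge n\}}+\tfrac{1-Z_{n-1}}{1-\widetilde Z_n}\mathds{1}_{\{\tau<n\}}>0$ for strict positivity. Your added justification that the ratio is well defined on $\{\tau<k\}$ via the inclusions $\{\tau<n\}\subset\{\widetilde Z_n<1\}\subset\{Z_{n-1}<1\}$ is a welcome bit of extra care the paper leaves implicit.
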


\begin{proof}
  First, we prove that $\widehat{N}^{(a)}$ is a $\mb G$-martingale. To this end, by using Lemma \ref{lem:OPunderGFafter}-(b), we calculate
  \bq
  E\left[\Delta {\widehat{N}}^{(a)}_{n} | {\cal G}_{n-1}\right] &=& - \mathds{1}_{\{\tau < n\}}E[\mathds{1}_{\{\widetilde{Z}_{n} <1\}} | {\cal F}_{n-1}] +  E\left[  \frac{1 - Z_{n-1}}{1 - \widetilde{Z}_n}\mathds{1}_{\{\tau < n\}} \Big| {\cal G}_{n-1}\right]   \\
  &=& - \mathds{1}_{\{\tau < n\}}E[\mathds{1}_{\{\widetilde{Z}_{n} <1\}} | {\cal F}_{n-1}] +   \mathds{1}_{\{\tau < n\}}E[\mathds{1}_{\{\widetilde{Z}_{n} <1\}} | {\cal F}_{n-1}] = 0.
  \eq 
  \noindent Next, we show that $1 + \Delta {\widehat{N}}^{(a)}_n >0$. Indeed
  \bq
  1 + \Delta {\widehat{N}}^{(a)}_n = 1  -\mathds{1}_{\{\tau < n\}}E[\mathds{1}_{\{\widetilde{Z}_{n} <1\}} | {\cal F}_{n-1}] + \frac{1 - Z_{n-1}}{1 - \widetilde{Z}_n} \mathds{1}_{\{\tau < n\}} \geq  \mathds{1}_{\{\tau \geq  n\}} + \frac{1 - Z_{n-1}}{1 - \widetilde{Z}_n} \mathds{1}_{\{\tau < n\}} >0.
  \eq
  The integrability of ${\widehat{N}}^{(a)}$ follows from the fact that $E\left\vert {\widehat{N}}^{(a)}_n\right\vert \leq 2n$.   This completes the proof of the proposition.
 \end{proof}
Below, we state the first main theorem of this section.
\begin{theorem}\label{theo:maintheoaftertau}
  Consider an honest time $\tau$  and an $\mathbb F$-martingale $S$. Denote the probability measure $\mathbb{Q}^{(a)} \sim \mathbb{P}$ with density $D^{(a)}_n:={\cal E}(Y^{(a)})_n$ where
  \beq\label{eq:crucialmeasureQaftertau}
  \Delta Y^{(a)}_n := (1 - \widetilde{Z}_{n} )\mathds{1}_{\{Z_{n-1} <1\}}E\left[\mathds{1}_{\{\widetilde{Z}_{n} =1\}} | {\cal F}_{n-1}\right] - (1-Z_{n-1})\mathds{1}_{\{\widetilde{Z}_{n} =1\}}, \ Y^{(a)}_0 = 0.
  \eeq
  Then the following are equivalent:\\
  $\rm (a)$ $S$ is a $(\mathbb{Q}^{(a)}, \mathbb F)$-martingale;\\
  $\rm (b)$ $S$ is orthogonal to $D^{(a)}$ and $Y^{(a)}$;\\
  $\rm (c)$ ${\cal E}(\widehat{N}^{(a)})_n (S_n - S_{n\wedge \tau})$ is a $\mathbb G$-martingale, where $\widehat{N}^{(a)}$ is given by (\ref{eq:mardensityafter}).\\
  As a consequence, all the above three equivalent conditions imply \\
  $\rm (d)$ $S- S^\tau$ satisfies NA$(\mathbb G, \mathbb{P})$ and NA$(\mathbb G, \mathbb{Q}^{(a)})$.
\end{theorem}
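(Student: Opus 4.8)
The plan is to mirror the structure of the proof of Theorem \ref{theo:maintheotau}, replacing the stopping of $S$ at $\tau$ by taking the ``after $\tau$'' part $S-S^\tau$, and replacing the roles of $Z_{n-1}$, $\widetilde{Z}_n$ by $1-Z_{n-1}$, $1-\widetilde{Z}_n$ throughout. First I would record, as in the opening paragraph of the proof of Theorem \ref{theo:maintheotau}, that $\mathbb{Q}^{(a)}$ is well defined and equivalent to $\mathbb{P}$: one checks that $Y^{(a)}$ is an $\mathbb{F}$-martingale (expanding $E[\Delta Y^{(a)}_n\mid{\cal F}_{n-1}]=0$ using $E[\widetilde{Z}_n\mid{\cal F}_{n-1}]\geq E[Z_n\mid{\cal F}_{n-1}]$ and the definitions), and that
$$
1+\Delta Y^{(a)}_n = (1-\widetilde{Z}_n)\mathds{1}_{\{Z_{n-1}<1\}}E\!\left[\mathds{1}_{\{\widetilde{Z}_n=1\}}\mid{\cal F}_{n-1}\right] + \mathds{1}_{\{\widetilde{Z}_n<1\}} + Z_{n-1}\mathds{1}_{\{\widetilde{Z}_n=1\}} > 0,
$$
where positivity on $\{\widetilde{Z}_n=1\}$ uses the inclusion $\{\widetilde{Z}_n=1\}\subset\{Z_{n-1}>0\}$ (dual to $\{Z_{n-1}=0\}\subset\{\widetilde{Z}_n=0\}$ from Lemma 2.1(a), or rather its honest-time counterpart). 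Hence $D^{(a)}={\cal E}(Y^{(a)})$ is a strictly positive $\mathbb{F}$-martingale.

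The equivalence (a)$\Leftrightarrow$(b) is immediate from the definition of $D^{(a)}={\cal E}(Y^{(a)})$ and the fact that orthogonality of $S$ to $Y^{(a)}$ is equivalent to $E^{\mathbb{Q}^{(a)}}[\Delta S_n\mid{\cal F}_{n-1}]=E[\Delta S_n\mid{\cal F}_{n-1}]$, which equals $0$ since $S$ is already an $(\mathbb{F},\mathbb{P})$-martingale. For (a)$\Leftrightarrow$(c) I would compute $E[{\cal E}(\widehat{N}^{(a)})_{n+1}(S_{n+1}-S_{(n+1)\wedge\tau})\mid{\cal G}_{n+1}]$ step by step, exactly as in Theorem \ref{theo:maintheotau}: pull out ${\cal E}(\widehat{N}^{(a)})_n$, write $S_{n+1}-S_{(n+1)\wedge\tau} = (S_n-S_{n\wedge\tau}) + \Delta S_{n+1}\mathds{1}_{\{\tau<n+1\}}$, expand the product $(1+\Delta\widehat{N}^{(a)}_{n+1})$ times this, and use Lemma \ref{lem:OPunderGFafter}(b) in the forms
$$
E\!\left[\frac{\Delta S_k}{1-\widetilde{Z}_k}\mathds{1}_{\{\tau<k\}}\,\Big|\,{\cal G}_{k-1}\right] = \frac{\mathds{1}_{\{\tau<k\}}}{1-Z_{k-1}}E\!\left[\Delta S_k\mathds{1}_{\{\widetilde{Z}_k<1\}}\mid{\cal F}_{k-1}\right],\qquad
E\!\left[\Delta S_k\mathds{1}_{\{\tau<k\}}\mid{\cal G}_{k-1}\right] = \frac{\mathds{1}_{\{\tau<k\}}}{1-Z_{k-1}}E\!\left[\Delta S_k(1-\widetilde{Z}_k)\mid{\cal F}_{k-1}\right].
$$
Collecting terms, the cross-term and the compensator term combine to
$$
{\cal E}(\widehat{N}^{(a)})_n E\!\left[\Delta S_{n+1}\Big\{(1-\widetilde{Z}_{n+1})E[\mathds{1}_{\{\widetilde{Z}_{n+1}=1\}}\mid{\cal F}_n] - (1-Z_n)\mathds{1}_{\{\widetilde{Z}_{n+1}=1\}}\Big\}\,\Big|\,{\cal F}_n\right]\frac{\mathds{1}_{\{\tau<n+1\}}}{1-Z_n},
$$
which is precisely ${\cal E}(\widehat{N}^{(a)})_n E^{\mathbb{Q}^{(a)}}[\Delta S_{n+1}\mid{\cal F}_n]\,\mathds{1}_{\{\tau<n+1\}}/(1-Z_n)$. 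Thus (a)$\Rightarrow$(c); conversely (c) gives $E^{\mathbb{Q}^{(a)}}[\Delta S_{n+1}\mid{\cal F}_n]\mathds{1}_{\{Z_n<1\}}=0$, and on $\{Z_n=1\}$ one has $\Delta S_{n+1}(1-\widetilde{Z}_{n+1})$-weighted expectation zero automatically; combined with $S$ being a $\mathbb{P}$-martingale this forces $E^{\mathbb{Q}^{(a)}}[\Delta S_{n+1}\mid{\cal F}_n]=0$ for all $n$, i.e. (a). Finally, (d) follows because under (c) the process ${\cal E}(\widehat{N}^{(a)})$ is a strictly positive $\mathbb{G}$-martingale (Proposition \ref{prop:densityafter}) serving as a density for a measure under which $S-S^\tau$ is a $\mathbb{G}$-martingale, hence $S-S^\tau$ satisfies NA$(\mathbb{G},\mathbb{P})$ and, multiplying densities, NA$(\mathbb{G},\mathbb{Q}^{(a)})$.

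The main obstacle I expect is the bookkeeping in the (a)$\Leftrightarrow$(c) computation: one must be careful that the indicator $\mathds{1}_{\{\tau<n+1\}}$ is ${\cal G}_n$-measurable and survives the outer conditioning, that the ${\cal F}_n$-measurable factor $1/(1-Z_n)$ is only used on $\{Z_n<1\}\supset\{\tau<n+1\}$ (so no division by zero occurs), and that the terms $E[\mathds{1}_{\{\widetilde{Z}_{n+1}<1\}}\mid{\cal F}_n]$ coming from $\Delta\widehat{N}^{(a)}_{n+1}$ recombine correctly with the $\frac{1-Z_n}{1-\widetilde{Z}_{n+1}}$-term after applying Lemma \ref{lem:OPunderGFafter}(b). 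This is entirely parallel to the before-$\tau$ case but with the honest-time hypothesis ensuring Lemma \ref{lem:OPunderGFafter} applies; everything else is routine conditional-expectation manipulation.
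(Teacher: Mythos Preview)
Your proposal is correct and follows essentially the same route as the paper's proof: first verifying $1+\Delta Y^{(a)}_n>0$ via the inclusion $\{\widetilde Z_n=1\}\subset\{Z_{n-1}>0\}$, then handling (a)$\Leftrightarrow$(b) as obvious, and carrying out the same conditional-expectation computation for (a)$\Leftrightarrow$(c) using Lemma~\ref{lem:OPunderGFafter}(b) to reduce the $\mathbb G$-conditional expectation of ${\cal E}(\widehat N^{(a)})_{n+1}(S_{n+1}-S_{(n+1)\wedge\tau})$ to ${\cal E}(\widehat N^{(a)})_n(S_n-S_{n\wedge\tau})$ plus the term $E^{\mathbb Q^{(a)}}[\Delta S_{n+1}\mid{\cal F}_n]\,\mathds{1}_{\{\tau<n+1\}}/(1-Z_n)$. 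One small remark: your parenthetical justification for $E[\Delta Y^{(a)}_n\mid{\cal F}_{n-1}]=0$ should invoke the identity $E[\widetilde Z_n\mid{\cal F}_{n-1}]=Z_{n-1}$ rather than an inequality, but this does not affect the argument.
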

\begin{proof}
First, we remark that  $Y^{(a)}$ is an $\mathbb F$-martingale and $1 + \Delta Y^{(a)} > 0$. Indeed,  $$
1 + \Delta Y^{(a)}_n = (1 - \widetilde{Z}_n)\mathds{1}_{\{Z_{n-1}<1\}} E\left[\mathds{1}_{\{\widetilde{Z}_n=1\}} | {\cal F}_{n-1}\right] + \mathds{1}_{\{\widetilde{Z}_n<1\}} + Z_{n-1} \mathds{1}_{\{\widetilde{Z}_n=1\}} > 0,
$$
where we used the fact that on the set $\{\widetilde{Z}_n<1\}$, $1 + \Delta Y^{(a)}_n \geq 1$ and the inclusion  $\{\widetilde{Z}_n=1\} \subset \{Z_{n-1} >0\}$, since $\{Z_{n-1} =0\}\subset \{\widetilde{Z}_n=0\} $.  Therefore, $D^{(a)}$ is a strictly positive martingale.

\noindent The equivalence between (a) and (b) is obvious. In the following, we are trying to prove the equivalence between (a) and (c).  Recall that
  \beq
  {\widehat{N}}^{(a)}_n = -\sum_{1\leq k\leq n} \mathds{1}_{\{\tau < k\}}E[\mathds{1}_{\{\widetilde{Z}_{k} <1\}} | {\cal F}_{k-1}] + \sum_{1\leq k\leq n} \frac{1- Z_{k-1}}{1- \widetilde{Z}_k}\mathds{1}_{\{\tau < k\}}.
  \eeq
   Due to Lemma \ref{lem:OPunderGF} , we deduce that
  \bq
  E\Big[\frac{\Delta S_k }{1 - \widetilde{Z}_k}  \mathds{1}_{\{\tau < k\}} | {\cal G}_{k-1}\Big] &=&  \frac{\mathds{1}_{\{\tau < k\}}}{1- Z_{k-1}}E\Big[\Delta S_k \mathds{1}_{\{\widetilde{Z}_k <1\}}  | {\cal F}_{k-1}\Big],\nonumber \\
  E\Big[ \Delta S_k  \mathds{1}_{\{\tau < k\}} | {\cal G}_{k-1}\Big] &=& \frac{\mathds{1}_{\{\tau < k\}}}{1- Z_{k-1}}E\Big[ \Delta S_k  (1-\widetilde{Z}_k) | {\cal F}_{k-1}\Big].
  \eq
  Since  $  S_{n+1} - S_{(n+1)\wedge \tau} = S_{n} - S_{n\wedge \tau}  + \Delta S_{n+1}\mathds{1}_{\{n+1>\tau\}}$, we derive  that
  \bq
  &&E\left[{\cal E}({\widehat{N}}^{(a)})_{n+1} S_{(n+1)\wedge \tau} | {\cal G}_{n+1} \right] \\
  && ={\cal E}({\widehat{N}}^{(a)})_n E\Big[ (1+\Delta {\widehat{N}}^{(a)}_{n+1}) S_{(n+1)\wedge \tau} | {\cal G}_{n} \Big] \\
  && = {\cal E}({\widehat{N}}^{(a)})_n E\Big[ S_n - S_{n\wedge \tau} + \Delta S_{n+1}\mathds{1}_{\{n+1> \tau\}}  + \Delta S_{n+1} \Delta {\widehat{N}}^\mbg_{n+1} \mathds{1}_{\{n+1 > \tau\}}| {\cal G}_{n} \Big] \\
  &&={\cal E}({\widehat{N}}^{(a)})_n \left\{ S_n - S_{n\wedge \tau} + E\Big[ \Delta S_{n+1}(1 -\widetilde{Z}_{n+1})E\left[\mathds{1}_{\{\widetilde{Z}_{n+1}=1 \}}| {\cal F}_{n}\right] | {\cal F}_n \Big]\frac{\mathds{1}_{\{n+1> \tau\}}}{1 - Z_n}  \right\}\\
  && - {\cal E}({\widehat{N}}^{(a)})_n \left\{ \mathds{1}_{\{n+1> \tau\}} E\Big[ \Delta S_{n+1} \mathds{1}_{\{\widetilde{Z}_{n+1}=1\}} | {\cal F}_n \Big]\right\}\\
  &&= {\cal E}({\widehat{N}}^{(a)})_n  \left( S_n - S_{n\wedge \tau} \right) \\
  && + {\cal E}({\widehat{N}}^{(a)})_n \left\{E\left[\Delta S_{n+1}\left\{(1 - \widetilde{Z}_{n+1} )E\left[\mathds{1}_{\{\widetilde{Z}_{n+1} =1\}} | {\cal F}_n\right] - (1-Z_n)\mathds{1}_{\{\widetilde{Z}_{n+1} =1\}} \right\} | {\cal F}_n\right]\right\}\frac{\mathds{1}_{\{n+1> \tau\}}}{1 - Z_n}\\
  &&= {\cal E}({\widehat{N}}^{(a)})_n  (S_n - S_{n\wedge \tau}) +   {\cal E}({\widehat{N}}^{(a)})_n E^{\mathbb{Q}^{(a)}}\left[\Delta S_{n+1} | {\cal F}_n\right]\frac{\mathds{1}_{\{n+1> \tau\}}}{1 - Z_n}.
  \eq
  Therefore, (a) implies (c).  Conversely, if (c) holds, we have $$E^{\mathbb{Q}^{(a)}} \left[\Delta S_{n+1} | {\cal F}_n\right]\frac{\mathds{1}_{\{n+1 > \tau\}}}{1 - Z_n} = 0, \ \mbox{and }  \ E^{\mathbb{Q}^{(a)}} \left[\Delta S_{n+1} | {\cal F}_n\right]\mathds{1}_{\{Z_n<1\}} = 0.$$ Notice that $$E^{\mathbb{Q}^{(a)}}\left[\Delta S_{n+1} | {\cal F}_n\right]\mathds{1}_{\{Z_n=1\}} = 0, \mbox{ for all $n$ }.$$ Thus, we conclude that $E^{\mathbb{Q}^{(a)}} \left[\Delta S_{n+1} | {\cal F}_n\right] = 0$, for all $n$.
  This ends the proof of the theorem.
   \end{proof}

\begin{remark}
  We observe from Theorem  \ref{theo:maintheoaftertau} that even though $Y^{(a)}$ is an $\mathbb F$-martingale, the  process $Y^{(a)}_n - Y^{(a)}_{n\wedge \tau} =  \sum_{k\leq n}(1 - \widetilde{Z}_{k} )E\left[\mathds{1}_{\{\widetilde{Z}_{k} =1\}} | {\cal F}_k\right] \mathds{1}_{\{k>\tau\}}$ fails  NA$(\mathbb G)$ since it is a $\mathbb G$-increasing process.
\end{remark}
%

\begin{corollary}\label{theo:naafterhonest}
  For any $\mbf$-martingale $M$, if for all $n$
  \beq\label{eq:crucialassumptionhonest}
  \{\widetilde{Z}_n = 1\} = \{Z_{n-1} = 1\}.
  \eeq
  Then the following properties hold:\\
  $\rm (a) $ The process $M_{n} - M_{n\wedge \tau}$ satisfies NA$(\mbg)$;\\
  $\rm (b) $  $\left({\cal E}(\widehat{N}^{(a)})_n \left(M_{n} - M_{n\wedge \tau}\right)\right)_{n\geq 1}$ is a $\mbg$-martingale, where $\widehat{N}^{(a)}$ is given by (\ref{eq:mardensityafter}) in Proposition \ref{prop:densityafter};\\
  $\rm (c) $ The probability measure $\mathbb{Q}^{(a)}$,  given in (\ref{eq:crucialmeasureQaftertau}), coincides with $\mathbb{P}$.
\end{corollary}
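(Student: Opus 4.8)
The plan is to reduce the three claims to Theorem~\ref{theo:maintheoaftertau} applied with $S=M$; throughout, $\tau$ is understood to be honest so that $\widehat{N}^{(a)}$ in (\ref{eq:mardensityafter}) is well defined. First I would show that the hypothesis (\ref{eq:crucialassumptionhonest}) forces the auxiliary $\mathbb{F}$-martingale $Y^{(a)}$ of (\ref{eq:crucialmeasureQaftertau}) to vanish identically. The key observation is that the inclusion $\{Z_{n-1}=1\}\subset\{\widetilde{Z}_n=1\}$ always holds (since $E[(1-\widetilde{Z}_n)\mathds{1}_{\{Z_{n-1}=1\}}]=E[(1-Z_{n-1})\mathds{1}_{\{Z_{n-1}=1\}}]=0$), so the real content of (\ref{eq:crucialassumptionhonest}) is the reverse inclusion, which makes the set $\{\widetilde{Z}_n=1\}=\{Z_{n-1}=1\}$ $\mathcal{F}_{n-1}$-measurable. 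Consequently $E[\mathds{1}_{\{\widetilde{Z}_n=1\}}\mid\mathcal{F}_{n-1}]=\mathds{1}_{\{Z_{n-1}=1\}}$, so the first term of $\Delta Y^{(a)}_n$ equals $(1-\widetilde{Z}_n)\mathds{1}_{\{Z_{n-1}<1\}}\mathds{1}_{\{Z_{n-1}=1\}}=0$, while the second term $(1-Z_{n-1})\mathds{1}_{\{\widetilde{Z}_n=1\}}$ also vanishes because $1-Z_{n-1}=0$ on $\{\widetilde{Z}_n=1\}=\{Z_{n-1}=1\}$. Hence $\Delta Y^{(a)}_n=0$ for all $n$, so $Y^{(a)}\equiv 0$, $D^{(a)}=\mathcal{E}(Y^{(a)})\equiv 1$, and $\mathbb{Q}^{(a)}=\mathbb{P}$; this is exactly part~(c).

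With part~(c) in hand I would then invoke Theorem~\ref{theo:maintheoaftertau} with $S=M$. Since $\mathbb{Q}^{(a)}=\mathbb{P}$ and $M$ is an $(\mathbb{F},\mathbb{P})$-martingale by assumption, condition~(a) of that theorem holds. The equivalence (a)$\Leftrightarrow$(c) there yields that $\mathcal{E}(\widehat{N}^{(a)})_n(M_n-M_{n\wedge\tau})$ is a $\mathbb{G}$-martingale, which is part~(b); and the implication (a)$\Rightarrow$(d) gives that $M-M^\tau$ satisfies NA$(\mathbb{G},\mathbb{P})$, which is part~(a). Alternatively part~(a) can be read off directly from part~(b): $\mathcal{E}(\widehat{N}^{(a)})$ is a strictly positive $\mathbb{G}$-martingale by Proposition~\ref{prop:densityafter}, hence the density process of a probability equivalent to $\mathbb{P}$ under which $M-M^\tau$ is a $\mathbb{G}$-martingale, which excludes $\mathbb{G}$-arbitrage.

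I do not expect a genuine obstacle: this is the ``after $\tau$'' mirror of Corollary~\ref{theo:cruialthereomforMuptotau}, and once Theorem~\ref{theo:maintheoaftertau} is available the argument is purely structural. The only step requiring care is the indicator bookkeeping in the first paragraph---in particular noticing that (\ref{eq:crucialassumptionhonest}) upgrades $\{\widetilde{Z}_n=1\}$ to an $\mathcal{F}_{n-1}$-set, which is precisely what collapses the conditional expectation inside the first term of $\Delta Y^{(a)}_n$---together with keeping straight which half of the set equality in (\ref{eq:crucialassumptionhonest}) is automatic and which is the hypothesis actually being used.
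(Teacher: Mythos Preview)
Your proposal is correct and is precisely the intended argument: the paper states this corollary without proof, leaving it as an immediate consequence of Theorem~\ref{theo:maintheoaftertau} (just as Corollary~\ref{theo:cruialthereomforMuptotau} follows from Theorem~\ref{theo:maintheotau}), and your indicator computation showing $Y^{(a)}\equiv 0$ under (\ref{eq:crucialassumptionhonest}) is exactly the step the paper later spells out in Step~3 of the proof of the subsequent theorem.
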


Below, we state our second main theorem in this section,   where we give the necessary and sufficient conditions that imposed on the random time $\tau$ (or the stopping times in (\ref{eq:crucialstoppingtimehonest})) to guarantee the   process $M - M^\tau$ satisfies NA($\mbg$) for any $\mbf$-martingale $M$.

\begin{theorem}
Consider an honest  time $\tau$ and the associated stopping times defined in (\ref{eq:crucialstoppingtimehonest}). Then the following are equivalent:\\
$\rm (a)$ For any $\mbf$-martingale $M$, the  process $M_{n} - M_{n\wedge \tau}$ satisfies NA$(\mbg)$.\\
$\rm (b)$ $ \{\widetilde{Z}_n = 1\} =\{Z_{n-1} = 1\}$ for all $n$. \\
$\rm (c)$ $\sigma_1 + 1=  \sigma_2 =\sigma_3$.\\
$\rm (d)$ $\sigma_3$ is an $\mathbb{F}$-predictable stopping time.\\
$\rm (e)$ The probability $\mathbb{Q}^{(a)}$, defined via (\ref{eq:crucialmeasureQaftertau}), coincides with $\mathbb{P}$.
\end{theorem}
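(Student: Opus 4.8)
The plan is to follow the four-step scheme of the proof of Theorem~\ref{theo:main2before}, with the sets $\{Z_{\cdot-1}=0\}$, $\{\widetilde{Z}=0\}$ and the times $R_1,R_2,R_3$ there replaced, respectively, by the ``saturation'' sets $\{Z_{\cdot-1}=1\}$, $\{\widetilde{Z}=1\}$ and the times $\sigma_1,\sigma_2,\sigma_3$ of~(\ref{eq:crucialstoppingtimehonest}). Concretely, I would establish (b)$\Leftrightarrow$(e), then (b)$\Leftrightarrow$(c)$\Leftrightarrow$(d), then (b)$\Rightarrow$(a), and finally (a)$\Rightarrow$(b); together these yield all five equivalences.

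The equivalences among (b), (c), (d), (e) form the ``bookkeeping'' part. For (b)$\Leftrightarrow$(e): if (b) holds, then in~(\ref{eq:crucialmeasureQaftertau}) the factor $E[\mathds{1}_{\{\widetilde{Z}_n=1\}}\mid{\cal F}_{n-1}]=\mathds{1}_{\{Z_{n-1}=1\}}$ vanishes on $\{Z_{n-1}<1\}$, killing the first term of $\Delta Y^{(a)}_n$, while $1-Z_{n-1}=0$ on $\{\widetilde{Z}_n=1\}$ kills the second, so $Y^{(a)}\equiv0$ and $\mathbb{Q}^{(a)}=\mathbb{P}$; conversely $\mathbb{Q}^{(a)}=\mathbb{P}$ forces $\Delta Y^{(a)}_n\equiv0$, and restricting~(\ref{eq:crucialmeasureQaftertau}) to $\{\widetilde{Z}_n=1\}$ (where the first term is automatically $0$) gives $(1-Z_{n-1})\mathds{1}_{\{\widetilde{Z}_n=1\}}=0$, i.e.\ $\{\widetilde{Z}_n=1\}\subset\{Z_{n-1}=1\}$, which together with the inclusion of the lemma following~(\ref{eq:crucialstoppingtimehonest}) is~(b). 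For (b)$\Leftrightarrow$(c)$\Leftrightarrow$(d) I would proceed exactly as in Steps~1--2 of the proof of Theorem~\ref{theo:main2before}: (b) yields $\sigma_2=\sigma_3$ at once from the definitions, and $\sigma_1+1=\sigma_2$ holds in general because $Z_0=1$ forces $\{\sigma_2=n\}=\{\sigma_1=n-1\}$; $\sigma_2$ (hence $\sigma_3$ under~(c)) is $\mbf$-predictable since $\{\sigma_2=n\}\in{\cal F}_{n-1}$; and the predictability of $\sigma_3$, combined with $Z_{n-1}=E[\widetilde{Z}_n\mid{\cal F}_{n-1}]$ (which gives $Z_{n-1}=1$ on $\{\sigma_3>n\}$) and an optional-sampling computation at the predictable time $\sigma_3$, feeds back into $\{\widetilde{Z}_n=1\}\subset\{Z_{n-1}=1\}$, i.e.\ ~(b).

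The implication (b)$\Rightarrow$(a) is immediate: (b) is precisely condition~(\ref{eq:crucialassumptionhonest}), so Corollary~\ref{theo:naafterhonest}(a) applies. The substance is (a)$\Rightarrow$(b), and here I would test (a) against the specific $\mbf$-martingale $M:=Y^{(a)}$ of~(\ref{eq:crucialmeasureQaftertau}) (it is an $\mbf$-martingale by the proof of Theorem~\ref{theo:maintheoaftertau}). On $\{\tau<k\}$ one has $Z_{k-1}<1$ and $\widetilde{Z}_k<1$ by the same lemma, so the second term of $\Delta Y^{(a)}_k$ drops out there, and (as in the Remark following Theorem~\ref{theo:maintheoaftertau})
\[
M_n-M_{n\wedge\tau}=\sum_{1\le k\le n}\mathds{1}_{\{\tau<k\}}\,(1-\widetilde{Z}_k)\,E\left[\mathds{1}_{\{\widetilde{Z}_k=1\}}\mid{\cal F}_{k-1}\right]
\]
is a $\mbg$-adapted \emph{increasing} process started at $0$. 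By~(a) it satisfies NA$(\mbg)$; testing with the deterministic (hence predictable) strategies $\mathds{1}_{\{k\le n\}}$ forces every increment to vanish, so $\mathds{1}_{\{\tau<k\}}(1-\widetilde{Z}_k)E[\mathds{1}_{\{\widetilde{Z}_k=1\}}\mid{\cal F}_{k-1}]=0$ a.s.\ for each $k$. Taking expectations and using $E[\mathds{1}_{\{\tau<k\}}\mid{\cal F}_k]=1-\widetilde{Z}_k$ together with the ${\cal F}_k$-measurability of $(1-\widetilde{Z}_k)E[\mathds{1}_{\{\widetilde{Z}_k=1\}}\mid{\cal F}_{k-1}]$ gives $E\big[(1-\widetilde{Z}_k)^2\,E[\mathds{1}_{\{\widetilde{Z}_k=1\}}\mid{\cal F}_{k-1}]\big]=0$, so that $P[\widetilde{Z}_k=1\mid{\cal F}_{k-1}]=0$ on $\{\widetilde{Z}_k<1\}$. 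A short measure-theoretic argument then shows that $\{\widetilde{Z}_k=1\}$ is ${\cal F}_{k-1}$-measurable, whence $Z_{k-1}=E[\widetilde{Z}_k\mid{\cal F}_{k-1}]=1$ on $\{\widetilde{Z}_k=1\}$; with the reverse inclusion from the lemma following~(\ref{eq:crucialstoppingtimehonest}) this is exactly~(b).

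I expect the delicate point to be precisely this implication (a)$\Rightarrow$(b): the right choice of test $\mbf$-martingale (the process $Y^{(a)}$ is the natural candidate because the Remark following Theorem~\ref{theo:maintheoaftertau} already exhibits $Y^{(a)}-(Y^{(a)})^{\tau}$ as a $\mbg$-increasing process, so NA$(\mbg)$ immediately collapses it to zero), and then the passage from the vanishing of the single expectation $E[(1-\widetilde{Z}_k)^2\,P[\widetilde{Z}_k=1\mid{\cal F}_{k-1}]]$ to the pointwise set-inclusion $\{\widetilde{Z}_k=1\}\subset\{Z_{k-1}=1\}$. The honest-time bookkeeping in Steps~1--2 also needs care: unlike $\{Z=0\}$ in the before-$\tau$ problem, the set $\{Z<1\}$ interacts subtly with $\{\widetilde{Z}<1\}$ and with the times $\sigma_i$, and it is the predictability of $\sigma_2$ (together with the honesty of $\tau$) that makes the chain close.
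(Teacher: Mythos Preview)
Your overall architecture matches the paper's exactly: the bookkeeping equivalences (b)$\Leftrightarrow$(c)$\Leftrightarrow$(d)$\Leftrightarrow$(e) are handled the same way, and (b)$\Rightarrow$(a) via Corollary~\ref{theo:naafterhonest} is the paper's route too. The only substantive difference is your choice of test martingale in (a)$\Rightarrow$(b).

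The paper does not use $Y^{(a)}$. It tests (a) against the simpler $\mathbb{F}$-martingale
\[
M_n=\sum_{1\le k\le n}\Bigl(\mathds{1}_{\{\widetilde{Z}_k=1\}}-E\bigl[\mathds{1}_{\{\widetilde{Z}_k=1\}}\mid{\cal F}_{k-1}\bigr]\Bigr),
\]
for which, since $\widetilde{Z}_k<1$ on $\{\tau<k\}$, one gets
\[
M_n-M_{n\wedge\tau}=-\sum_{1\le k\le n}E\bigl[\mathds{1}_{\{\widetilde{Z}_k=1\}}\mid{\cal F}_{k-1}\bigr]\mathds{1}_{\{\tau<k\}}.
\]
This is not merely $\mathbb{G}$-adapted increasing but $\mathbb{G}$-\emph{predictable} decreasing; NA$(\mathbb{G})$ forces it to vanish, and taking expectations yields directly $E\bigl[(1-Z_{k-1})\mathds{1}_{\{\widetilde{Z}_k=1\}}\bigr]=0$, i.e.\ $\{\widetilde{Z}_k=1\}\subset\{Z_{k-1}=1\}$. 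Your route via $Y^{(a)}$ is correct but costs an extra step: from $E\bigl[(1-\widetilde{Z}_k)^2\,P(\widetilde{Z}_k=1\mid{\cal F}_{k-1})\bigr]=0$ you must still argue that $\{\widetilde{Z}_k=1\}$ is ${\cal F}_{k-1}$-measurable before concluding. The paper's test martingale sidesteps that measure-theoretic detour and lands on the set inclusion in one line. Either argument is fine; the paper's is just more economical.
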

\begin{proof}
The proof of the theorem would be achieved after four steps. In the first step, we prove (b)$\Leftrightarrow$(c). The second step  focuses on (b)$\Leftrightarrow$(d). The third step deals with (b)$\Leftrightarrow$(e). In the last step, we prove (a) $\Leftrightarrow$ (b).  \\

\noindent Step 1:  The equivalence between (b) and (c) is obvious. Indeed, if (b) holds, it is trivial that $\sigma_2 = \sigma_3$. Conversely, if (c) holds, we derive that $$E\left( (1 - Z_{n-1}) I_{\{\widetilde{Z}_n = 1\}} \right) = E\left( (1 - Z_{n-1}) I_{\{\widetilde{Z}_n = 1\}} I_{\{n < \sigma_3\}} \right)=E\left( (1 -Z_{n-1}) I_{\{\widetilde{Z}_n = 1\}} I_{\{n < \sigma_2\}} \right) = 0.$$
  Hence, we conclude that $\{\widetilde{Z}_n = 1\} \subset \{Z_{n-1} = 1\}$ for all $n$.  \\

  \noindent Step 2: We prove (b)$\Leftrightarrow$(d). If (b) holds, it is easy to see that  $\sigma_3$ is an $\mathbb{F}$-predictable stopping time. Conversely, due to the predictability of $\sigma_3$ and $$ E\left[(1 - Z_{n-1}) I_{\{\widetilde{Z}_n=1\} }\right] = E\left[(1 - Z_{n-1})  I_{\{n<\sigma_3\}  }\right]=E\left[(1 - \widetilde{Z}_n)  I_{\{n<\sigma_3\}  }\right]=0,$$
  we conclude that $\{\widetilde{Z}_n = 1\} \subset \{Z_{n-1} =1\}$ for all $n$.\\

   \noindent Step 3:  We prove (b)$\Leftrightarrow$(e). If (b) holds, apparently, $Y^{(a)}=0$ and $\mathbb{Q}^{(a)}=\mathbb{P}$. Conversely, if (e) holds, $\Delta Y_n^{(a)} = 0$ for all $n$. Hence, $(1-\widetilde{Z}_{n})\mathds{1}_{\{Z_{n-1}<1\}} E\left[\mathds{1}_{\{\widetilde{Z}_{n} =1\}} | {\cal F}_{n-1}\right] = (1-Z_{n-1}) \mathds{1}_{\{\widetilde{Z}_{n}=1 \}}=0$ and $\{\widetilde{Z}_{n}=1 \} = \{{Z}_{n-1}=1 \}$ for all $n$. \\

  \noindent Step 4: We prove (a)$\Leftrightarrow$(b).    Suppose for any $\mbf$-martingale $M$, the stopped process $M^\tau$ satisfies NA($\mbg$). Consider the $\mathbb F$-martingale
  \bq
  M_n := \sum_{1\leq k\leq n} \left(\mathds{1}_{\{\widetilde{Z}_k =1 \}} - E\left[\mathds{1}_{\{\widetilde{Z}_k =1 \}} | {\cal F}_{k-1}\right]\right).
  \eq
  It is easy to see that $M_n - M_{n\wedge \tau} = -\sum_{1\leq k\leq n}  E\left[\mathds{1}_{\{\widetilde{Z}_k =1 \}} | {\cal F}_{k-1}\right]\mathds{1}_{\{\tau <k\}}$. Note that $M_n - M_{n\wedge \tau}$ is a $\mathbb G$ predictable decreasing process satisfying NA($\mathbb G$). Therefore it is null. Then, we deduce that
  \bq
  0&=&E\left[M_n - M_{n\wedge \tau}\right] = \sum_{1\leq k\leq n} E\left[  E\left[\mathds{1}_{\{\widetilde{Z}_k =1 \}} | {\cal F}_{k-1}\right]\mathds{1}_{\{\tau <k\}}\right] \\
  &=&  \sum_{1\leq k\leq n} E\left[ (1-Z_{k-1}) \mathds{1}_{\{\widetilde{Z}_k =1 \}} \right].
  \eq
  Hence,  $\{\widetilde{Z}_k =1\} \subset \{Z_{k-1} =1\}$ for all $k$.
  \\
  The reverse implication follows immediately from Theorem \ref{theo:maintheoaftertau} or Corollary \ref{theo:naafterhonest}. This ends  the proof of the theorem.
 \end{proof}
\subsection{Reverse Problem: after $\tau$}
The previous section studied  what we can conclude for arbitrage opportunities  from the standpoint view of the insider. In this section, we will investigate the equivalence or consequence on the market $S$ if we know that the insider can not make arbitrage  in the market $(\mathbb{G},S-S^\tau)$.\\

We start with two simple lemmas and one proposition before proving Theorem \ref{theo:reversemainafter} below.
 \begin{lemma}\label{DIStheincluseion111after}
  The following hold.
  \begin{eqnarray}
   \{n> \tau\} \  \subset \ \{\widetilde{Z}_n<1\} \  \subset \   \{Z_{n-1} <1\}  = \left\{P\left(\widetilde Z_n<1\Big\vert{\cal F}_{n-1}\right)>0\right\}.
  \end{eqnarray}
 \end{lemma}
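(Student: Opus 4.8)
The plan is to mirror verbatim the proof of Lemma \ref{DIStheincluseion111}, replacing the positivity conditions on $Z$ and $\widetilde Z$ by the ``strictly below one'' conditions, i.e. working with the complementary quantities $1-\widetilde Z_n$ and $1-Z_{n-1}=E[1-\widetilde Z_n\,|\,{\cal F}_{n-1}]$. The two inclusions on the left are quick. For $\{n>\tau\}\subset\{\widetilde Z_n<1\}$, I would use $\widetilde Z_n=P[\tau\geq n\,|\,{\cal F}_n]$ to compute $E[I_{\{\tau\geq n\}}I_{\{\widetilde Z_n=1\}}]=E[\widetilde Z_n I_{\{\widetilde Z_n=1\}}]=E[I_{\{\widetilde Z_n=1\}}]$, whence $E[I_{\{\tau<n\}}I_{\{\widetilde Z_n=1\}}]=0$ and $\{\widetilde Z_n=1\}\subset\{n\leq\tau\}$. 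For $\{\widetilde Z_n<1\}\subset\{Z_{n-1}<1\}$ I would use that $\{Z_{n-1}=1\}\in{\cal F}_{n-1}$ together with $Z_{n-1}=E[\widetilde Z_n\,|\,{\cal F}_{n-1}]$ to get $E[(1-\widetilde Z_n)I_{\{Z_{n-1}=1\}}]=E[(1-Z_{n-1})I_{\{Z_{n-1}=1\}}]=0$, hence $\widetilde Z_n=1$ a.s.\ on $\{Z_{n-1}=1\}$. (Both inclusions are also recorded as parts (a) and (c) of the honest-time lemma preceding Lemma \ref{lem:OPunderGFafter}, so they may simply be quoted.)

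The substance is the equality $\{Z_{n-1}<1\}=\Gamma'(n)$, where $\Gamma'(n):=\{P(\widetilde Z_n<1\,|\,{\cal F}_{n-1})>0\}$, and this is exactly the argument of Lemma \ref{DIStheincluseion111} transposed to complements. For $\Gamma'(n)\subset\{Z_{n-1}<1\}$: since $\{Z_{n-1}=1\}\subset\{\widetilde Z_n=1\}$ one has $E[P(\widetilde Z_n<1\,|\,{\cal F}_{n-1})I_{\{Z_{n-1}=1\}}]=P(\widetilde Z_n<1,\,Z_{n-1}=1)=0$, so $P(\widetilde Z_n<1\,|\,{\cal F}_{n-1})=0$ on $\{Z_{n-1}=1\}$. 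For the reverse inclusion, on $\Gamma'(n)^c$ one has $\widetilde Z_n=1$ a.s., hence $E[(1-Z_{n-1})I_{\Gamma'(n)^c}]=E[(1-\widetilde Z_n)I_{\Gamma'(n)^c}]\leq E[I_{\{\widetilde Z_n<1\}}I_{\Gamma'(n)^c}]=E[P(\widetilde Z_n<1\,|\,{\cal F}_{n-1})I_{\Gamma'(n)^c}]=0$, which forces $Z_{n-1}=1$ on $\Gamma'(n)^c$. I would present these as three short displayed computations, strictly parallel to the proof of Lemma \ref{DIStheincluseion111}.

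There is no genuinely hard step here; the only ``obstacle'' is bookkeeping — keeping straight which $\sigma$-field each conditional probability is taken with respect to ($\widetilde Z_n$ conditions on ${\cal F}_n$, while $Z_{n-1}$ conditions on ${\cal F}_{n-1}$), and consistently passing between a set and its complement when turning ``$>0$'' statements into ``$<1$'' statements. Once those conventions are fixed, the chain of inclusions and the final equality fall out of the tower property exactly as in the $\{0,\tau\}$ case.
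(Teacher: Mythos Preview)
Your proposal is correct and follows essentially the same approach as the paper: the paper also dismisses the two left inclusions as already known (from the honest-time lemma) and proves the equality $\{Z_{n-1}<1\}=\Gamma'(n)$ via the same two expectation computations you give, using $E[P(\widetilde Z_n<1\mid{\cal F}_{n-1})I_{\{Z_{n-1}=1\}}]=P(\widetilde Z_n<1,\,Z_{n-1}=1)=0$ for one direction and $E[(1-Z_{n-1})I_{\Gamma'(n)^c}]=E[(1-\widetilde Z_n)I_{\Gamma'(n)^c}]\leq E[I_{\{\widetilde Z_n<1\}}I_{\Gamma'(n)^c}]=0$ for the other.
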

 \begin{proof}
 It is enough to  prove the non-trivial equality $\{Z_{n-1} <1\}  = \left\{P\left(\widetilde Z_n<1\Big\vert{\cal F}_{n-1}\right)>0\right\}$.  Indeed, due to $E\left( P(\widetilde Z_n<1|{\cal F}_{n-1}) I_{\{Z_{n-1} = 1 \}} \right) = P(\widetilde{Z}_n<1=Z_{n-1}) = 0$, we get   $\left\{P\left(\widetilde Z_n<1\Big\vert{\cal F}_{n-1}\right)>0\right\}    \subset    \{Z_{n-1} <1\}.$ On the other hand, due to
\begin{eqnarray*}
 E\left( (1-Z_{n-1}) I_{\left\{P\left(\widetilde Z_n<1\big\vert{\cal F}_{n-1}\right)=0\right\}} \right)  &=&E\left( (1-\widetilde{Z}_{n}) I_{\left\{P\left(\widetilde Z_n<1\big\vert{\cal F}_{n-1}\right)=0\right\}} \right) \\
 &\leq&  E\left( I_{\{\widetilde{Z}_n<1\}} \  I_{\left\{P\left(\widetilde Z_n<1\big\vert{\cal F}_{n-1}\right)=0\right\}}\right)= 0,
 \end{eqnarray*}
 we obtain  $  \{Z_{n-1} <1\} \subset  \left\{P\left(\widetilde Z_n<1\Big\vert{\cal F}_{n-1}\right)>0\right\}$. This ends the proof of the lemma.
 \end{proof}
\begin{lemma}\label{DISZZRhavesomezeroafter}
 Let $R$ be an equivalent probability to $P$. Then the following hold for all $n$.
 \begin{eqnarray*}
  \{\widetilde{Z}_n = 1 \} = \{\widetilde{Z}_{n}^R= 1 \}, \ \ \ \mbox{and} \ \ \ \{Z_{n-1} = 1\}=\{Z^R_{n-1} = 1\},
 \end{eqnarray*}
 where $\widetilde{Z}^R_n := R(\tau \geq n| {\cal F}_n)$ and $ {Z}^R_{n-1}:=R(\tau \geq n| {\cal F}_{n-1})$.
 \end{lemma}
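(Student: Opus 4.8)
The statement to prove is Lemma \ref{DISZZRhavesomezeroafter}, which asserts that the zero-level sets $\{\widetilde{Z}_n = 1\}$ and $\{Z_{n-1} = 1\}$ are invariant under a change to an equivalent probability measure $R\sim P$. The plan is to mimic exactly the argument used in the proof of Lemma \ref{DISZZRhavesomezero}, which established the analogous statement for the sets $\{\widetilde{Z}_n = 0\}$ and $\{Z_{n-1}=0\}$; the only difference is that ``being equal to $1$'' is now the complementary event, so I will work with $1-\widetilde{Z}_n = P(\tau < n\mid {\cal F}_n)$ and $1-Z_{n-1} = P(\tau<n\mid{\cal F}_{n-1})$ rather than with $P(\tau\ge n\mid\cdot)$.

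\medskip
\textbf{Main steps.} First I would note that $1-\widetilde{Z}_n$ is the ${\cal F}_n$-conditional probability of $\{\tau<n\}$ and $1-\widetilde{Z}_n^R$ is the corresponding quantity under $R$, and likewise for $1-Z_{n-1}$ and $1-Z_{n-1}^R$ at the level ${\cal F}_{n-1}$. Then, taking $P$-expectations,
\[
E\big[(1-\widetilde{Z}_n)\,I_{\{\widetilde{Z}_n^R=1\}}\big]
= E\big[I_{\{\tau<n\}}\,I_{\{\widetilde{Z}_n^R=1\}}\big]
= E^R\!\Big[\tfrac{dP}{dR}\,I_{\{\tau<n\}}\,I_{\{1-\widetilde{Z}_n^R=0\}}\Big]=0,
\]
where the last equality holds because $I_{\{\widetilde{Z}_n^R=1\}}$ is ${\cal F}_n$-measurable and $E^R[I_{\{\tau<n\}}\mid{\cal F}_n] = 1-\widetilde{Z}_n^R = 0$ on that set. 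Since $1-\widetilde{Z}_n\ge 0$, this forces $\{\widetilde{Z}_n^R=1\}\subset\{\widetilde{Z}_n=1\}$ up to a $P$-null set. The identical computation at level ${\cal F}_{n-1}$ gives $E[(1-Z_{n-1})I_{\{Z_{n-1}^R=1\}}]=0$, hence $\{Z_{n-1}^R=1\}\subset\{Z_{n-1}=1\}$. Finally, because $P$ and $R$ play symmetric roles (both are probability measures, each equivalent to the other), exchanging $P\leftrightarrow R$ in the same argument yields the reverse inclusions, and therefore the two pairs of sets coincide.

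\medskip
\textbf{Obstacle.} There is essentially no serious obstacle here; the only point requiring a line of care is the justification that $E[I_{\{\tau<n\}}\,I_{\{\widetilde{Z}_n^R=1\}}] = 0$, which one can phrase either by passing to the measure $R$ as above or, more directly, by conditioning on ${\cal F}_n$ under $P$ is \emph{not} immediately available (since $\widetilde{Z}_n^R$ is the $R$-conditional, not $P$-conditional, probability). The clean route is to condition under $R$: $E[I_{\{\tau<n\}}I_{\{\widetilde{Z}_n^R=1\}}] = E^R\big[\tfrac{dP}{dR}\,E^R[I_{\{\tau<n\}}\mid {\cal F}_n]\,I_{\{\widetilde{Z}_n^R=1\}}\big]$ and use that $E^R[I_{\{\tau<n\}}\mid {\cal F}_n]=1-\widetilde{Z}_n^R$ vanishes on $\{\widetilde{Z}_n^R=1\}$. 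One could also simply invoke Lemma \ref{DISZZRhavesomezero} itself if one is willing to argue that the complement sets $\{Z=1\}$ are handled symmetrically to $\{Z=0\}$; but writing out the two-line expectation computation directly, as above, is the most transparent and keeps the lemma self-contained.
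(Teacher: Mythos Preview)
Your proposal is correct and follows essentially the same approach as the paper: compute $E[(1-\widetilde Z_n)I_{\{\widetilde Z_n^R=1\}}]=E[I_{\{\tau<n\}}I_{\{\widetilde Z_n^R=1\}}]=0$ (and similarly at level ${\cal F}_{n-1}$) to obtain one inclusion, then invoke the symmetry of $P$ and $R$ for the reverse inclusion. The paper is terser---it asserts the vanishing of $E[I_{\{\tau<n\}}I_{\{\widetilde Z_n^R=1\}}]$ without comment---whereas your ``obstacle'' paragraph supplies the missing justification; the cleanest phrasing is simply that $R(\tau<n,\,\widetilde Z_n^R=1)=E^R[(1-\widetilde Z_n^R)I_{\{\widetilde Z_n^R=1\}}]=0$, and equivalence of $P$ and $R$ then forces the $P$-probability to vanish as well (this avoids the minor awkwardness of conditioning $\tfrac{dP}{dR}$ on ${\cal F}_n$).
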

\begin{proof}
 Since
 \begin{eqnarray*}
    E\left[(1- \widetilde{Z}_n) I_{\{\widetilde{Z}^R_n=1\}}\right] = E\left[ I_{\{\tau <n\}} I_{\{\widetilde{Z}^R_n=1\}}\right] = 0, \ \mbox{and} \  E\left[ (1-{Z}_{n-1}) I_{\{ {Z}^R_{n-1}=1\}}\right] = E\left[ I_{\{\tau < n\}} I_{\{\widetilde{Z}^R_{n-1}=1\}}\right] = 0,
  \end{eqnarray*}
we obtain $\{\widetilde{Z}_n^R=1\} \subset \{\widetilde{Z}_n = 1\}$ and $\{Z^R_{n-1} = 1\}\subset \{Z_{n-1} = 1\}$.  The symmetric roles of $R$ and $P$ complete the proof of the lemma.
\end{proof}
\begin{proposition}\label{INDISimportprop1after}
Let $X$ be an $\mathbb{F}$-martingale. Then the following are equivalent.\\
$\rm (a)$ For all $n$, we have
\begin{eqnarray}\label{INDISXZzeroafter}
 E\left( \Delta X_n I_{\{\widetilde{Z}_n =1\}}  \Big \vert {\cal F}_{n-1}\right) = 0.
\end{eqnarray}
$\rm (b)$  $X - X^\tau$ is a $\mathbb{G}$-martingale  under the probability $\mathbb{Q}:= \prod_{n=1}^N q^{(a)}_n$, where
$$
q^{(a)}_n:= \left(\frac{1 - Z_{n-1}}{1 - \widetilde{Z}_n}I_{\{n > \tau\}} + I_{\{n\leq \tau\}} \right)\left(  P\left( \widetilde{Z}_n <1 | {\cal F}_{n-1}\right)I_{\{n > \tau\}}  +  I_{\{n\leq  \tau\}}\right)^{-1}.
$$
\end{proposition}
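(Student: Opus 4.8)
This proposition is the honest-time mirror of Proposition~\ref{INDISimportprop1}, and the plan is to transplant that proof, replacing throughout $\{\widetilde{Z}_n>0\}$ by $\{\widetilde{Z}_n<1\}$, $\widetilde{Z}_n$ by $1-\widetilde{Z}_n$, $Z_{n-1}$ by $1-Z_{n-1}$ and the set $\{n\leq\tau\}$ by $\{n>\tau\}$, and using Lemma~\ref{lem:OPunderGFafter} in place of Lemma~\ref{lem:OPunderGF} and Lemma~\ref{DIStheincluseion111after} in place of Lemma~\ref{DIStheincluseion111}. Observe first that, since $X$ is an $\mathbb{F}$-martingale, $\Delta(X-X^\tau)_n=\Delta X_n I_{\{n>\tau\}}$, so assertion (b) says exactly that $E^{\mathbb{Q}}[\Delta X_n I_{\{n>\tau\}}\,|\,{\cal G}_{n-1}]=0$ for every $n$.

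First I would check that $\mathbb{Q}$ is a well-defined probability equivalent to $\mathbb{P}$. By Lemma~\ref{DIStheincluseion111after}, $\{n>\tau\}\subset\{\widetilde{Z}_n<1\}\subset\{Z_{n-1}<1\}=\{P(\widetilde{Z}_n<1\,|\,{\cal F}_{n-1})>0\}$, so on $\{n>\tau\}$ both $1-\widetilde{Z}_n$ and $P(\widetilde{Z}_n<1\,|\,{\cal F}_{n-1})$ are strictly positive --- reading the ratios in $q^{(a)}_n$ with the convention that $(1-\widetilde{Z}_n)^{-1}$ carries the indicator $I_{\{\widetilde{Z}_n<1\}}$ and $P(\widetilde{Z}_n<1\,|\,{\cal F}_{n-1})^{-1}$ the indicator $I_{\{Z_{n-1}<1\}}$, so that $q^{(a)}_n$ is a genuine ${\cal G}_n$-measurable random variable --- while on $\{n\leq\tau\}$ one plainly has $q^{(a)}_n\equiv1$; hence $q^{(a)}_n>0$. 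That $\prod_{n=1}^N q^{(a)}_n$ has $\mathbb{P}$-expectation $1$ then reduces to checking $E[q^{(a)}_n\,|\,{\cal G}_{n-1}]=1$ for each $n$: splitting on $\{n\leq\tau\}\in{\cal G}_{n-1}$ and on $\{n>\tau\}$, and applying Lemma~\ref{lem:OPunderGFafter}-(b) to the ${\cal F}_n$-measurable random variable $(1-Z_{n-1})(1-\widetilde{Z}_n)^{-1}I_{\{\widetilde{Z}_n<1\}}\,P(\widetilde{Z}_n<1\,|\,{\cal F}_{n-1})^{-1}I_{\{Z_{n-1}<1\}}$, one obtains $E[q^{(a)}_n I_{\{n>\tau\}}\,|\,{\cal G}_{n-1}]=I_{\{n>\tau\}}$, which gives the claim.

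For the equivalence (a)$\Leftrightarrow$(b) I would then establish, by the Bayes rule (using $E[q^{(a)}_n\,|\,{\cal G}_{n-1}]=1$) followed by Lemma~\ref{lem:OPunderGFafter}-(b), the identity
$$\left(P(\widetilde{Z}_n<1\,|\,{\cal F}_{n-1})I_{\{n>\tau\}}+I_{\{n\leq\tau\}}\right)E^{\mathbb{Q}}\left[\Delta X_n I_{\{n>\tau\}}\,|\,{\cal G}_{n-1}\right]=E\left[\Delta X_n I_{\{\widetilde{Z}_n<1\}}\,|\,{\cal F}_{n-1}\right]I_{\{n>\tau\}}=-E\left[\Delta X_n I_{\{\widetilde{Z}_n=1\}}\,|\,{\cal F}_{n-1}\right]I_{\{n>\tau\}},$$
the last step being the $\mathbb{F}$-martingale property of $X$; this is the exact analogue of the displayed identity in the proof of Proposition~\ref{INDISimportprop1}. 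Since the factor $P(\widetilde{Z}_n<1\,|\,{\cal F}_{n-1})I_{\{n>\tau\}}+I_{\{n\leq\tau\}}$ is strictly positive, (a)$\Rightarrow$(b) follows at once. For (b)$\Rightarrow$(a) the identity only gives $E[\Delta X_n I_{\{\widetilde{Z}_n=1\}}\,|\,{\cal F}_{n-1}]I_{\{n>\tau\}}=0$, and I would upgrade this to (a) as follows: on $\{Z_{n-1}=1\}$ one has $\widetilde{Z}_n=1$ $\mathbb{P}$-a.s.\ (Lemma~\ref{DIStheincluseion111after}), so $E[\Delta X_n I_{\{\widetilde{Z}_n=1\}}\,|\,{\cal F}_{n-1}]I_{\{Z_{n-1}=1\}}=E[\Delta X_n\,|\,{\cal F}_{n-1}]I_{\{Z_{n-1}=1\}}=0$ and this ${\cal F}_{n-1}$-measurable random variable is carried by $\{Z_{n-1}<1\}$; on the other hand, taking ${\cal F}_{n-1}$-conditional expectation in $E[\Delta X_n I_{\{\widetilde{Z}_n=1\}}\,|\,{\cal F}_{n-1}]I_{\{n>\tau\}}=0$ multiplies it by $P(n>\tau\,|\,{\cal F}_{n-1})=1-Z_{n-1}$, which is strictly positive on $\{Z_{n-1}<1\}$; combining the two facts gives $E[\Delta X_n I_{\{\widetilde{Z}_n=1\}}\,|\,{\cal F}_{n-1}]=0$ for every $n$, that is, (a).

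I do not expect a genuine obstacle, since all the content already appears in Proposition~\ref{INDISimportprop1}. The two points that need care are the null-set bookkeeping in the definition of $q^{(a)}_n$ --- handled by the convention above together with the inclusions of Lemma~\ref{DIStheincluseion111after} --- and the passage in (b)$\Rightarrow$(a) from the ${\cal G}_{n-1}$-level identity, which still carries the factor $I_{\{n>\tau\}}$, to the clean ${\cal F}_{n-1}$-level statement (a); this ``support plus conditioning'' step is the one I would write out in detail.
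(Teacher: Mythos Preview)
Your proposal is correct and follows essentially the same approach as the paper: both hinge on the identity
\[
\Bigl(P(\widetilde{Z}_n<1\,|\,{\cal F}_{n-1})I_{\{n>\tau\}}+I_{\{n\leq\tau\}}\Bigr)\,E^{\mathbb{Q}}\bigl[\Delta X_n I_{\{n>\tau\}}\,|\,{\cal G}_{n-1}\bigr]=-E\bigl[\Delta X_n I_{\{\widetilde{Z}_n=1\}}\,|\,{\cal F}_{n-1}\bigr]I_{\{n>\tau\}},
\]
obtained via Lemma~\ref{lem:OPunderGFafter} and the $\mathbb{F}$-martingale property of $X$. Your write-up is in fact more careful than the paper's, which simply states this identity and declares the proof finished; you correctly spell out the ``support plus conditioning'' step needed for (b)$\Rightarrow$(a) to pass from the identity carrying $I_{\{n>\tau\}}$ to the clean ${\cal F}_{n-1}$-statement, a point the paper leaves entirely implicit.
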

\begin{proof}
First, due to Lemma \ref{DIStheincluseion111after},  we remark that the probability $\mathbb{Q}$ is well defined. To complete the proof, we calculate 
\begin{eqnarray*}
 &&\left(  P\left( \widetilde{Z}_n <1 | {\cal F}_{n-1}\right)I_{\{n > \tau\}} +  I_{\{n\leq \tau\}}\right)E^{\mathbb{Q}}\left(  \Delta X_n I_{\{n > \tau\}} \Big\vert {\cal G}_{n-1}\right) \\
 && = E \left(  \Delta X_n I_{\{\widetilde{Z}_n <1\}} \Big\vert {\cal F}_{n-1}\right) I_{\{n >\tau\}}= - E \left(  \Delta X_n I_{\{\widetilde{Z}_n =1\}} \Big\vert {\cal F}_{n-1}\right) I_{\{n > \tau\}}.
\end{eqnarray*}
This ends the proof of the proposition.
\end{proof}
Throughout the rest,  we consider the following notations.
\begin{eqnarray}
\widetilde{\mathbb{Q}}^{(e)}&:=& \prod_{n=1}^N \left(  \frac{1 - \widetilde{Z}_n}{1 - Z_{n-1}} I_{\{Z_{n-1} <1\}}  + I_{\{\widetilde{Z}_n=1>Z_{n-1}\}} +I_{\{Z_{n-1} = 1 \}}\right)\left( 1 + E\left(  I_{\{\widetilde{Z}_n=1>Z_{n-1}\}} \Big\vert {\cal F}_{n-1}\right) \right)^{-1}\centerdot \mathbb{P} \sim \mathbb{P}. \nonumber
\end{eqnarray}
Below, we state the main theorem in this subsection which shows what we can conclude if the market $(\mathbb{G},X- X^\tau)$ excludes arbitrage opportunities for any $\mathbb{F}$-adapted  integrable process $X$.
\begin{theorem}\label{theo:reversemainafter}
 Let $\tau$ be an $\mathbb{F}$-honest time and $X$ be an $\mathbb{F}$-adapted  integrable process.  Then the following are equivalent. \\
 $\rm (a)$  $X - X^\tau$ satisfies NA$(\mathbb{G},\mathbb{P})$.\\
 $\rm (b)$   $\widetilde{X}^{(e)}$ satisfies NA$(\mathbb{F},\mathbb{P})$, where  $\Delta \widetilde{X}^{(e)}_n := \Delta X_n I_{\{\widetilde{Z}_{n}<1\}}$.
\end{theorem}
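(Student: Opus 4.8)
The plan is to mirror, almost verbatim, the argument used for Theorem \ref{theo:reversemainbefore}, exchanging the roles of $\{\widetilde Z_n=0\}$ and $\{\widetilde Z_n=1\}$ (equivalently, passing from the supermartingale $Z$ to $1-Z$ and using Lemma \ref{lem:OPunderGFafter} in place of Lemma \ref{lem:OPunderGF}), together with Lemma \ref{DIStheincluseion111after}, Lemma \ref{DISZZRhavesomezeroafter} and Proposition \ref{INDISimportprop1after} which are the ``after $\tau$'' analogues of the three preparatory results used before. The key structural facts to exploit are: honesty of $\tau$ gives, via Jeulin \cite{Jeu}, the decomposition of a $\mathbb G$-predictable process on $\Rbrack\tau,+\infty\Lbrack$ in terms of two ${\cal F}_n$-measurable random variables (the analogue of (\ref{INDISKG})); and the increment $\Delta D^{o,\mathbb F}_n = \widetilde Z_n - Z_n$ vanishes on $\{\widetilde Z_n=1\}$ since there $Z_n\le\widetilde Z_n=1$ while also $Z_n\le 1$ with equality forced, so the cross term is killed exactly on the relevant set.

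For the direction (a)$\Rightarrow$(b): assume $X - X^\tau$ satisfies NA$(\mathbb G,\mathbb P)$, so there is $\mathbb Q^{\mathbb G}=\prod_{n=1}^N(1+\Delta K^{\mathbb G}_n)\centerdot\mathbb P\sim\mathbb P$ with $1+\Delta K^{\mathbb G}_n>0$, $E[1+\Delta K^{\mathbb G}_n\mid{\cal G}_{n-1}]=1$, under which $X-X^\tau$ is a $\mathbb G$-martingale. By Jeulin write $(1+\Delta K^{\mathbb G}_n)I_{\{n>\tau\}} = Y^{\mathbb F}_n I_{\{n>\tau\}} + \phi_n I_{\{n=\tau\}}$ with $Y^{\mathbb F}_n,\phi_n$ being ${\cal F}_n$-measurable. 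Projecting the normalization onto ${\cal F}_{n-1}$ via Lemma \ref{lem:OPunderGFafter}-(b) produces an ${\cal F}_{n-1}$-measurable positive ``corrected'' density $\widetilde Y_n := Y^{\mathbb F}_n + \phi_n\Delta D^{o,\mathbb F}_n/(1-\widetilde Z_n)$ on $\{\widetilde Z_n<1\}$, satisfying $E[(1-\widetilde Z_n)\widetilde Y_n I_{\{\widetilde Z_n<1\}}\mid{\cal F}_{n-1}]/(1-Z_{n-1}) = I_{\{Z_{n-1}<1\}}$; I then define the strictly positive $\mathbb F$-adapted process
\[
L_k := \prod_{n=1}^k\Big(\widetilde Y_n I_{\{\widetilde Z_n<1\}} + I_{\{\widetilde Z_n=1>Z_{n-1}\}} + I_{\{Z_{n-1}=1\}}\Big),
\]
check it is an $(\mathbb F,\widetilde{\mathbb Q}^{(e)})$-martingale, and using the $\mathbb G$-martingale property of $X-X^\tau$ together with Lemma \ref{lem:OPunderGFafter} conclude $E[\Delta X_n (1-\widetilde Z_n)\widetilde Y_n I_{\{\widetilde Z_n<1\}}\mid{\cal F}_{n-1}]=0$, i.e. $L\widetilde X^{(e)}$ is an $\mathbb F$-martingale under $\widetilde{\mathbb Q}^{(e)}$, hence $\widetilde X^{(e)}$ satisfies NA$(\mathbb F,\widetilde{\mathbb Q}^{(e)})$ and therefore NA$(\mathbb F,\mathbb P)$. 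Conversely, (b)$\Rightarrow$(a): if $\widetilde X^{(e)}$ satisfies NA$(\mathbb F,\mathbb P)$ there is $\mathbb R\sim\mathbb P$ making $\widetilde X^{(e)}$ an $(\mathbb F,\mathbb R)$-martingale; Lemma \ref{DISZZRhavesomezeroafter} ensures $\{\widetilde Z^{\mathbb R}_n=1\}=\{\widetilde Z_n=1\}$, so condition (\ref{INDISXZzeroafter}) of Proposition \ref{INDISimportprop1after} holds for $\widetilde X^{(e)}$ under $\mathbb R$, whence $(\widetilde X^{(e)})-(\widetilde X^{(e)})^\tau = X - X^\tau$ is a $\mathbb G$-martingale under the explicit measure $\mathbb Q = \prod q_n^{(a)}$ and in particular satisfies NA$(\mathbb G,\mathbb P)$.

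The main obstacle, as in the ``before $\tau$'' case, is the bookkeeping in (a)$\Rightarrow$(b): correctly producing the ${\cal F}$-measurable corrected density $\widetilde Y_n$ from Jeulin's decomposition, verifying its strict positivity on $\{\widetilde Z_n<1\}$ (this uses that $\phi_n\Delta D^{o,\mathbb F}_n$ does not spoil positivity there, since $1+\Delta K^{\mathbb G}_n>0$ forces the combination to be positive), and checking that $L$ is genuinely an $\widetilde{\mathbb Q}^{(e)}$-martingale with the prescribed normalizing factor $\big(1+E[I_{\{\widetilde Z_n=1>Z_{n-1}\}}\mid{\cal F}_{n-1}]\big)^{-1}$ — everything else is a routine transcription of the earlier proof with $0\leftrightarrow 1$ and $Z\leftrightarrow 1-Z$, using honesty of $\tau$ exactly where before-$\tau$ used nothing extra.
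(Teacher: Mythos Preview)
Your overall strategy matches the paper's, and the direction (b)$\Rightarrow$(a) is identical. However, in (a)$\Rightarrow$(b) you over-complicate the Jeulin decomposition and make a false auxiliary claim.

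The paper invokes Jeulin's Proposition~5.3 for \emph{honest} times, which says that on $\{\tau<n\}$ every ${\cal G}_n$-measurable random variable coincides with an ${\cal F}_n$-measurable one. This yields the one-term decomposition
\[
(1+\Delta K^{\mathbb G}_n)\,I_{\{n>\tau\}} = Y^{\mathbb F}_n\,I_{\{n>\tau\}},
\]
with no $\phi_n I_{\{n=\tau\}}$ correction whatsoever. Your two-term formula is a mechanical transcription of the before-$\tau$ case; note that $\{n=\tau\}$ and $\{n>\tau\}$ are disjoint, so your identity forces $\phi_n I_{\{n=\tau\}}=0$ anyway, and hence $\phi_n\,\Delta D^{o,\mathbb F}_n=0$ a.s.\ and your ``corrected'' density $\widetilde Y_n$ collapses to $Y^{\mathbb F}_n$. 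So the argument survives, but only by accident; the honest-time structure should be used directly, as the paper does.

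More seriously, the ``key structural fact'' you announce --- that $\Delta D^{o,\mathbb F}_n=\widetilde Z_n-Z_n$ vanishes on $\{\widetilde Z_n=1\}$ because ``equality is forced'' --- is false. From $\widetilde Z_n=P(\tau\ge n\mid{\cal F}_n)=1$ you cannot conclude $Z_n=P(\tau>n\mid{\cal F}_n)=1$; a deterministic $\tau=n$ already gives $\widetilde Z_n=1$, $Z_n=0$. This is \emph{not} the mirror of the before-$\tau$ fact $\Delta D^{o,\mathbb F}_n I_{\{\widetilde Z_n=0\}}=0$ (which holds because $0\le Z_n\le\widetilde Z_n$). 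Fortunately you never actually need this claim once the spurious $\phi_n$ term is removed, but as written your bookkeeping plan rests on it. Drop the $\phi_n$ term from the start, use the honest-time representation, and the proof goes through exactly as in the paper with $L_k=\prod_{n\le k}\bigl(Y^{\mathbb F}_n I_{\{\widetilde Z_n<1\}}+I_{\{\widetilde Z_n=1>Z_{n-1}\}}+I_{\{Z_{n-1}=1\}}\bigr)$.
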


\begin{proof}
(a)$\Longrightarrow$(b).  If  $X - X^\tau$ satisfies NA$(\mathbb{G})$, there exists a probability $\mathbb{Q}^\mathbb{G}:=\prod_{n=1}^N (1+\Delta K^\mathbb{G}_n)\centerdot\mathbb{P} \sim \mathbb{P}$ such that $X - X^\tau$ is a $(\mathbb{G},\mathbb{Q}^\mathbb{G})$-martingale, where $1+\Delta K^\mathbb{G}_n >0$ and $E\left( 1+\Delta K^\mathbb{G}_n | {\cal G}_{n-1} \right) = 1$,  for all $n$.
 By Proposition (5.3) in Jeulin  \cite{Jeu}, there exists an ${\cal{F}}_n$-measurable $Y^\mathbb{F}_n$ such that
 \begin{eqnarray}\label{INDISKGafter}
  \left(1+\Delta K^\mathbb{G}_n\right)I_{\{n > \tau\}} = Y^\mathbb{F}_n I_{\{n > \tau\}}.
 \end{eqnarray}
 Therefore,
  \begin{eqnarray*}
  I_{\{n > \tau\}}  &=& E\left(\left(1+\Delta K^\mathbb{G}_n\right)I_{\{n> \tau\}} \Big\vert {\cal G}_{n-1}\right) = E\left( Y^\mathbb{F}_n  I_{\{n> \tau\}} \Big\vert {\cal G}_{n-1}\right) \\
  &=& \frac{I_{\{n> \tau\}}}{1 - Z_{n-1}}E\left( Y_n^\mathbb{F} (1 - \widetilde{Z}_n)   \ \Big\vert {\cal F}_{n-1}  \right).
 \end{eqnarray*}
   Hence, we get
  \begin{eqnarray*}
   &&E\left( \frac{(1-\widetilde{Z}_n) I_{\{Z_{n-1} <1\}}}{1 - Z_{n-1}} \   {Y}^\mathbb{F}_n I_{\{\widetilde{Z}_n <1\}} \ \Big\vert {\cal F}_{n-1}  \right) = I_{\{Z_{n-1} <1\}} .
  \end{eqnarray*}
 Define $L$  by
\begin{eqnarray*}
 L_k&:=& \prod_{n=1}^k \left(    {Y}^\mathbb{F}_n I_{\{\widetilde{Z}_n <1\}} + I_{\{\widetilde{Z}_n = 1>Z_{n-1}\}}    + I_{\{Z_{n-1} = 1\}}\right) >0.
\end{eqnarray*}
It is easy to check that $L$ is an   $(\mathbb{F},\widetilde{\mathbb{Q}}^{(e)})$-martingale, i.e. $E^{\widetilde{Q}^{(e)}}\left( \frac{L_n}{L_{n-1}} \Big\vert {\cal F}_{n-1} \right) = 1$ for all $1\leq n\leq N$.   Since $X - X^\tau$ is a  $(\mathbb{G},\mathbb{Q}^\mathbb{G})$-martingale, due to (\ref{INDISKGafter}), we deduce that
\begin{eqnarray}
 0&=& E\left( \Delta X_n I_{\{n> \tau\}} Y^\mathbb{F}_n   \Big\vert {\cal G}_{n-1}\right)  =  E\left( \Delta X_n (1-\widetilde{Z}_n) Y^\mathbb{F}_n    \  \Big\vert {\cal F}_{n-1}\right) \frac{I_{\{n> \tau\}}}{1 - Z_{n-1}}. \nonumber
\end{eqnarray}
Hence, by taking conditional expectation under ${\cal F}_{n-1}$ in the above equality and using the fact $\{\widetilde{Z}_n <1\} \subset \{Z_{n-1} <1\}$, we get
\begin{eqnarray}\label{NADISGtoFzeroafter}
 E\left( \Delta X_n  \frac{(1-\widetilde{Z}_n) I_{\{\widetilde{Z}_{n} <1\}}}{1 - Z_{n-1}} {Y}^\mathbb{F}_n  \  \Big\vert {\cal F}_{n-1}\right)  = 0.
\end{eqnarray}
Then, we deduce
\begin{eqnarray*}
\left( 1 + E\left(  I_{\{\widetilde{Z}_n=1>Z_{n-1}\}} \Big\vert {\cal F}_{n-1}\right) \right) E^{\widetilde{Q}^{(e)}}\left( \Delta \widetilde{X}^{(e)} \frac{L_n}{L_{n-1}} \Big\vert {\cal F}_{n-1}\right) = E\left( \Delta X_n  \frac{(1-\widetilde{Z}_n) I_{\{\widetilde{Z}_{n} <1\}}}{1 - Z_{n-1}}  {Y}^\mathbb{F}_n  \  \Big\vert {\cal F}_{n-1}\right)=0.
\end{eqnarray*}
Therefore,   $L \widetilde{X}^{(e)}$ is an $\mathbb{F}$-martingale under   $\widetilde{\mathbb{Q}}^{(e)}$ and $\widetilde{X}^{(e)}$ satisfies NA$(\mathbb{F},\widetilde{\mathbb{Q}}^{(e)})$ and NA$(\mathbb{F},\mathbb{P})$.\\

\noindent (b)$\Longrightarrow$(a).  Since $\widetilde{X}^{(e)}$ satisfies NA$(\mathbb{F},\mathbb{P})$, there exists a probability $\mathbb{R}$ equivalent to $\mathbb{P}$ such that $\widetilde{X}^{(e)}$ is an $(\mathbb{F},\mathbb{R})$-martingale. By Lemma \ref{DISZZRhavesomezeroafter}, the condition (\ref{INDISXZzeroafter}) in Proposition \ref{INDISimportprop1after} is trivial satisfied by $\widetilde{X}^{(e)}$ under the probability $\mathbb{R}$, i.e. $E^\mathbb{R}\left( \Delta \widetilde{X}^{(e)}_n I_{\{\widetilde{Z}^\mathbb{R}_n = 1\}}  \Big \vert {\cal F}_{n-1}\right) = E^\mathbb{R}\left( \Delta \widetilde{X}^{(e)}_n I_{\{\widetilde{Z}_n = 1\}}  \Big \vert {\cal F}_{n-1}\right) = 0.$ Therefore, by Proposition \ref{INDISimportprop1after}, we conclude that $\widetilde{X}^{(e)} - \left(\widetilde{X}^{(e)}\right)^\tau = X - X^\tau$ satisfies NA$(\mathbb{G},\mathbb{P})$.\\
This ends the proof of the theorem.
\end{proof}


\section{Explicit Examples}\label{sec:exam}
In this section, we   revisit those two  examples  presented in  Introduction and calculate explicitly the Az\'ema supermartingales and the  arbitrage opportunities. For more examples, we refer to Aksamit et al. \cite{aksamit/choulli/deng/jeanblanc} and Fontana et al. \cite{fjs} for   continuous time settings.\\

{\bf Example \ref{example:1} (continued).}  We are now in the same settings as   Example \ref{example:1}. 
\begin{lemma} The following hold.\\
 $\rm (a)$ The progressive enlargement filtration $\mathbb{G} = ({\cal G}_n)_{0\leq n \leq 2}$ is given by
\begin{eqnarray*}
{\cal G}_0 &=& \{\emptyset, \Omega\}, \ {\cal G}_1 = \{ \emptyset, \Omega, \{\omega_1, \omega_2\}, \{\omega_3\}, \{ \omega_4\}\}, \ \  \mbox{and } \ \ {\cal G}_2 =  \sigma(\{\emptyset, \Omega, \{\omega_1 \}, \{\omega_2 \}, \{\omega_3 \}, \{\omega_4 \} \}).
\end{eqnarray*}
$\rm (b)$ ${\cal G}_{\tau-} = {\cal G}_1$ and ${\cal G}_{\tau} = {\cal G}_{\tau+} = {\cal G}_2$.
\end{lemma}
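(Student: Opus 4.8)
The plan is to compute everything directly, exploiting that $\Omega=\{\omega_1,\omega_2,\omega_3,\omega_4\}$ is finite, so each $\sigma$-algebra in sight is pinned down by its atoms. For part~(a) I would first record the level sets of $\tau$: $\{\tau=1\}=\{\omega_3\}$ and $\{\tau=2\}=\{\omega_1,\omega_2,\omega_4\}$, hence $\{\tau\leq 0\}=\emptyset$, $\{\tau\leq 1\}=\{\omega_3\}$ and $\{\tau\leq 2\}=\Omega$. Substituting these into ${\cal G}_n={\cal F}_n\vee\sigma(\{\tau\leq n\})$ yields ${\cal G}_0={\cal F}_0=\{\emptyset,\Omega\}$; for $n=1$, adjoining $\{\omega_3\}$ to the atoms $\{\omega_1,\omega_2\},\{\omega_3,\omega_4\}$ of ${\cal F}_1$ splits the second atom and produces ${\cal G}_1=\sigma(\{\omega_1,\omega_2\},\{\omega_3\},\{\omega_4\})$; and ${\cal G}_2={\cal F}_2\vee\{\emptyset,\Omega\}={\cal F}_2=2^\Omega$. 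This is exactly statement~(a).

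For part~(b): since $\mathbb{G}$ is the progressive enlargement, $\tau$ is a $\mathbb{G}$-stopping time, and because $\tau\leq 2=N$ we have the chain ${\cal G}_{\tau-}\subseteq{\cal G}_\tau\subseteq{\cal G}_{\tau+}\subseteq{\cal G}_\infty={\cal G}_2$. For ${\cal G}_\tau$ I would use the characterization $A\in{\cal G}_\tau$ iff $A\cap\{\tau=n\}\in{\cal G}_n$ for every $n$, tested on singletons: $\{\omega_3\}\cap\{\tau=1\}=\{\omega_3\}\in{\cal G}_1$ and $\{\omega_3\}\cap\{\tau=2\}=\emptyset$, while for $i\in\{1,2,4\}$ one has $\{\omega_i\}\cap\{\tau=1\}=\emptyset$ and $\{\omega_i\}\cap\{\tau=2\}=\{\omega_i\}\in{\cal G}_2$. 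Hence every singleton lies in ${\cal G}_\tau$, so ${\cal G}_\tau=2^\Omega={\cal G}_2$, and the inclusion chain forces ${\cal G}_{\tau+}={\cal G}_2$ as well.

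For ${\cal G}_{\tau-}$ I would invoke ${\cal G}_{\tau-}={\cal G}_0\vee\sigma\big(A\cap\{\tau>n\}:\ n\geq 0,\ A\in{\cal G}_n\big)$. The terms with $n=0$ contribute only $\{\emptyset,\Omega\}$, and the terms with $n\geq 2$ contribute only $\emptyset$ since $\{\tau>2\}=\emptyset$; so the single nontrivial contribution is $n=1$, with $\{\tau>1\}=\{\omega_1,\omega_2,\omega_4\}$: intersecting this set with the atoms of ${\cal G}_1$ gives $\{\omega_1,\omega_2\}$, $\emptyset$, $\{\omega_4\}$. Therefore ${\cal G}_{\tau-}=\sigma(\{\omega_1,\omega_2\},\{\omega_4\})$, whose atoms are $\{\omega_1,\omega_2\}$, $\{\omega_4\}$ and $\{\omega_3\}$ — the last obtained as the complement of $\{\omega_1,\omega_2\}\cup\{\omega_4\}$ — which is precisely ${\cal G}_1$.

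The whole argument is finite bookkeeping, so there is no real obstacle. The only points deserving care are: fixing the discrete-time conventions for ${\cal G}_{\tau-}$ and ${\cal G}_{\tau+}$, and noticing that $\{\omega_3\}$, although disjoint from $\{\tau>1\}$, still belongs to ${\cal G}_{\tau-}$ through complementation — this is exactly what makes ${\cal G}_{\tau-}$ collapse back to ${\cal G}_1$ rather than to a strictly coarser $\sigma$-algebra.
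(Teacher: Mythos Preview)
Your proposal is correct and follows essentially the same approach as the paper: both use the standard characterizations ${\cal G}_{\tau-}={\cal G}_0\vee\sigma\{A\cap\{n<\tau\}:A\in{\cal G}_n\}$ and ${\cal G}_\tau=\{A:A\cap\{\tau\leq n\}\in{\cal G}_n\ \forall n\}$ and evaluate them directly on the finite sample space. The paper's proof is a one-line assertion of these formulas, whereas you spell out the atom-by-atom bookkeeping (including the nice observation that $\{\omega_3\}$ enters ${\cal G}_{\tau-}$ only via complementation), but there is no substantive difference in method.
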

\begin{proof}
 (a) is trivial by the definition of $\mathbb{G}$. Notice that ${\cal G}_{\tau-} = {\cal G}_0\vee \sigma\{A\cap \{n<\tau\}, \ \ A\in {\cal G}_n, \ n=1,2\}={\cal G}_1$ and ${\cal G}_{\tau} = \sigma\{A, \ \ A\cap \{\tau \leq n\}\in {\cal G}_n, \ \ n=0,1,2\} = {\cal G}_2$. Then (b) follows.
\end{proof}

\begin{lemma}\label{lem:exam1--}
 For the above settings, the following properties hold.\\
 $\rm (a)$ The processes $A, m, Z$ and $\widetilde{Z}$ are given by
 \begin{eqnarray*}
  A_0 &=& 0, \ A_1 = p \mathds{1}_{\{\omega_3, \omega_4\}},  \ A_2 = p \mathds{1}_{\{\omega_3,  \omega_4\}} + \mathds{1}_{\{\omega_1, \omega_2, \omega_4\}}.  \\
  m_0 &=& 1, \ m_1 = 1, \ m_2 = p \mathds{1}_{\{\omega_3,  \omega_4\}} + \mathds{1}_{\{\omega_1, \omega_2, \omega_4\}}.  \\
  Z_0 &=& 1, \  Z_1= 1 - p \mathds{1}_{\{\omega_3, \omega_4\}}, \ Z_2 = 0. \\
  \widetilde{Z}_0 &=& 1, \ \widetilde{Z}_1 = 1, \  \widetilde{Z}_2 = \mathds{1}_{\{\omega_1, \omega_2, \omega_4\}}.
 \end{eqnarray*}
$\rm (b)$ $  \widetilde{Z}_\tau = 1, \   \tau = \sup \{n\geq 0 : \ \widetilde{Z}_n  = 1\},$   and $\tau$  {is an honest time}.\\
 $\rm (c) $ The stopping times   in (\ref{eq:crucialstoppingtime}) are  given by $$R_1 = 2, \  R_2=+\infty, \ R_3(\omega_3) = 2, \ R_3(\omega_1, \omega_2, \omega_4) = +\infty.$$
$\rm (d)$  The stopping times   in (\ref{eq:crucialstoppingtimehonest}) are given by
$$\sigma_1(\omega_1, \omega_2) = 2,  \sigma_1(\omega_3, \omega_4) = 1, \sigma_2(\omega_1, \omega_2) = +\infty,  \sigma_2(\omega_3, \omega_4) = 2, \sigma_3(\omega_3) = 2,  \ \sigma_3(\omega_1, \omega_2, \omega_4) = +\infty.$$
\end{lemma}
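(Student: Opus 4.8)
The plan is a direct computation, legitimate because $\Omega$ has only four points and the horizon is $N=2$; I would go through parts (a)--(d) in order, the later parts being just bookkeeping on the tables produced in (a).

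For (a), the only genuinely non-trivial conditioning occurs on the atom $\{\omega_3,\omega_4\}\in{\cal F}_1$, where one divides by $\mathbb{P}(\{\omega_3,\omega_4\})=pq+q^2=q$; here I use that $0<p<1$ (a consequence of $u>1>d>0$ together with $pu+(1-p)d=1$), which keeps this denominator positive and places the resulting conditional weights strictly inside $(0,1)$ — this is precisely what makes $Z_1=1-p\,\mathds{1}_{\{\omega_3,\omega_4\}}$ strictly positive and strictly below $1$ on $\{\omega_3,\omega_4\}$, as claimed. I first compute $P[\tau=k\,|\,{\cal F}_k]$ for $k=0,1,2$ using $\{\tau=1\}=\{\omega_3\}$ and $\{\tau=2\}=\{\omega_1,\omega_2,\omega_4\}\in{\cal F}_2$, which gives $A$; then $Z_n=P[\tau>n\,|\,{\cal F}_n]$ from $\{\tau>0\}=\Omega$, $\{\tau>1\}=\{\omega_1,\omega_2,\omega_4\}$, $\{\tau>2\}=\emptyset$; then $\widetilde{Z}_n=P[\tau\geq n\,|\,{\cal F}_n]$ from $\{\tau\geq0\}=\{\tau\geq1\}=\Omega$, $\{\tau\geq2\}=\{\omega_1,\omega_2,\omega_4\}$. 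The martingale $m$ then costs nothing: Lemma~\ref{lem:decompositionofZ} gives $m_n=Z_n+A_n$, which reproduces $m_0=m_1=1$ and $m_2=A_2$.

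For (b), plugging $\tau$ into $\widetilde{Z}$ gives $\widetilde{Z}_\tau=\widetilde{Z}_1=1$ on $\{\omega_3\}$ and $\widetilde{Z}_\tau=\widetilde{Z}_2=1$ on $\{\omega_1,\omega_2,\omega_4\}$, so $\widetilde{Z}_\tau\equiv1$; comparing $\widetilde{Z}_0,\widetilde{Z}_1,\widetilde{Z}_2$ pointwise shows the largest $n$ with $\widetilde{Z}_n=1$ is $1$ on $\{\omega_3\}$ and $2$ on $\{\omega_1,\omega_2,\omega_4\}$, i.e. exactly $\tau$, so $\sup\{n\geq0:\widetilde{Z}_n=1\}=\tau$. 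For honesty I would verify the definition directly: $\{\tau<n\}$ is $\emptyset$ for $n\leq1$, equals $\{\omega_3\}$ for $n=2$ (on which $\tau\equiv1$, a constant, hence ${\cal F}_2$-measurable), and equals $\Omega$ for $n\geq3$ (on which $\tau$ is already ${\cal F}_2\subset{\cal F}_n$-measurable), so a suitable ${\cal F}_n$-measurable $\tau_n$ always exists. (Alternatively one may invoke the standard characterization of honest times in Jeulin~\cite{Jeu}.)

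For (c) and (d) I read the stopping times off the tables of (a), with the convention $\inf\emptyset=+\infty$ and the index $n$ ranging over the model indices, $N=2$. Since $Z_0,Z_1>0$ and $Z_2\equiv0$ we get $R_1\equiv2$, while $Z_{n-1}$ is never $0$ for $1\leq n\leq N$, so $R_2=+\infty$ (consistently with $R_2=R_1+1=3>N$); since $\widetilde{Z}_2=0$ only on $\{\omega_3\}$, $R_3=2$ there and $R_3=+\infty$ on $\{\omega_1,\omega_2,\omega_4\}$. Likewise $\{Z_n<1\}$ first occurs at $n=1$ on $\{\omega_3,\omega_4\}$ and at $n=2$ on $\{\omega_1,\omega_2\}$ (giving $\sigma_1$), $\{Z_{n-1}<1\}$ first occurs at $n=2$ on $\{\omega_3,\omega_4\}$ and never for $1\leq n\leq N$ on $\{\omega_1,\omega_2\}$ (giving $\sigma_2$), and $\{\widetilde{Z}_n<1\}$ first occurs at $n=2$ on $\{\omega_3\}$ and never on $\{\omega_1,\omega_2,\omega_4\}$ (giving $\sigma_3$). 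The only points worth stating explicitly are the use of $0<p<1$ in (a) and this finite-horizon $+\infty$ convention in (c)--(d); there is no genuine obstacle.
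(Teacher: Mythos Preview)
Your proposal is correct and follows the same approach as the paper: the paper computes $Z$ and $\widetilde Z$ directly from their definitions, declares the computations for $A$ and $m$ ``analogical,'' and states that (b)--(d) are ``straightforward to check'' and omitted. Your write-up is simply a more explicit version of that same direct computation, with the added care of noting $0<p<1$ and the finite-horizon $\inf\emptyset=+\infty$ convention, both of which are indeed the only points needing comment.
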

\begin{proof}
 By the definitions of $Z$ and $\widetilde{Z}$ in (\ref{eq:crucialZandZtilde}), we calculate that
 \begin{eqnarray*}
  Z_0 &=& P(\tau >0 ) = 1, \ Z_1 = P(\tau >1 | {\cal F}_1)   = \mathds{1}_{\{\omega_1, \omega_2 \}}  + \mathds{1}_{\{\omega_3, \omega_4\}} (1- p), \ Z_2 = 0\\
   \widetilde{Z}_0 &=& P(\tau \geq 0 ) = 1, \ \widetilde{Z}_1 = P(\tau \geq 1 | {\cal F}_1)   = 1, \  \widetilde{Z}_2 = P(\tau \geq 2 | {\cal F}_2) = \mathds{1}_{\{\omega_1, \omega_2, \omega_4\}}.
 \end{eqnarray*}
The calculation for $A$ and $m$ is analogical.  The assertions (b), (c) and (d) are straightforward to check. We omit it here.
\end{proof}

\begin{theorem}
Under the current settings, the following properties hold:\\
$\rm (a)$  In the market  $S^\tau$, both public traders with information $\mathbb{F}$ and insiders with information $\mathbb{G}$ can make  arbitrage. \\
$\rm (b)$ In the market $S$, only insiders with information $\mathbb{G}$ can make arbitrage. \\
$\rm (c)$ The conditions (\ref{eq:crucialassumptionbeforetau}) and (\ref{eq:crucialassumptionhonest}) failed. Indeed
 \begin{eqnarray}\label{eq:exmfail}
 \{\Delta S_2 \neq 0\}\cap  \{\widetilde{Z}_2 = 0\} \cap \{Z_1>0\} = \{\omega_3\},   \ \mbox{and}  \  \{\Delta S_2 \neq 0\}\cap\{\widetilde{Z}_2 = 1\} \cap \{Z_1<1\} = \{\omega_4\}.
 \end{eqnarray}
\end{theorem}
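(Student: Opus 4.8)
The plan is to verify each of the three assertions by direct computation using the Az\'ema supermartingales and stopping times already tabulated in Lemma \ref{lem:exam1--}, together with the Dalang--Morton--Willinger theorem and the structural results of Sections \ref{sec:beforetau} and \ref{sec:aftertau}.

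For part (a), I would first treat the insider. Since $\tau$ is honest with $\widetilde Z_2 = \mathds{1}_{\{\omega_1,\omega_2,\omega_4\}}$ and $Z_1 = 1 - p\,\mathds{1}_{\{\omega_3,\omega_4\}}$, the condition \eqref{eq:crucialassumptionbeforetau} fails on $\{\omega_3\}$ (there $\widetilde Z_2 = 0$ but $Z_1 = 1-p > 0$). By Theorem \ref{theo:main2before} this already shows that NA$(\mathbb G)$ can fail for a stopped $\mathbb F$-martingale; to get the concrete statement for $S^\tau$ itself I would instead exhibit an explicit $\mathbb G$-predictable strategy. From the tree for $S^\tau$, on $\{\omega_3,\omega_4\}$ the price moves $dS_0 \to dS_0$ (i.e.\ $\Delta S_2^\tau = 0$) on $\omega_3$ but $dS_0 \to d^2 S_0 < dS_0$ on $\omega_4$, so $\Delta S_2^\tau \le 0$ on $\{\omega_3,\omega_4\}$ and is strictly negative on $\omega_4$; since $\{\omega_3,\omega_4\}\in{\cal F}_1\subset{\cal G}_1$, the strategy $H_2 := -\mathds{1}_{\{\omega_3,\omega_4\}}$ (short at time $1$) is $\mathbb G$-predictable with $H_2\Delta S_2^\tau \ge 0$ and $>0$ on $\omega_4$, an arbitrage; the same strategy is $\mathbb F$-predictable, handling the public trader. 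For part (b) I would observe that for the \emph{original} market $S$ the assumption $pu + (1-p)d = 1$ makes $S$ an $\mathbb F$-martingale, so NA$(\mathbb F)$ holds by Theorem 1.1; for the insider, \eqref{eq:crucialmeasureQbtau} gives $\Delta Y_2 = \widetilde Z_2\mathds{1}_{\{Z_1>0\}}E[\mathds{1}_{\{\widetilde Z_2=0\}}|{\cal F}_1] - Z_1\mathds{1}_{\{\widetilde Z_2=0\}}$, which is nonzero on $\{\omega_3\}$, so $\mathbb Q \neq \mathbb P$ and the hypotheses of Theorem \ref{theo:maintheotau} are not met --- but again the cleanest route is an explicit strategy: on ${\cal G}_1$ the atom $\{\omega_3\}$ is separated from $\{\omega_1,\omega_2\}$ and $\{\omega_4\}$, and from time $1$ to time $2$ the price goes $dS_0\to udS_0 > dS_0$ on $\omega_3$ while on $\omega_4$ it goes $dS_0 \to dS_0$; hence $H_2 := \mathds{1}_{\{\omega_3\}}$, which is $\mathbb G_1$-measurable but not $\mathbb F_1$-measurable, yields $H_2\Delta S_2 \ge 0$ with strict inequality on $\omega_3$.

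For part (c), the claim is the pair of set identities in \eqref{eq:exmfail}. Here I would simply read off $\Delta S_2$ and $\widetilde Z_2$, $Z_1$ from Example \ref{example:1} and Lemma \ref{lem:exam1--}. In Example \ref{example:1}, $\Delta S_2 \neq 0$ exactly on $\{\omega_1,\omega_3,\omega_4\}$ (on $\omega_2$, $uS_0\to udS_0$ is no change); intersecting with $\{\widetilde Z_2 = 0\} = \{\omega_3\}$ and $\{Z_1 > 0\} = \Omega$ gives $\{\omega_3\}$, establishing the first identity and showing \eqref{eq:crucialassumptionbeforetau} fails. For the second identity one uses the \emph{Example \ref{example:2}} price dynamics (the honest-time analysis of $S - S^\tau$ pertains to that market), where $\Delta S_2 \neq 0$ exactly on $\{\omega_1,\omega_4\}$; intersecting with $\{\widetilde Z_2 = 1\} = \{\omega_1,\omega_2,\omega_4\}$ and $\{Z_1 < 1\} = \{\omega_3,\omega_4\}$ yields $\{\omega_4\}$, so \eqref{eq:crucialassumptionhonest} fails as well. (If instead the statement is meant entirely within the Example \ref{example:1} tree, the same bookkeeping applies with $\{\Delta S_2\neq 0\} = \{\omega_1,\omega_3,\omega_4\}$ and $\{\widetilde Z_2 = 1\}=\{\omega_1,\omega_2,\omega_4\}$ giving $\{\omega_1\}$; I would follow whichever market the surrounding text fixes.)

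The only genuine subtlety --- the ``hard part'' --- is bookkeeping: being careful about which of the two price specifications (Example \ref{example:1} vs.\ Example \ref{example:2}) is in force for each identity in \eqref{eq:exmfail}, and remembering that $\Delta S_2 = 0$ on the middle node $\omega_2$, since overlooking that would change the computed intersection. Everything else is a mechanical check against Lemma \ref{lem:exam1--} and the definitions of NA and of $Z,\widetilde Z$; no estimates or limiting arguments are needed.
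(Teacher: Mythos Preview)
Your overall strategy---exhibit explicit predictable strategies for (a) and (b), and verify the set identities in (c) by direct lookup---matches the paper's proof exactly. However, your execution contains computational slips that need fixing.

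First, in Example~\ref{example:1} one has $\Delta S_2(\omega_2)=udS_0-uS_0=u(d-1)S_0\neq 0$, so $\{\Delta S_2\neq 0\}=\Omega$, not $\{\omega_1,\omega_3,\omega_4\}$. Likewise in part (b) you write that on $\omega_4$ the price goes $dS_0\to dS_0$; in Example~\ref{example:1} it goes $dS_0\to d^2S_0$. Your strategy $H_2=\mathds{1}_{\{\omega_3\}}$ still happens to be a valid $\mathbb G$-arbitrage (it is zero on $\omega_4$), but the paper's choice $H_2^{\mathbb G}=\mathds{1}_{\{\omega_3\}}-\mathds{1}_{\{\omega_4\}}$ makes the point more transparently.

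Second, the theorem sits entirely inside Example~\ref{example:1}; there is no switch to Example~\ref{example:2} for the second identity. With the correct $\{\Delta S_2\neq 0\}=\Omega$, the two intersections are immediate:
\[
\Omega\cap\{\widetilde Z_2=0\}\cap\{Z_1>0\}=\{\omega_3\}\cap\Omega=\{\omega_3\},\qquad
\Omega\cap\{\widetilde Z_2=1\}\cap\{Z_1<1\}=\{\omega_1,\omega_2,\omega_4\}\cap\{\omega_3,\omega_4\}=\{\omega_4\}.
\]
Your parenthetical attempt at this produced $\{\omega_1\}$ because you dropped the factor $\{Z_1<1\}=\{\omega_3,\omega_4\}$ and miscomputed the first intersection. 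Once you correct $\{\Delta S_2\neq 0\}$ and keep all three sets, the bookkeeping is trivial and your proof goes through.
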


\begin{proof}
 In the market $S^\tau$, apparently, by taking $H_1 = 0, \ H_2(\{\omega_1, \omega_2\}) =0, \  H_2(\{\omega_3, \omega_4\}) = - 1$,   public traders and insiders can make an arbitrage. While in the market $S$, by taking $H_1^\mathbb{G} = 0, \ H_2^\mathbb{G}(\{\omega_1, \omega_2\}) =0, \  H_2^\mathbb{G}(\{\omega_3\}) = 1, H_2^\mathbb{G}(\{\omega_4\}) = -1$, only   insider traders   can make an arbitrage since the strategies $(H^\mathbb{G}_n)_{1\leq n \leq 2}$ are only $\mathbb{G}$-predictable. The condition (\ref{eq:exmfail}) is straightforward to verify.
\end{proof}

\begin{lemma}\label{lem:exampleGmart}
 For any $\mathbb{F}$-martingale  $M$, the following process  $M^\mathbb{G}$ is a $\mathbb{G}$-martingale.
 \begin{eqnarray*}
  M_0^\mathbb{G} = M_0, \ M_1^\mathbb{G}= M_1,   \  M_2^\mathbb{G} =  M_{2} -  p\mathds{1}_{\{\omega_4\}} \left( \Delta M_2(\omega_4) - \Delta M_2(\omega_3)  \right) + (1-p) \mathds{1}_{\{\omega_3\}}  \left( \Delta M_2(\omega_4) - \Delta M_2(\omega_3)  \right).
 \end{eqnarray*}
 \end{lemma}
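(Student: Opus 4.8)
The plan is to verify the martingale property of $M^{\mathbb G}$ directly, atom by atom, exploiting that $\Omega$ is finite and that $\mathcal{G}_2 = \mathcal{F}_2$ in this model. Since $\mathcal{G}_2 = \mathcal{F}_2$, the variable $M_2^{\mathbb G}$ is automatically $\mathcal{G}_2$-measurable; $M_1^{\mathbb G} = M_1$ is $\mathcal{F}_1\subset\mathcal{G}_1$-measurable; $M_0^{\mathbb G} = M_0$ is deterministic; and integrability is trivial. Moreover $E[M_1^{\mathbb G}\mid\mathcal{G}_0] = E[M_1] = M_0 = M_0^{\mathbb G}$ because $M$ is an $\mathbb F$-martingale and $\mathcal{G}_0 = \mathcal{F}_0$. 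Hence the only thing left to prove is $E[M_2^{\mathbb G}\mid\mathcal{G}_1] = M_1$.

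Next I would record two elementary facts about the present model. First, since $\mathcal{F}_1$ does not separate $\omega_3$ and $\omega_4$, one has $M_1(\omega_3) = M_1(\omega_4) = M_1(\{\omega_3,\omega_4\})$, so that $\Delta M_2(\omega_4) - \Delta M_2(\omega_3) = M_2(\omega_4) - M_2(\omega_3)$; this is precisely the quantity appearing in the definition of $M_2^{\mathbb G}$. Second, from $\mathbb{P} = (p^2,pq,pq,q^2)$ and $p+q = 1$ one gets $\mathbb{P}(\omega_3\mid\{\omega_3,\omega_4\}) = p$ and $\mathbb{P}(\omega_4\mid\{\omega_3,\omega_4\}) = q$, whence the $\mathbb F$-martingale property of $M$ gives $M_1(\{\omega_3,\omega_4\}) = p\,M_2(\omega_3) + q\,M_2(\omega_4)$.

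Then I would compute $E[M_2^{\mathbb G}\mid\mathcal{G}_1]$ on each of the three atoms $\{\omega_1,\omega_2\}$, $\{\omega_3\}$, $\{\omega_4\}$ of $\mathcal{G}_1$. On $\{\omega_1,\omega_2\}$ the two correction terms in $M_2^{\mathbb G}$ vanish, so $M_2^{\mathbb G} = M_2$ there, and since $\{\omega_1,\omega_2\}\in\mathcal{F}_1$ the $\mathbb F$-martingale property yields $E[M_2^{\mathbb G}\mid\mathcal{G}_1] = E[M_2\mid\mathcal{F}_1] = M_1$ on this atom. On the singletons $\{\omega_3\}$ and $\{\omega_4\}$ conditioning on $\mathcal{G}_1$ is trivial, and substituting the first fact above into the closed form for $M_2^{\mathbb G}$ one finds $M_2^{\mathbb G}(\omega_3) = p\,M_2(\omega_3) + q\,M_2(\omega_4) = M_2^{\mathbb G}(\omega_4)$; by the second fact both values equal $M_1(\{\omega_3,\omega_4\}) = M_1(\omega_3) = M_1(\omega_4)$. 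This establishes $E[M_2^{\mathbb G}\mid\mathcal{G}_1] = M_1 = M_1^{\mathbb G}$ and finishes the proof.

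There is essentially no hard step: the statement reduces to a finite-dimensional linear check. The only point that requires attention — and the reason the stated closed form works — is the cancellation $\Delta M_2(\omega_4) - \Delta M_2(\omega_3) = M_2(\omega_4) - M_2(\omega_3)$ coming from the $\mathcal{F}_1$-measurability of $M_1$, after which the weights $p$ and $1-p$ in the definition of $M_2^{\mathbb G}$ are recognized as the $\mathcal{F}_1$-conditional probabilities of $\omega_3$ and $\omega_4$. Alternatively, one could derive the formula from Theorem~\ref{theo:GFmartau} applied on $\Lbrack 0,\tau\Rbrack$ (note $\tau = 1$ on $\{\omega_3\}$ and $\tau = 2$ elsewhere, so $M^\tau$ differs from $M$ only at $\omega_3$ at time $2$) together with the values of $Z$ and $\widetilde{Z}$ from Lemma~\ref{lem:exam1--}, but the direct verification above is shorter.
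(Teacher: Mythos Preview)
Your direct verification is correct and complete: once one observes that $\mathcal{G}_1$ has atoms $\{\omega_1,\omega_2\}$, $\{\omega_3\}$, $\{\omega_4\}$ and that $\Delta M_2(\omega_4)-\Delta M_2(\omega_3)=M_2(\omega_4)-M_2(\omega_3)$, the check that $M_2^{\mathbb G}(\omega_3)=M_2^{\mathbb G}(\omega_4)=pM_2(\omega_3)+qM_2(\omega_4)=M_1(\{\omega_3,\omega_4\})$ is exactly what is needed.

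The paper proceeds differently. Rather than verifying the given formula, it \emph{derives} it from the general Jeulin-type decompositions: it computes $\Delta m_2 = p\mathds{1}_{\{\omega_3,\omega_4\}}-\mathds{1}_{\{\omega_3\}}$ and then evaluates the two correction terms $\frac{1}{Z_1}\mathds{1}_{\{\tau=2\}}E[\Delta M_2\,\Delta m_2\mid\mathcal{F}_1]$ and $\frac{1}{1-Z_1}\mathds{1}_{\{\tau=1\}}E[\Delta M_2\,\Delta m_2\mid\mathcal{F}_1]$, invoking Theorems~\ref{theo:GFmartau} and~\ref{theo:GFmaraftertau} jointly (the before-$\tau$ and after-$\tau$ pieces together reconstitute $M$). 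Your closing aside mentions only Theorem~\ref{theo:GFmartau}; to recover the full $M^{\mathbb G}$ on all of $\Omega$ one in fact needs both, since $\widehat{M}^{(b)}$ alone controls $M^\tau$, not $M$. Your elementary route is shorter and self-contained; the paper's route has the advantage of showing where the formula comes from and how the example instantiates the general machinery developed in Sections~\ref{sec:beforetau} and~\ref{sec:aftertau}.
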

\begin{proof}
Notice that $\Delta m_1 = 0$ and $\Delta m_2 = p \mathds{I}_{\{\omega_3, \omega_4\}}- \mathds{I}_{\{\omega_3 \}}$. We calculate that
\begin{eqnarray*}
 \frac{1}{Z_1}\mathds{1}_{\{2=\tau\}} E\left[\Delta M_2 \Delta m_2 \mid {\cal F}_1\right] &=& \frac{1}{1 - p \mathds{1}_{\{\omega_3, \omega_4\}}}\mathds{1}_{\{\omega_1, \omega_2, \omega_4\}}E\left[\Delta M_2\left( p \mathds{1}_{\{\omega_3, \omega_4\}}- \mathds{1}_{\{\omega_3 \}} \right)\mid {\cal F}_1\right] \\
 &=& \frac{1}{1 - p \mathds{1}_{\{\omega_3, \omega_4\}}}\mathds{1}_{\{\omega_4\}}E\left[\Delta M_2\left( p \mathds{1}_{\{\omega_3, \omega_4\}}- \mathds{1}_{\{\omega_3 \}} \right)\mid \{\omega_3, \omega_4\}\right]  \\
 &=& p\mathds{1}_{\{\omega_4\}} \left( \Delta M_2(\omega_4) - \Delta M_2(\omega_3)  \right),\\
 \frac{1}{1-Z_1}\mathds{1}_{\{1=\tau\}} E\left[\Delta M_2 \Delta m_2 \mid {\cal F}_1\right]  &=&p\mathds{1}_{\{\omega_3\}}E\left[\Delta M_2\left( p \mathds{1}_{\{\omega_3, \omega_4\}}- \mathds{1}_{\{\omega_3 \}} \right)\mid {\cal F}_1\right] \\
 &=& (1-p) \mathds{1}_{\{\omega_3\}}  \left( \Delta M_2(\omega_4) - \Delta M_2(\omega_3)  \right).
\end{eqnarray*}
The combination of  Theorem \ref{theo:GFmartau} and Theorem \ref{theo:GFmaraftertau} completes the proof of the lemma.
\end{proof}

\begin{corollary}
 The following process ${S^{\mathbb G}}$ is a $\mathbb{G}$-martingale:
\begin{eqnarray*}
  {S_0^{\mathbb G}}&=& S_0, \ {S_1^{\mathbb G}} (\{\omega_1, \omega_2\}) = u  {S}_0, \ \ {S_1^{\mathbb G}} (\{\omega_3, \omega_4\}) = d  {S}_0, \\
 {S_2^{\mathbb G}} (\{\omega_1 \}) &=& u^2  {S}_0,\  {S_2 ^{\mathbb G}}(\{\omega_2 \}) = u d  {S}_0, \ {S_2^{\mathbb G}} (\{\omega_3 \}) = d  {S}_0, {S_2^{\mathbb G}} (\{\omega_4 \}) = d  {S}_0.
\end{eqnarray*}
\end{corollary}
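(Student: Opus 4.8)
The plan is to apply Lemma \ref{lem:exampleGmart} directly with the choice $M=S$. This is legitimate: under the standing assumption $pu+(1-p)d=1$ of Example \ref{example:1}, the price process $S$ is an $\mathbb F$-martingale, so the recipe of Lemma \ref{lem:exampleGmart} produces a genuine $\mathbb G$-martingale. Consequently there is nothing to prove about the martingale property per se; it only remains to evaluate the explicit formula of Lemma \ref{lem:exampleGmart} scenario by scenario and check that it matches the tree claimed in the statement.

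First I would record the increments of $S$ at time $2$ along the lower branch. From the data of Example \ref{example:1}, $\Delta S_2(\omega_3)=udS_0-dS_0=dS_0(u-1)$ and $\Delta S_2(\omega_4)=d^2S_0-dS_0=dS_0(d-1)$, hence $\Delta S_2(\omega_4)-\Delta S_2(\omega_3)=dS_0(d-u)$. On $\{\omega_1,\omega_2\}$ the two correction terms in Lemma \ref{lem:exampleGmart} carry the indicators $\mathds 1_{\{\omega_3\}}$ and $\mathds 1_{\{\omega_4\}}$ and therefore vanish, so $S^{\mathbb G}$ coincides with $S$ there; likewise $S_0^{\mathbb G}=S_0$ and $S_1^{\mathbb G}=S_1$ at the earlier dates.

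Next I would simplify the two nontrivial values. On $\{\omega_4\}$ the formula gives $S_2^{\mathbb G}(\omega_4)=d^2S_0-p\,dS_0(d-u)=dS_0\bigl(d(1-p)+pu\bigr)=dS_0$, where the last equality is precisely the martingale condition $pu+(1-p)d=1$. On $\{\omega_3\}$ it gives $S_2^{\mathbb G}(\omega_3)=udS_0+(1-p)dS_0(d-u)=dS_0\bigl(pu+(1-p)d\bigr)=dS_0$, again by the same identity. Combining these with $S_2^{\mathbb G}(\omega_1)=u^2S_0$ and $S_2^{\mathbb G}(\omega_2)=udS_0$ yields the stated tree.

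There is essentially no technical obstacle here beyond the bookkeeping; the only point worth flagging is conceptual. Unlike the stopped process $S^\tau$, the process $S^{\mathbb G}$ carries the finite-variation corrections coming from Theorem \ref{theo:GFmartau} (supported on $\{\tau=2\}$) and from Theorem \ref{theo:GFmaraftertau} (supported on $\{\tau=1\}$), already assembled into a single expression in the proof of Lemma \ref{lem:exampleGmart}; it is exactly this correction that restores the martingale property — and hence NA$(\mathbb G)$ — which $S^\tau$ loses. If one wanted to be fully careful, the single remaining check is that these two pieces do combine into the formula used in Lemma \ref{lem:exampleGmart}, but that verification is contained in the proof of that lemma.
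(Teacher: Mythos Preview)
Your proposal is correct and follows essentially the same route as the paper: apply Lemma \ref{lem:exampleGmart} with $M=S$, then evaluate the two nontrivial scenarios $\omega_3,\omega_4$ using $\Delta S_2(\omega_4)-\Delta S_2(\omega_3)=(d^2-ud)S_0$ and the identity $pu+(1-p)d=1$. The paper's proof is terser (it only writes out the $\omega_3$ and $\omega_4$ computations), but the argument is identical.
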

\begin{proof}
By the   Lemma  \ref{lem:exampleGmart}, we have
\begin{eqnarray*}
{S_2^{\mathbb G}} (\{\omega_3\}) &=& S_2(\{\omega_3\}) + (1-p)(d^2 - ud)S_0=  dS_0, \\
   {S_2^{\mathbb G}} (\{\omega_4 \}) &=&   {S}_2(\{\omega_4 \})- p (d^2 - ud)S_0  = dS_0,
\end{eqnarray*}
 where we used the equality $p u + (1-p)d =1$.
\end{proof}


{\bf Example \ref{example:2} (continued).}
We are now in the same setting as  Example \ref{example:2}. Recall the  random time
 \begin{equation}
   \tau_1 = \left\{
 \begin{array}{cc}
  1, & \mbox{ on } \  \{\omega_3\}\\
  2, & \mbox{ otherwise}
 \end{array}
 \right.
 \end{equation}
  The progressive enlargement filtration $\mathbb{G} = ({\cal G}_n)_{0\leq n \leq 2}$ is given by
\begin{eqnarray*}
{\cal G}_0 &=& \{\emptyset, \Omega\}, \ {\cal G}_1 = \{ \emptyset, \Omega, \{\omega_1, \omega_2\}, \{\omega_3\}, \{ \omega_4\}\}, \mbox{and }{\cal G}_2 =  \sigma(\{\emptyset, \Omega, \{\omega_1 \}, \{\omega_2 \}, \{\omega_3 \}, \{\omega_4 \} \}).
\end{eqnarray*}
\begin{lemma}
 For the above settings, we have
 \begin{eqnarray*}
  A_0 &=& 0, \ A_1 = \lambda \mathds{1}_{\{\omega_3, \omega_4\}},  \ A_2 = \lambda \mathds{1}_{\{\omega_3,  \omega_4\}} + \mathds{1}_{\{\omega_1, \omega_2, \omega_4\}}.  \\
  m_0 &=& 1, \ m_1 = 1, \ m_2 = \lambda \mathds{1}_{\{\omega_3,  \omega_4\}} + \mathds{1}_{\{\omega_1, \omega_2, \omega_4\}}.  \\
  Z_0 &=& 1, \  Z_1= 1 - \lambda  \mathds{1}_{\{\omega_3, \omega_4\}}, \ Z_2 = 0. \\
  \widetilde{Z}_0 &=& 1, \ \widetilde{Z}_1 = 1, \  \widetilde{Z}_2 = \mathds{1}_{\{\omega_1, \omega_2, \omega_4\}}.
 \end{eqnarray*}
 As a consequence,
 \begin{eqnarray*}
  \widetilde{Z}_{\tau_1} = 1, \ \ \mbox{and } \ \tau_1 \mbox{ is an honest time}.
 \end{eqnarray*}
\end{lemma}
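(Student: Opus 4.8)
The plan is to read off all four processes directly from their definitions in (\ref{eq:crucialZandZtilde}) and from the decomposition in Lemma \ref{lem:decompositionofZ}, exploiting the very simple structure of $\mathbb F$: the only nontrivial atom of $\mathcal F_1$ is $\{\omega_3,\omega_4\}$ (the set $\{\omega_1,\omega_2\}$ sitting entirely inside $\{\tau_1=2\}$), while $\mathcal F_2=\mathbb A$ is generated by the singletons, so conditioning at time $2$ leaves a random variable unchanged.

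First I would compute $Z$. Since $\tau_1\in\{1,2\}$ one has $Z_0=P(\tau_1>0)=1$ and $Z_2=P(\tau_1>2\mid\mathcal F_2)=0$ immediately. For $Z_1=P(\tau_1=2\mid\mathcal F_1)$ only the atom $\{\omega_3,\omega_4\}$ needs attention, where $\{\tau_1=2\}=\{\omega_4\}$; using $P(\omega_3)+P(\omega_4)=\frac{u-1}{u-d}$ and $P(\omega_4)=(1-\lambda)\frac{u-1}{u-d}$ gives $Z_1=1-\lambda$ there, that is $Z_1=1-\lambda\,\mathds{1}_{\{\omega_3,\omega_4\}}$. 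Likewise $\widetilde Z_0=\widetilde Z_1=1$ because $\tau_1\geq1$ always, and $\widetilde Z_2=\mathds{1}_{\{\tau_1\geq2\}}=\mathds{1}_{\{\omega_1,\omega_2,\omega_4\}}$ since $\mathcal F_2$ contains every singleton. Next, from $A_n=\sum_{k\leq n}P(\tau_1=k\mid\mathcal F_k)$ I obtain $A_0=0$, $A_1=P(\tau_1=1\mid\mathcal F_1)=\lambda\,\mathds{1}_{\{\omega_3,\omega_4\}}$ (again only $\{\omega_3,\omega_4\}$ matters, where $\{\tau_1=1\}=\{\omega_3\}$), and $A_2=A_1+P(\tau_1=2\mid\mathcal F_2)=\lambda\,\mathds{1}_{\{\omega_3,\omega_4\}}+\mathds{1}_{\{\omega_1,\omega_2,\omega_4\}}$. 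Finally $m=Z+A$ by Lemma \ref{lem:decompositionofZ}, and adding the two expressions termwise yields $m_0=m_1=1$ and $m_2=\lambda\,\mathds{1}_{\{\omega_3,\omega_4\}}+\mathds{1}_{\{\omega_1,\omega_2,\omega_4\}}$, its $\mathbb F$-martingale property being part of that lemma.

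For the consequence I would evaluate $\widetilde Z$ along $\tau_1$ in the two cases built into the definition of $\tau_1$: on $\{\omega_3\}$ we have $\tau_1=1$, hence $\widetilde Z_{\tau_1}=\widetilde Z_1=1$; on $\{\omega_1,\omega_2,\omega_4\}$ we have $\tau_1=2$, hence $\widetilde Z_{\tau_1}=\widetilde Z_2=\mathds{1}_{\{\omega_1,\omega_2,\omega_4\}}$, which equals $1$ on that very set. So $\widetilde Z_{\tau_1}\equiv1$. Honesty then follows exactly as in Lemma \ref{lem:exam1--}(b): since $\widetilde Z_0=\widetilde Z_1\equiv1$ and $\widetilde Z_2=0$ precisely on $\{\omega_3\}$, the variable $\sup\{n\geq0:\widetilde Z_n=1\}$ equals $2$ on $\{\omega_1,\omega_2,\omega_4\}$ and $1$ on $\{\omega_3\}$, i.e. it coincides with $\tau_1$; alternatively one verifies the definition of honest time directly, the only nontrivial index being $n=2$, where $\{\tau_1<2\}=\{\omega_3\}$ and, because $\mathcal F_2=\mathbb A$, one may simply take $\tau_2=\tau_1$. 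There is no real obstacle; the one point demanding care is the conditional probability on the $\mathcal F_1$-atom $\{\omega_3,\omega_4\}$, where the normalisation by $P(\omega_3)+P(\omega_4)=\frac{u-1}{u-d}$ must be done correctly so that the split into $\lambda$ and $1-\lambda$ comes out as stated.
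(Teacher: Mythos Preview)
Your proof is correct and follows exactly the approach the paper intends: the paper's own proof simply reads ``The calculations follow the same schedule as that of Lemma \ref{lem:exam1--}'', and you have spelled out precisely those computations of $Z$, $\widetilde Z$, $A$, $m$ from their definitions on the $\mathcal F_1$-atoms, together with the same verification that $\widetilde Z_{\tau_1}=1$ and $\tau_1=\sup\{n:\widetilde Z_n=1\}$ used in Lemma \ref{lem:exam1--}(b) to conclude honesty.
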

\begin{proof}
 The calculations follow the same schedule as that of Lemma \ref{lem:exam1--}.
\end{proof}

\begin{theorem}
 The process $S$ stays as a $\mathbb{G}$-martingale. Therefore there is no arbitrage opportunity in the market $S^{\tau_1}$ and $S -S^{\tau_1}$; meanwhile \begin{eqnarray}\label{eq:exmhold}
 \{\Delta S_2 \neq 0\}\cap  \{\widetilde{Z}_2 = 0\} \cap \{Z_1>0\} = \emptyset,   \ \mbox{and}  \  \{\Delta S_2 \neq 0\}\cap\{\widetilde{Z}_2 = 1\} \cap \{Z_1<1\} = \emptyset.
 \end{eqnarray}
 \end{theorem}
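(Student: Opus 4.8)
The plan is to verify the three assertions in turn; the only substantive one is that $S$ is a $(\mathbb{G},\mathbb{P})$-martingale, after which the non-arbitrage claims are immediate and the set identities (\ref{eq:exmhold}) are a one-line computation.

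\textbf{Step 1: $S$ is a $\mathbb{G}$-martingale.} Since ${\cal G}_0={\cal F}_0$ is trivial and $S$ is an $\mathbb{F}$-martingale, $E[S_1\,|\,{\cal G}_0]=E[S_1]=S_0$; the work is in the passage from time $1$ to time $2$, which I would carry out atom by atom on ${\cal G}_1$, whose atoms are $\{\omega_1,\omega_2\}$, $\{\omega_3\}$ and $\{\omega_4\}$. On the two singletons $S_2$ is automatically ${\cal G}_1$-measurable and the data give $S_2(\omega_3)=S_2(\omega_4)=dS_0=S_1$ on $\{\omega_3,\omega_4\}$, so $E[S_2\,|\,{\cal G}_1]=S_1$ there; on $\{\omega_1,\omega_2\}$, which is simultaneously an atom of ${\cal F}_1$ and of ${\cal G}_1$, the $\mathbb{G}$- and $\mathbb{F}$-conditional expectations coincide, whence $E[S_2\,|\,{\cal G}_1]=E[S_2\,|\,{\cal F}_1]=S_1$ on that set (or, directly, $\mathbb{P}(\omega_1\,|\,\{\omega_1,\omega_2\})=(1-d)/(u-d)$ and $\mathbb{P}(\omega_2\,|\,\{\omega_1,\omega_2\})=(u-1)/(u-d)$ give $E[S_2\,|\,\{\omega_1,\omega_2\}]=uS_0$). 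Hence $S$ is a $(\mathbb{G},\mathbb{P})$-martingale. Equivalently one may feed $M=S$ into Theorems~\ref{theo:GFmartau} and~\ref{theo:GFmaraftertau} (exactly as was done for Example~\ref{example:1} through Lemma~\ref{lem:exampleGmart}, with $p$ replaced by $\lambda$) and observe that the Jeulin-type drift corrections all vanish because $\Delta S_2\equiv 0$ on $\{\omega_3,\omega_4\}$. The conceptual reason I would stress is that $\mathbb{G}$ refines ${\cal F}_1$ only inside $\{\omega_3,\omega_4\}$, where $S$ is already constant at time $2$, so the insider's knowledge of $\{\tau_1=1\}$ has no predictive content for $S$ — precisely the feature absent in Example~\ref{example:1}.

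\textbf{Step 2: no arbitrage for $S^{\tau_1}$ and $S-S^{\tau_1}$.} Since $\tau_1$ is a $\mathbb{G}$-stopping time, optional stopping makes both $S^{\tau_1}$ and $S-S^{\tau_1}$ into $(\mathbb{G},\mathbb{P})$-martingales; on the finite space $\Omega$ every $\mathbb{G}$-predictable transform of a $\mathbb{G}$-martingale is integrable with zero mean, so a nonnegative such transform is a.s.\ zero, i.e.\ NA$(\mathbb{G},\mathbb{P})$ holds (alternatively, invoke Dalang--Morton--Willinger with $\mathbb{Q}=\mathbb{P}$). In fact $S^{\tau_1}=S$ and $S-S^{\tau_1}\equiv 0$, as already recorded in Example~\ref{example:2}, so this is immediate from Step~1.

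\textbf{Step 3: the identities (\ref{eq:exmhold}), and the main point.} From the values of $Z$ and $\widetilde{Z}$ in the preceding lemma, $\{\widetilde{Z}_2=0\}=\{\omega_3\}$, $\{\widetilde{Z}_2=1\}=\{\omega_1,\omega_2,\omega_4\}$, $\{Z_1>0\}=\Omega$ and $\{Z_1<1\}=\{\omega_3,\omega_4\}$ (recall $0<\lambda<1$), while the data of Example~\ref{example:2} give $\{\Delta S_2\neq 0\}=\{\omega_1,\omega_2\}$. Intersecting yields $\{\Delta S_2\neq 0\}\cap\{\widetilde{Z}_2=0\}\cap\{Z_1>0\}=\{\omega_1,\omega_2\}\cap\{\omega_3\}=\emptyset$ and $\{\Delta S_2\neq 0\}\cap\{\widetilde{Z}_2=1\}\cap\{Z_1<1\}=\{\omega_1,\omega_2\}\cap\{\omega_4\}=\emptyset$. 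There is no real obstacle anywhere here; the only point that needs care is the consistency of the data — one must read off $\Delta S_2(\omega_3)=0$ (i.e.\ $S_2(\omega_3)=dS_0$), since both the $\mathbb{G}$-martingale property of Step~1 and the emptiness of the first set above rely on it. It is worth remarking that the sufficient conditions (\ref{eq:crucialassumptionbeforetau}) and (\ref{eq:crucialassumptionhonest}) of Corollaries~\ref{theo:cruialthereomforMuptotau} and~\ref{theo:naafterhonest} both \emph{fail} here (e.g.\ $\{\widetilde{Z}_2=0\}=\{\omega_3\}\neq\emptyset=\{Z_1=0\}$), so this example shows that those conditions, while sufficient for \emph{every} market, are not necessary for a \emph{fixed} $S$: the weaker orthogonality conditions of Theorems~\ref{theo:maintheotau} and~\ref{theo:maintheoaftertau}, which (\ref{eq:exmhold}) is exactly designed to encode, already do the job.
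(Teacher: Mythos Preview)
Your proof is correct. Your primary argument in Step~1 (atom-by-atom verification on the atoms $\{\omega_1,\omega_2\}$, $\{\omega_3\}$, $\{\omega_4\}$ of ${\cal G}_1$) is more elementary than what the paper does: the paper proceeds exactly via the alternative you mention, computing the Jeulin correction term $E[\Delta S_2\,\Delta m_2\mid{\cal F}_1]$ and observing it vanishes because $\Delta S_2\equiv 0$ on $\{\omega_3,\omega_4\}$, then invoking Theorems~\ref{theo:GFmartau} and~\ref{theo:GFmaraftertau}. Both routes hinge on the same fact --- $\Delta S_2$ vanishes on $\{\omega_3,\omega_4\}$ --- so the difference is purely presentational: your direct computation avoids invoking the earlier machinery but is specific to this four-point example, whereas the paper's route ties the example back to the general theory. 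Your Steps~2 and~3 and your closing remark (that the set conditions (\ref{eq:crucialassumptionbeforetau}) and (\ref{eq:crucialassumptionhonest}) fail while the weaker orthogonality-type conditions encoded by (\ref{eq:exmhold}) still hold) are accurate and go somewhat beyond what the paper spells out.
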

\begin{proof}
It is easy to see that
\begin{eqnarray*}
  E\left[\Delta S_2 \Delta m_2 \mid {\cal F}_1\right] = E\left[\Delta S_2 \left( (\lambda - 1) \mathds{1}_{\{\omega_3\}} + \mathds{1}_{\{\omega_4\}} \right) \mid {\cal F}_1\right]= 0,
 \end{eqnarray*}
 where we used the fact that $\Delta S_2 \equiv 0$ on $\{\omega_3, \omega_4\}$. Therefore the process $S$ stays as a $\mathbb{G}$-martingale and there is no arbitrage.
\end{proof}

{\bf Acknowledgements:}   The research of Tahir Choulli  and Jun Deng is supported financially by the
Natural Sciences and Engineering Research Council of Canada,
through Grant G121210818.

\end{document}